\def\@author#1{\g@addto@macro\elsauthors{\normalsize%
    \def\baselinestretch{1}%
    \upshape\authorsep#1\unskip\textsuperscript{%
      \ifx\@fnmark\@empty\else\unskip\sep\@fnmark\let\sep=,\fi
      \ifx\@corref\@empty\else\unskip\sep\@corref\let\sep=,\fi
      }%
    \def\authorsep{\unskip,\space}%
    \global\let\@fnmark\@empty
    \global\let\@corref\@empty  
    \global\let\sep\@empty}%
    \@eadauthor={#1}
}
\newtheorem{theorem}{Theorem}[section]
\newtheorem{lemma}[theorem]{Lemma}
\newtheorem{corollary}[theorem]{Corollary}
\newtheorem{proposition}[theorem]{Proposition}
\theoremstyle{definition}
\newtheorem{definition}[theorem]{Definition}
\theoremstyle{remark}
\newtheorem{remark}[theorem]{Remark}
\newtheorem{example}[theorem]{Example}
\newcommand{\C}{\mathbb{C}}
\newcommand{\N}{\mathbb{N}}
\newcommand{\R}{\mathbb{R}}
\newcommand{\Z}{\mathbb{Z}}
\newcommand{\zak}{\mathrm{Z}}
\newcommand{\gabor}{\mathcal G(g,\alpha,\beta)}
\newcommand{\esssup}{\mathop{{\rm ess}\,{\rm sup}}\limits}
\newcommand{\essinf}{\mathop{{\rm ess}\,{\rm inf}}\limits}
\newcommand{\abs}[1]{\left|#1\right|}
\newcommand{\norm}[1]{\left\|#1\right\|}
\newcommand{\zweinorm}[1]{\left\|#1\right\|_2}
\newcommand{\id}{\mathrm{id}}
\newcommand{\ds}{\displaystyle}
\begin{document}

\begin{frontmatter}
\title
{Zak transforms and Gabor frames 
 of totally positive functions and exponential B-splines\tnoteref{t1}}

\tnotetext[t1]{AMS subject classification 2000: 42C15,41A15,42C40}

\author{Tobias Kloos\corref{cor1}}
\ead{tobias.kloos@math.tu-dortmund.de}
\address{Faculty of Mathematics, TU Dortmund, D-44227 Dortmund}
\cortext[cor1]{Corresponding author}

\author{Joachim St\"ockler}
\ead{stoeckler@math.tu-dortmund.de}
\address{Faculty of Mathematics, TU Dortmund, D-44227 Dortmund}

\begin{abstract}
We study totally positive (TP) functions of finite type 
and exponential B-splines 
as window functions for
Gabor frames. We establish the connection
of the Zak transform of these two classes of 
functions and prove that the Zak transforms have only one 
zero in their
fundamental domain of quasi-periodicity.
Our proof is based on the
variation-diminishing property of shifts of exponential B-splines.
For the exponential B-spline
$B_m$ of order $m$, we determine a large set of 
lattice parameters $\alpha,\beta>0$ such that 
the Gabor family 
$\mathcal{G}(B_m,\alpha,\beta)$ of time-frequency shifts
$e^{2\pi i l\beta} B_m(\cdot-k\alpha)$,
$k,l\in\Z$, is a frame for $L^2(\R)$. 
By the connection of its Zak transform
to the Zak transform of TP functions of finite type, our  result 
provides an alternative proof 
that
TP functions of finite type provide Gabor frames for all lattice parameters
with $\alpha\beta<1$. For even two-sided exponentials 
$g(x)=\tfrac{\lambda}{2}e^{-\lambda|x|}$ and the related exponential B-spline
of order $2$, we find lower frame-bounds $A$, which show the
asymptotically linear decay  $A\sim (1-\alpha\beta)$ as the density 
$\alpha \beta$ of the time-frequency 
lattice  tends to the critical density $\alpha\beta=1$. 
\end{abstract}

\begin{keyword}
Gabor frame \sep total positivity \sep exponential B-spline \sep Zak transform
\end{keyword}

\end{frontmatter}


\section*{Introduction}

The Gabor transform provides an important tool for the ana\-lysis of 
a given signal $f:\R\to\C$ in time and frequency. It is a fundamental tool 
for time-frequency analysis and serves various purposes, such as signal denoising or
compression. 
A window function $g\in L^2(\R)$ has
 time-frequency shifts
$$
    M_\xi T_y g(x)= e^{2\pi i \xi x} g(x-y),\qquad \xi,y\in\R.
$$
The Gabor transform of a square-integrable signal $f$ is defined as 
$$
   \mathcal{S}_g f(k\alpha,l\beta)
	= \langle f,M_{l\beta} T_{k\alpha} g\rangle,\qquad k,l\in\Z,
$$
where the parameters 
$(k\alpha,l\beta)\in\alpha\Z\times \beta\Z$ of the time-frequency shifts of
$g$ form a lattice in $\R^2$, with 
lattice parameters $\alpha,\beta>0$. The family
$$\mathcal G(g,\alpha,\beta) := \{M_{l\beta} T_{k\alpha}\,g \mid k,l\in\Z \}$$
is called a Gabor family.
 An important problem in Gabor analysis is to 
determine, for a given window function $g\in L^2(\R)$ and lattice 
parameters $\alpha,\beta>0$, if the Gabor family $\gabor$ is a frame for $L^2(\R)$.
This means that there
exist constants $A,B>0$, which depend on $g,\alpha,\beta$, such that for 
every $f\in L^2(\R)$ we have
\begin{equation}\label{eq:framecond}
   A\|f\|^2 \le \sum_{k,l\in\Z} |\langle f, M_{l\beta} T_{k\alpha} g\rangle|^2
	\le B\|f\|^2.
\end{equation}
Moreover, it is interesting for the selection of sampling and modulation rates
in signal analysis to
 know the whole \textit{frameset} 
$$\mathcal F_g := \{ (\alpha,\beta) \in \R^2_+ 
\mid \mathcal G(g,\alpha,\beta) \textrm{ is a frame} \}$$
of a given function $g\in L^2(\R)$.
Until 2012, the framesets of only a few functions, like the one- and two-sided exponential, the Gaussian and the hyperbolic secant, 
were known. Then it was proved in \cite[Theorem 1]{GroeStoe:2012} 
that  the frameset of a large class of functions,
the \textit{totally positive functions of finite type} 
$m\geq 2$, is given by
$$\mathcal F_g = \mathbb H := \{(\alpha,\beta) \in \R^2_+ \mid0<\alpha\beta<1\}.
$$
This result complements the known fact, that for continuous window functions of the \textit{Wiener space}, 
\begin{equation*}
W(\R):= \{ g\in L^{\infty}(\R) \mid \norm{g}_W=\sum_{n\in\Z} \esssup_{x\in[0,1]} \abs{g(x+n)} < \infty \}, 	
\end{equation*}
the frameset is a subset of $\mathbb H$.


\vspace{0.2cm}
In order to describe Gabor families of a window function $g$, 
a very helpful tool is the Zak transform
\begin{equation*}
\mathrm Z_{\alpha}g(x,\omega) := \sum_{k\in\Z} g(x-k\alpha) e^{2\pi ik\alpha\omega},
\qquad (x,\omega)\in\R^2.
\end{equation*}
In Approximation Theory, the Zak transform  was 
used by Schoenberg \cite{Schoenberg:1973} 
in connection with cardinal spline interpolation.
For a polynomial 
B-spline $B_m$ of degree $m-1$, Schoenberg called $Z_1B$ 
the {\em exponential Euler spline}. 
The Zak transform $Z_\alpha g$ has the properties
\begin{equation}\label{eq:Zak1}
\mathrm Z_{\alpha}g(x,\omega+\tfrac{1}{\alpha}) = \mathrm Z_{\alpha}g(x,\omega),
\qquad
\mathrm Z_{\alpha}g(x+\alpha,\omega) = 
e^{2\pi i\alpha \omega}\, \mathrm Z_{\alpha}g(x,\omega).
\end{equation}
Therefore, its values in the lattice cell $[0,\alpha)\times [0,\tfrac{1}{\alpha})$
define $Z_\alpha g$ completely.  A well-known result for Gabor families $\gabor$, with
$\alpha=1$, $\beta=1/N$, and $N\in\N$, states that
the values  
$$
   A_{\rm opt}=\essinf_{x,\omega\in[0,1)} \sum_{j=0}^{N-1} 
	|Z_1 g(x,\omega+\tfrac{j}{N})|^2,\qquad
	   B_{\rm opt}=\esssup_{x,\omega\in[0,1)} \sum_{j=0}^{N-1} 
	|Z_1 g(x,\omega+\tfrac{j}{N})|^2,
$$
are the optimal constants $A,B$ in the inequality \eqref{eq:framecond}, see
 \cite{Dau:1990}, \cite{Janssen:2003.2}. For rational values of 
$\alpha\beta$, a connection of the Zak transform $Z_\alpha g$ with 
the frame-bounds of the Gabor family $\gabor$ 
was given by Zibulsky and Zeevi \cite{ZibZee:1997}. 
Therefore, the presence and the
location of zeros
of $Z_\alpha  g$ is relevant for the existence of lower frame-bounds in 
\eqref{eq:framecond}. For example, the celebrated Balian-Low theorem
\cite{Dau:1990} 
states that a Gabor family at the critical density $\alpha\beta=1$ 
cannot be a frame, if the window function $g$ or its Fourier transform 
$$
    \hat g(\omega)= \int_\R g(x) e^{-2\pi i x \omega}\,dx
$$
is continuous and in $W(\R)$. The proof in \cite{Janssen:1982} uses the
topological argument, that every continuous 
function with the property \eqref{eq:Zak1} must have a zero in every lattice cell.

Motivated by the results in \cite{GroeStoe:2012}, we study three problems in Gabor analysis.
First, in Section~3, we investigate zero properties
of the Zak transform $Z_\alpha g$, where $g$ is a TP function of finite type.
Our main tools are methods from the theory
of Tschebycheffian splines which is elaborated in \cite{Schum:1981} and 
\cite{Ron:1987}. We 
establish the connection of $Z_\alpha g$ with the Zak transform $Z_1 B_m$
of an exponential B-spline in Theorem \ref{b-splinezak}. For exponential B-splines $B_m$
of order $m\ge 2$
(and the slightly larger class of periodic exponential B-splines defined in 
\cite{Ron:1987}), we show in Theorem \ref{zakzeroesB} 
that the Zak transform $Z_1 B_m$ has exactly one 
zero  $(x,\omega)\in[0,1)^2$, and this zero is located somewhere on 
the line $\omega=\tfrac{1}{2}$. The main argument for the proof is the 
variation-dimishing property of Tschebycheffian B-splines. As a corollary, 
we obtain that $Z_\alpha g$ has only one 
zero  $(x,\omega)\in[0,\alpha)\times [0,\tfrac{1}{\alpha})$, and this zero is 
located  on 
the line $\omega=\tfrac{\alpha}{2}$. These results add two new families of examples 
to the study of Zak transforms with few zeros by Janssen \cite{Janssen:2003.2}.

In Section 4, we prove that the Gabor families $\mathcal{G}(B_m,\alpha,\beta)$
are  frames for $L^2(\R)$ for certain values $\alpha,\beta>0$.
Another important problem in Gabor analysis is the 
asymptotic behaviour of the 
lower frame-bound $A$ in \eqref{eq:framecond} 
 near the critical density $\alpha\beta\approx 1$. This information is required
in practical situations where only a minimal rate of oversampling is allowed.
It is known that for every continuous window function 
in $W(\R)$, the lower frame-bound tends to $0$ as $\alpha\beta$ tends to $1$.
The only window functions (and scaled versions thereof) in the literature, 
where the asymptotic behaviour of the 
lower frame-bound was specified near the critical density,
 are the Gaussian $g(x)=e^{-\pi x^2}$ and the hyperbolic secant 
$g(x)=(\cosh \pi x)^{-1}$. It was proved in
\cite{BoGroLy:2010} that constants $c_1,c_2>0$ exist  
such that the optimal lower frame-bound $A_{\rm opt}$ 
of $\mathcal{G}(g,\alpha,\beta)$
satisfies 
$$
   c_1 (1-\alpha\beta) \le A_{\rm opt} \le c_2(1-\alpha\beta)\quad \hbox{for}\quad
	 \tfrac{1}{2}<\alpha\beta < 1.
$$
In Section 5, we consider
all symmetric exponential B-splines of order 2 and 
give explicit lower frame-bounds that exhibit the same linear
asymptotic decay when $\alpha=1$ and $\beta$ tends to $1$.
The main ingredient in the proofs of the results in sections~4 and 5 is the fact that 
collocation matrices of Tschebycheffian B-splines are almost strict totally positive
(ASTP) matrices, in the terminology of Gasca et al. \cite{GasMicPen:1992}.
This property is equivalent to the Schoenberg-Whitney conditions
 \cite{Schum:1981}. 

The connection of the Zak transforms of a TP function $g$ of finite type $m$ 
and an exponential B-spline $B_m$ of order $m$ in Theorem \ref{b-splinezak}
opens a new corridor for further study of Gabor frames $\gabor$.
In combination with the results for $B_m$ in section~4, we obtain an alternative proof
for \cite[Theorem 1]{GroeStoe:2012} that TP functions of finite type provide Gabor frames for all lattice parameters
with $\alpha\beta<1$. Moreover it allows us to give lower frame-bounds for
$\gabor$ as well, where $g$ is the two-sided exponential $g(x)=\tfrac{\lambda}{2}e^{-\lambda|x|}$
which is TP of finite type $2$.

In the first two sections, we give a brief introduction and provide enough 
background on total positivity and Tschebycheffian B-splines for our purpose. 
For non-specialists in Approximation Theory, this may be helpful for reading the remaining sections.
The results on Gabor frames which are needed in this article are contained in sections 3-5.
For a detailed exposition on Gabor frames, we recommend the 
monographs \cite{Groech:2001}, \cite{Chris:2003}, a very short account 
is contained in the introduction in \cite{GroeStoe:2012}.


\section{Totally positive functions, matrices and sequences}

In this section, we give a short review on total positivity. We refer to \cite{Karl:1968}
for a comprehensive exposition and include some more recent developments which will be helpful in 
our context.

Totally positive functions were 
 introduced by Schoenberg in 1947 \cite{Schoenberg:1947}. Schoenberg and Whitney 
\cite{SchoenWhit:1953} laid an important foundation 
for their applications in Approximation Theory, e.g. in spline interpolation. 

\begin{definition}[Totally positive (TP) function, \cite{Schoenberg:1947}]
A measurable, non-constant function $g:\R\rightarrow\R$ is called totally positive (TP), if for every $N\in\N$ and two sets of real numbers
$$x_1<x_2<\ldots<x_N\ ,\ \ \ \ \ y_1<y_2<\ldots<y_N,$$
the corresponding matrix $A=\bigl(g(x_j-y_k)\bigr)_{j,k=1}^N$ has a non-negative determinant.\\
If the determinant is always strictly positive, $g$ is called strictly totally positive (STP).
\end{definition}

An example of a TP function is the two-sided exponential function, given by $g(x) = e^{-\abs{x}}$. The Gaussian
$g(x)=e^{-\pi x^2}$ is an STP function.
Schoenberg characterized TP functions by their Laplace transform \cite{Schoenberg:1947}
and gave a characterization of all integrable TP functions \cite{Schoenberg:1951}.

\begin{theorem}[\cite{Schoenberg:1947}, \cite{Schoenberg:1951}]
A function $g:\R\rightarrow\R$, which is not an exponential $g(x) = Ce^{ax}$ with $C,a\in\R$, is a TP function, if and only if 
its two-sided Laplace transform exists in a strip $S=\{s\in\C\mid\alpha < \mathrm{Re}\,s < \beta\}$ with $-\infty\leq\alpha<\beta\leq\infty$ 
and is given by
$$(\mathcal Lg)(s) = \int_{-\infty}^{\infty} g(t) e^{-st}\,dt = Cs^{-n}e^{\gamma s^2-\delta s} 
\prod_{\nu=1}^{\infty} \frac{e^{a_{\nu}^{-1}s}}{1+a_{\nu}^{-1}s},$$
where $n\in\N_0$ and $C,\gamma,\delta,a_{\nu}$ are real parameters with
$$C>0,\ \ \gamma\geq0,\ \ a_{\nu}\neq 0,\ \ 0<\gamma+\sum_{\nu=1}^{\infty}\left(\tfrac{1}{a_{\nu}}\right)^2<\infty.$$
Moreover, $g$ is integrable and TP, if and only if its Fourier transform is given by
$$\hat{g}(\omega) =  \int_{-\infty}^{\infty} g(t) e^{-2\pi it\omega}\,dt 
=C e^{-\gamma\omega^2} e^{-2\pi i\delta\omega} 
\prod_{\nu=1}^{\infty}\frac{e^{2\pi ia_{\nu}^{-1}\omega}}{1+2\pi ia_{\nu}^{-1}\omega}$$
with the same conditions on $C,\gamma,\delta,a_{\nu}$ as above.
\end{theorem}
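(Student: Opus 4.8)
The plan is to recognise the asserted factorisation as the statement that the \emph{reciprocal} of the Laplace transform lies in the Laguerre--P\'olya class, and to prove the equivalence through the variation-diminishing property of the convolution operator $f\mapsto g*f$. Writing $\Psi(s)=1/(\mathcal Lg)(s)$, the claimed formula for $\mathcal Lg$ is equivalent to
$$
\Psi(s)=C^{-1}s^{n}e^{-\gamma s^{2}+\delta s}\prod_{\nu=1}^{\infty}\bigl(1+a_{\nu}^{-1}s\bigr)e^{-a_{\nu}^{-1}s},
$$
i.e. $\Psi$ belongs to the Laguerre--P\'olya class of real entire functions of order at most $2$ with only real zeros. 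Thus the theorem reduces to: a non-exponential measurable $g$ is TP if and only if $1/\mathcal Lg$ is Laguerre--P\'olya. The excluded pure exponentials $Ce^{ax}$ are exactly the degenerate case where every $N\times N$ determinant with $N\ge 2$ vanishes identically and no convergence strip exists, so they are rightly set aside.

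I would dispatch the \textbf{sufficiency} direction by constructing the candidate densities as convolutions of elementary totally positive kernels. Total positivity is preserved under convolution by the Basic Composition Formula: writing a minor of $\bigl((g_{1}*g_{2})(x_{j}-y_{k})\bigr)$ as an integral and expanding by Cauchy--Binet exhibits it as a superposition, with nonnegative weights, of products of minors of the two factors, hence nonnegative. The elementary factors are the translated one-sided exponential (contributing $e^{a_{\nu}^{-1}s}/(1+a_{\nu}^{-1}s)$), the Gaussian (contributing $e^{\gamma s^{2}}$), a point translate (contributing $e^{-\delta s}$), and the one-sided truncated power $x_{+}^{\,n-1}/(n-1)!$ with $x_{+}=\max(x,0)$ (contributing $s^{-n}$); each is checked to be TP directly. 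The hypothesis $0<\gamma+\sum_{\nu}a_{\nu}^{-2}<\infty$ is precisely the condition guaranteeing convergence of the infinite convolution to a locally integrable function with the stated transform, and nonnegativity of determinants passes to the pointwise limit.

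The \textbf{necessity} direction is the crux and the step I expect to be hardest. From the $1\times 1$ minors one gets $g\ge 0$, and from the $2\times 2$ minors one gets log-concavity of $g$ on its support, hence an exponential bound on each tail; this produces a maximal vertical strip $S$ in which $\mathcal Lg$ converges and is analytic. Since $\mathcal Lg$ has no zeros in $S$, the reciprocal $\Psi$ is analytic there, and growth estimates for $\mathcal Lg$ show that $\Psi$ continues to an entire function of order at most $2$. It then remains to prove that all zeros of $\Psi$ are real, for then Hadamard factorisation delivers exactly the displayed product with real parameters, the sign $\gamma\ge 0$ being fixed by the growth of $\mathcal Lg$ and the condition on $\sum a_{\nu}^{-2}$ by the genus bound. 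This real-zeros statement is where the full force of total positivity enters: total positivity makes $f\mapsto g*f$ variation-diminishing, and a non-real zero of $\Psi$ (equivalently a complex singularity of $\mathcal Lg$) would, through the oscillatory Green's-function behaviour it induces, let $g*f$ acquire strictly more sign changes than a suitably chosen input $f$, contradicting the variation-diminishing bound. Turning this heuristic into rigorous sign-count estimates is the substantive obstacle, and it is the one place where minors of all sizes, not merely small ones, are genuinely needed.

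Finally, the \textbf{integrable} case follows by specialisation. If $g\in L^{1}(\R)$ then $\mathcal Lg$ is bounded and continuous up to the imaginary axis, so the factor $s^{-n}$ cannot occur, since a pole at $s=0$ would contradict integrability; this forces $n=0$. Substituting $s=2\pi i\omega$ and collecting signs then converts the Laplace-transform formula into the stated expression for $\hat g$, with the conditions on $C,\gamma,\delta,a_{\nu}$ carried over verbatim.
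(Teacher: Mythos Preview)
The paper does not prove this theorem. It is stated as background, with the proof outsourced to the original references \cite{Schoenberg:1947} and \cite{Schoenberg:1951}; the surrounding text merely says that ``Schoenberg characterized TP functions by their Laplace transform \cite{Schoenberg:1947} and gave a characterization of all integrable TP functions \cite{Schoenberg:1951}.'' There is therefore nothing in the paper to compare your proposal against.

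That said, your sketch is a faithful outline of Schoenberg's own strategy: recasting the claim as membership of $1/\mathcal Lg$ in the Laguerre--P\'olya class, obtaining sufficiency by convolving the elementary TP kernels (one-sided exponentials, Gaussian, truncated power) via the Basic Composition Formula, and attacking necessity through log-concavity, growth bounds, and a variation-diminishing argument to force the zeros of $\Psi$ to be real. You are right that the last step is the genuine obstacle; in Schoenberg's treatment it is handled not quite by the heuristic you describe but by a limiting argument from the discrete P\'olya frequency sequence case (Aissen--Schoenberg--Whitney / Edrei), where the reality of the zeros is established first for generating functions of totally positive sequences and then transferred. Your proposal is a reasonable high-level plan, but since the paper simply cites the result, no proof is expected here.
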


The subset of  TP functions of finite type $m\in\N$ is given by their Fourier transform 
\begin{equation}\label{eq:TPfinite}
\hat{g}(\omega) = C\,\prod_{\nu=1}^m (1+2\pi i\omega a_{\nu}^{-1})^{-1},\ \ a_{\nu}\in\R\setminus\{0\},\ \ C>0.
\end{equation}
In \cite{GroeStoe:2012}, these are considered being window functions of Gabor frames $\gabor$.
For example, the one- and two-sided exponentials 
$$ g_1(x)=e^{-x}\chi_{[0,\infty)}(x),\qquad g_2(x)=e^{-|x|}$$ 
are TP functions of finite type $1$ and $2$, respectively.
The main result in \cite{GroeStoe:2012} states that $\gabor$ is a frame for all
lattice parameters $\alpha,\beta>0$ with $\alpha\beta<1$. 
Its proof is based on the following property of TP functions of finite type, 
which was characterized by Schoenberg and Whitney.

\begin{theorem}[\cite{SchoenWhit:1953}]\label{SchoenWhit}
Let $g\in L^1(\R)$ be a TP function of finite type $m=m_1+m_2\in\N$, where $m_1\in\N_0$ is 
the number of positive and $m_2\in\N_0$ is the number of negative $a_{\nu}$ in \eqref{eq:TPfinite}.
Then for every $N\in\N$ and two sets of real numbers
$$x_1<x_2<\ldots<x_N\ ,\ \ \ \ \ y_1<y_2<\ldots<y_N,$$
the determinant of the matrix $A=\bigl(g(x_j-y_k)\bigr)_{j,k=1}^N$ is strictly positive, 
 if and only if 
\begin{align}\label{S-W-prop}
x_{j-m_1}<y_j<x_{j+m_2}
\end{align} 
for all $j=1,\ldots,N$. Here, we let $x_k = -\infty$, if $k<1$, and $x_k = \infty$, if $k>N$.
\end{theorem}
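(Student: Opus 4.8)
The plan is to exploit the product form of $\hat g$ in \eqref{eq:TPfinite} to write $g$ as a convolution of $m$ one-sided exponentials and then reduce the determinant $\det\bigl(g(x_j-y_k)\bigr)$ to a multiple integral of determinants of these elementary kernels via the basic composition formula (the Cauchy--Binet identity for integral kernels). Each factor $(1+2\pi i\omega a_{\nu}^{-1})^{-1}$ with $a_{\nu}>0$ is, up to a positive constant, the Fourier transform of the right-sided exponential $e_{\nu}(t)=e^{-a_{\nu}t}\chi_{[0,\infty)}(t)$, while each factor with $a_{\nu}<0$ is that of a left-sided exponential supported on $(-\infty,0]$. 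Hence $g$ is, up to a positive constant, the convolution $e_1*\cdots*e_m$, and iterating $g(x-y)=\int e_1(x-t)\,(e_2*\cdots*e_m)(t-y)\,dt$ together with the basic composition formula yields
\[
\det\bigl(g(x_j-y_k)\bigr)_{j,k=1}^N
= c\int\cdots\int \prod_{i=1}^m \det\bigl(e_i(t^{(i-1)}_j-t^{(i)}_k)\bigr)_{j,k=1}^N\,dt^{(1)}\cdots dt^{(m-1)},
\]
where $c>0$, $t^{(0)}=x$, $t^{(m)}=y$, and each inner integral runs over ordered tuples $t^{(i)}_1<\cdots<t^{(i)}_N$.

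First I would settle the elementary building block: for a single right-sided exponential the determinant $\det\bigl(e_{\nu}(u_j-v_k)\bigr)$ is nonnegative, since $e_{\nu}$ is itself TP of finite type $1$ by \eqref{eq:TPfinite}, and it is strictly positive exactly when the staircase condition $u_{j-1}<v_j<u_j$ holds for all $j$; the symmetric statement $u_j<v_j<u_{j+1}$ holds for a left-sided exponential. The sign pattern is read off from the support structure: the matrix $\bigl(e_{\nu}(u_j-v_k)\bigr)$ vanishes on one side of a staircase, and its determinant reduces to a single surviving diagonal product precisely under the stated inequality.

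Since every factor in the integrand is nonnegative, the whole integral is nonnegative and is strictly positive if and only if the integrand is positive on a set of positive measure, that is, if and only if there exists a chain of ordered tuples $x=t^{(0)},t^{(1)},\dots,t^{(m)}=y$ threading all the staircase conditions simultaneously. Because convolution is commutative, I may arrange the $m_1$ right-sided steps first and the $m_2$ left-sided steps last. A right-step shifts the lower bound index $j$ to $j-1$ and keeps the upper bound index, whereas a left-step keeps the lower bound index and shifts the upper bound index $j$ to $j+1$; tracking these through the chain shows that any admissible threading forces $x_{j-m_1}<y_j<x_{j+m_2}$, giving the necessity of \eqref{S-W-prop}. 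Conversely, assuming \eqref{S-W-prop}, I would construct an explicit threading: choose the intermediate tuple $z$ after the right-block inside the open intervals $x_{j-m_1}<z_j<x_j$ and $z_j<y_j<z_{j+m_2}$, which is possible exactly because of \eqref{S-W-prop}, and then refine $z$ into the individual single-step tuples inductively. On a small neighborhood of this chain every factor stays strictly positive, so the integral is strictly positive.

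The routine parts are the basic composition formula and the single-exponential computation. The hard part will be the combinatorial threading lemma: verifying that the index shifts of the staircase conditions compose exactly to the combined bounds $x_{j-m_1}<y_j<x_{j+m_2}$, and, on the sufficiency side, constructing the full chain of ordered intermediate tuples that keeps every elementary determinant strictly positive. The bookkeeping is cleanest if one fixes the convenient ordering of the factors above and argues the two index bounds separately, using the convention $x_k=-\infty$ for $k<1$ and $x_k=\infty$ for $k>N$ to absorb the boundary cases.
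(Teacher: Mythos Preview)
The paper does not supply a proof of this theorem; it is quoted as a classical result from \cite{SchoenWhit:1953} and used only as background for the Gabor-frame arguments in later sections. So there is no ``paper's own proof'' to compare against.

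On its own merits your plan is the standard route, essentially the one in Karlin's monograph \cite{Karl:1968}: factor $g$ as a convolution of $m$ one-sided exponentials, apply the basic composition (Cauchy--Binet) formula to turn $\det\bigl(g(x_j-y_k)\bigr)$ into an iterated integral over ordered tuples of a product of elementary determinants, and then read off the Schoenberg--Whitney interlacing from the staircase support structure of each one-sided kernel. The necessity and sufficiency via the threading chain are the right ideas, and your identification of the index shifts ($j\mapsto j-1$ on the lower bound for a right-step, $j\mapsto j+1$ on the upper bound for a left-step) is exactly what makes the combined condition $x_{j-m_1}<y_j<x_{j+m_2}$ appear.

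Two small points to watch when you write it out. First, for a single right-sided exponential $e_\nu(t)=e^{-a_\nu t}\chi_{[0,\infty)}(t)$ the matrix $\bigl(e_\nu(u_j-v_k)\bigr)$ is lower triangular precisely when $u_{j-1}<v_j\le u_j$, not $u_{j-1}<v_j<u_j$; more importantly, when two consecutive $v_k$'s lie in the same interval $(u_{j-1},u_j]$ the corresponding columns become proportional (since on their common support the kernel factors as $e^{-a_\nu u}\,e^{a_\nu v}$), and this rank-drop argument is what actually forces the determinant to vanish in the non-interlacing case. Second, in the sufficiency direction make the inductive construction of the intermediate ordered tuples explicit: it is easy to state the endpoints $x$ and $y$ and the midpoint $z$ after the right block, but you must also check that each single-step tuple can be chosen strictly increasing, which needs the strict inequalities from \eqref{S-W-prop} together with the conventions $x_k=-\infty$ for $k<1$ and $x_k=\infty$ for $k>N$.
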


The property (\ref{S-W-prop}) is called Schoenberg-Whitney condition, or in terms of interpolation, interlacing property.

The notion of total 
positivity is also used in matrix linear algebra, see \cite{Karl:1968} and \cite{Pinkus:2010}.
A rectangular matrix $A=(a_{j,k})\in\R^{m\times n}$ is called totally positive (TP), 
if all its minors are non-negative, that is
\begin{align*}
 &\mathrm{det}\bigl((a_{i_k,j_l})_{k,l=1,\ldots,s}\bigr) \geq 0\qquad\textrm{for all}\\
&1\leq s\leq \min\{m,n\},\quad 1\leq i_1<\ldots<i_s\leq m,\quad 1\leq j_1<\ldots<j_s\leq n.
\end{align*}
It is called strictly totally positive (STP), if all its minors are strictly positive.
Examples of TP matrices are the collocation matrices of polynomial B-splines or Tschebycheffian B-splines,
see \cite[Theorem 9.34]{Schum:1981} or Theorem~\ref{det>0} below. 
The following generalization of STP matrices is useful in our context.

\begin{definition}[Almost strict totally positive (ASTP) matrix, \cite{GasMicPen:1992}]
An $m\times n$ TP matrix $A$ is said to be almost strict totally positive (ASTP), if it satisfies the following two conditions:
\begin{itemize}
 \item[(1) ]Any minor of $A$ with consecutive rows and columns of $A$ 
is positive, if and only if the diagonal entries of the minor are positive.
 \item[(2) ]If $A$ has a zero row or column, then the subsequent rows or columns of $A$ are also zero.
\end{itemize}
\end{definition}

It was proved in \cite{GasMicPen:1992} that an arbitrary 
 minor of an ASTP matrix $A$ is positive, if and only if the diagonal entries
of the corresponding submatrix are positive. 

Yet another notion of total positivity is used for sequences of functions \newline
$(u_1,\ldots,u_n)$
defined on a  set
$K\subset\R$. For pairwise distinct points $t_1,\ldots,t_m\in K$ (not necessarily
ordered), the matrix
$$M\begin{pmatrix}
    u_1,\ldots,u_n\\
    t_1,\ldots,t_m
   \end{pmatrix}
:= (u_k(t_j))_{j=1,\ldots,m;k=1,\ldots,n}
$$
denotes the collocation matrix of $(u_1,\ldots,u_n)$ at the points $t_1,\ldots,t_m$.

\begin{definition}[TP and ASTP sequence, \cite{CarPen:1993}]
The sequence of functions $(u_1,\ldots,u_n)$ on
$K\subset\R$ is called  totally positive (TP) 
 (resp. almost strict totally positive (ASTP)), if all  collocation matrices
$M\begin{pmatrix}
    u_1,\ldots,u_n\\
    t_1,\ldots,t_n
   \end{pmatrix}$ with ordered points $t_1<\ldots<t_n$ in $K$ are TP (resp. ASTP).
\end{definition}

An example of an ASTP sequence are Tschebycheffian B-splines, which we describe in the
next section. The notion of ASTP matrices is useful in order to describe interlacing
properties of interpolation nodes and spline knots, see Theorem \ref{det>0}.


\section{Tschebycheffian B-splines and PEB-splines}

Next we provide some background from \cite{Schum:1981} and \cite{Ron:1987} 
on a class of Tschebycheffian B-splines which
will be used as window functions for Gabor frames in the rest of the article.
We start from positive weight functions $w_j\in C^{m-j}[a,b]$ on an interval
$[a,b]\subset\R$. Then the functions
\begin{align*}
u_1(x)&=w_1(x)\\
u_2(x)&=w_1(x)\int_a^xw_2(s_2)ds_2\\
\vdots\\
u_m(x)&=w_1(x)\int_a^xw_2(s_2)\int_a^{s_2}\cdots\int_a^{s_{m-1}}w_m(s_m)ds_m\cdots ds_2
\end{align*}
are functions in $C^{m-1}[a,b]$, 
which form an \textit{extended complete Tschebycheff} (ECT) system on $[a,b]$.
This means that
$$
\det\left(M\begin{pmatrix}
    u_1,\ldots,u_s\\
    t_1,\ldots,t_s
   \end{pmatrix}\right)>0
$$
for all 
$1\le s\le m$ and $t_1\leq\ldots\leq t_s \in [a,b]$. Here the matrix 
$$M\begin{pmatrix}
    u_1,\ldots,u_s\\
    t_1,\ldots,t_s
   \end{pmatrix}$$ 
is the collocation matrix of Hermite interpolation, if some nodes 
	$t_j$ coincide. The vector space 
	$$\mathcal U_m:={\rm span}(u_1,\ldots,u_m)$$
is called ECT-space. 
	
An important result in spline theory is the existence of Tschebycheffian B-splines (TB-splines) 
$B_m^k\in C^{m-2}(\R)$ of order $m$,
associated with knots $a\le y_k<\ldots<y_{k+m}\le b$, such that 
\begin{align*}
 & {\rm supp}\, B_m^k = [y_k,y_{k+m}],\\
 & B_m^k\mid_{(y_j,y_{j+1})} \in \mathcal U_m,\quad k\le j\le k+m-1.
\end{align*}
Up to normalization, $B_m^k$ is uniquely determined by these properties.
A detailed description of TB-splines of order $m$ is given in \cite[Ch. 9]{Schum:1981}. 
For our purpose, we define the differential operators
\begin{equation}\label{eq:diffop}
{D}_j f = \frac{d}{dx}\left( \frac{f}{w_j} \right),\ 
\ L_j = {D}_j\cdots {D}_1,\quad j=1,\ldots,m,
\end{equation}
as in \cite[page 365]{Schum:1981}. Then the following results are true.

\begin{lemma}
\begin{itemize}
\item[(i) ]$L_j u_k=0$ for $1\le k\le j\le m$, and $\mathcal U_m$ is the kernel of $L_m$.
\item[(ii) ]$(L_j u_{j+1},\ldots,L_j u_m)$ is the ECT-system on $[a,b]$ with weight functions $w_{j+1},\ldots,w_m$, 
called the $j$-th reduced ECT-system.
\item[(iii) ]Let $B_{m-1}^k$ denote the TB-spline of order $m-1$ with knots
$y_k,\ldots,y_{k+m-1}$ and with respect to the first reduced ECT-system. 
Then for every $k\in\Z$ there are constants $a_k,b_k>0$ such that
\begin{equation}\label{eq:Bsplineder}
    L_1 B_m^k = a_k B_{m-1}^k- b_k B_{m-1}^{k+1}.
	\end{equation}
\end{itemize}
\end{lemma}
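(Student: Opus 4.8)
The plan is to treat (i) and (ii) together, since both come from one differentiation identity for the iterated integrals defining the ECT-system, and then to derive (iii) by identifying $L_1 B_m^k$ as a compactly supported spline of one lower order and pinning down the two coefficients by a dimension count and a sign analysis at the ends of its support. To set things up, I would introduce, for $0\le j<m$, the functions of the $j$-th reduced system explicitly: write $u_i^{(j)}$ for the $i$-th function built from the weights $w_{j+1},\dots,w_m$ exactly as $u_i=u_i^{(0)}$ is built from $w_1,\dots,w_m$, so that $u_1^{(j)}=w_{j+1}$ and $u_i^{(j)}(x)=w_{j+1}(x)\int_a^x u_{i-1}^{(j+1)}(s)\,ds$ for $i\ge 2$. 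The whole mechanism is then the single identity
\begin{equation*}
  D_{j+1}\,u_i^{(j)} = u_{i-1}^{(j+1)}\ (i\ge 2),\qquad D_{j+1}\,u_1^{(j)} = \Bigl(\tfrac{w_{j+1}}{w_{j+1}}\Bigr)' = 0,
\end{equation*}
which follows at once from $D_{j+1}f=\bigl(f/w_{j+1}\bigr)'$ and the fundamental theorem of calculus applied to the outermost integral.

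Iterating this identity gives $L_j u_k = D_j\cdots D_1 u_k = u_{k-j}^{(j)}$ whenever $k>j$, while for $k\le j$ some intermediate step reaches $u_1^{(\cdot)}$ and the next operator annihilates it, so $L_j u_k=0$; this is (i). To finish (i) I note that each $D_j$ has leading term $w_j^{-1}\tfrac{d}{dx}$ with $w_j>0$, so $L_m$ is an $m$-th order linear differential operator with nonvanishing leading coefficient and thus $\dim\ker L_m=m$; since the $m$ independent functions $u_1,\dots,u_m$ all lie in $\ker L_m$, they span it, i.e.\ $\mathcal U_m=\ker L_m$. Assertion (ii) is just the case $k=j+1,\dots,m$ of the iterated identity, namely $L_j u_{j+i}=u_i^{(j)}$, and $(u_1^{(j)},\dots,u_{m-j}^{(j)})$ is by construction the ECT-system for $w_{j+1},\dots,w_m$.

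For (iii) I would first describe $f:=L_1 B_m^k=(B_m^k/w_1)'$. Since $B_m^k$ vanishes off $[y_k,y_{k+m}]$, so does $f$; since $B_m^k\in C^{m-2}$ and $0<w_1\in C^{m-1}$, we get $f\in C^{m-3}$, the smoothness of an order-$(m-1)$ TB-spline. Factoring $L_m=(D_m\cdots D_2)D_1=\widetilde L_{m-1}L_1$, where $\widetilde L_{m-1}$ is the operator of the first reduced system, and using $L_m B_m^k=0$ on each knot interval (because $B_m^k$ is piecewise in $\mathcal U_m=\ker L_m$), we obtain $\widetilde L_{m-1}f=0$ there, so $f$ is piecewise in the first reduced ECT-space $\mathcal U_{m-1}^{(1)}=\ker\widetilde L_{m-1}$. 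Hence $f$ is a compactly supported spline of order $m-1$ for the first reduced system on the knots $y_k,\dots,y_{k+m}$, and by the uniqueness/dimension property of TB-splines the space of such splines supported in $[y_k,y_{k+m}]$ is two-dimensional, spanned by $B_{m-1}^k$ (knots $y_k,\dots,y_{k+m-1}$) and $B_{m-1}^{k+1}$ (knots $y_{k+1},\dots,y_{k+m}$); thus $f=a_k B_{m-1}^k+c_k B_{m-1}^{k+1}$.

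The signs follow by integrating over the two outermost knot intervals. On $(y_k,y_{k+1})$ we have $B_{m-1}^{k+1}\equiv 0$, so $f=a_k B_{m-1}^k$; integrating, $\int_{y_k}^{y_{k+1}}f=(B_m^k/w_1)(y_{k+1})-(B_m^k/w_1)(y_k)=(B_m^k/w_1)(y_{k+1})>0$, while $\int_{y_k}^{y_{k+1}}B_{m-1}^k>0$ since $B_{m-1}^k>0$ on the interval, forcing $a_k>0$. Symmetrically, on $(y_{k+m-1},y_{k+m})$ we have $B_{m-1}^k\equiv 0$, $f=c_k B_{m-1}^{k+1}$, and $\int_{y_{k+m-1}}^{y_{k+m}}f=-(B_m^k/w_1)(y_{k+m-1})<0$, forcing $c_k<0$; writing $c_k=-b_k$ gives $b_k>0$ and hence \eqref{eq:Bsplineder}. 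I expect the dimension count — the exact claim that these two reduced TB-splines span all compactly supported order-$(m-1)$ splines supported in $[y_k,y_{k+m}]$ — to be the main obstacle, since it rests on the uniqueness/Schoenberg–Whitney characterization of TB-splines and a careful accounting of the $C^{m-3}$ vanishing conditions at $y_k$ and $y_{k+m}$.
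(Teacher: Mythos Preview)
Your argument is correct. Note, however, that the paper does not actually prove this lemma: it simply cites Schumaker's book, pointing to pages 365f for (i) and (ii) and to Theorems 9.23 and 9.28 for (iii). So there is no ``paper's own proof'' to compare against beyond those references.

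Your treatment of (i) and (ii) via the single differentiation identity $D_{j+1}u_i^{(j)}=u_{i-1}^{(j+1)}$ is exactly the standard derivation one finds in Schumaker, so there is no divergence there. For (iii), the cited route in Schumaker goes through the dual-functional (generalized divided difference) representation of TB-splines and the explicit derivative formula that comes out of the recursion machinery; the positivity of $a_k,b_k$ is then read off from the normalization of those functionals. Your route is different and more self-contained: you characterize $L_1B_m^k$ intrinsically as a compactly supported spline of order $m-1$ for the reduced system, invoke the two-dimensionality of that space to get the expansion, and then pin down the signs by integrating $(B_m^k/w_1)'$ over the two end intervals. This avoids the divided-difference apparatus entirely, at the cost of relying on the dimension count you flagged. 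That count is fine: with $m$ pieces each carrying $m-1$ parameters, the $(m-1)(m-2)$ interior $C^{m-3}$ conditions plus the $2(m-2)$ boundary vanishing conditions leave at least a two-dimensional space, and since $B_{m-1}^k$ and $B_{m-1}^{k+1}$ are linearly independent (their supports differ) and lie in it, they span it. Your sign argument is clean; just be aware that for $m=2$ the function $f$ has jump discontinuities, so the integration step uses absolute continuity of $B_m^k/w_1$ rather than the fundamental theorem in its naive form --- but this is routine.
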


The assertions (i) and (ii) of the lemma are given in \cite[page 365f]{Schum:1981},
assertion (iii) follows from Theorems 9.23 and 9.28 in \cite{Schum:1981}. 

An important result of \cite[Theorem 9.34]{Schum:1981} 
characterizes the uniqueness of spline interpolation. It requires an interlacing property
of interpolation nodes and spline knots, in a similar way as the Schoenberg-Whitney
conditions in Theorem \ref{SchoenWhit}. Moreover, it covers the case where an 
ordered subsequence of TB-splines is selected, 
with natural ordering of the support intervals
on the real line.
We present this result for the purpose of
finding
Gabor frames $\mathcal{G}(B,\alpha,\beta)$ in the next section, 
where we let $B$ be a special TB-spline in a shift-invariant spline space.

\begin{theorem}[\cite{Schum:1981}]\label{det>0}
Let $\{B_m^k\mid k=1,\ldots,N\}$ be the TB-splines of order $m$ with respect to the knot sequence 
$y_1<\ldots< y_{m+N}$ and weight functions $w_1,\ldots,w_m$. Then for any selection
$1\leq k_1<\cdots<k_s\leq N$ and any $t_1<\cdots<t_s$,
$$
\det\left(M\begin{pmatrix}
    B_m^{k_1},\ldots,B_m^{k_s}\\
    t_1,\ldots,t_s
   \end{pmatrix}\right)\geq 0,
$$
and strict positivity holds if and only if
$$
t_j\in(y_{k_j},y_{k_j+m}),\quad j=1,\ldots,s.
$$
\end{theorem}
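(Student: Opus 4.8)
The plan is to split the assertion into two parts: first, that every such collocation determinant is non-negative (total positivity), and second, the Schoenberg--Whitney characterization of when it is strictly positive. The strict part then follows from non-negativity together with nonsingularity of the $s\times s$ matrix, so it suffices to treat these two ingredients separately.

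For non-negativity I would argue by induction on the order $m$, for arbitrary points $t_1<\cdots<t_s$ and an arbitrary selection $k_1<\cdots<k_s$. In the base case $m=1$ the supports $(y_{k_i},y_{k_i+1})$ of the splines $B_1^{k_i}$ are pairwise disjoint, so each evaluation point lies in the interior of at most one support; the matrix then has at most one nonzero entry per row, and because both the points and the indices are increasing these entries occur in nondecreasing column order, whence the determinant is either zero or a product of positive numbers. For the inductive step I would use the derivative formula \eqref{eq:Bsplineder}, $L_1B_m^k=a_kB_{m-1}^k-b_kB_{m-1}^{k+1}$ with $a_k,b_k>0$: writing $L_1=\tfrac{d}{dx}(\cdot/w_1)$ and applying Rolle's theorem between consecutive evaluation points, together with a Cauchy--Binet expansion, reduces an order-$m$ minor to minors of the order-$(m-1)$ reduced system, for which non-negativity is the inductive hypothesis. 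This is precisely the total positivity of TB-spline collocation matrices recorded in \cite[Ch.~9]{Schum:1981}.

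For the necessity of the interlacing condition I would use the support structure directly. Suppose $t_j\in(y_{k_j},y_{k_j+m})$ fails for some index $p$. If $t_p\ge y_{k_p+m}$, then for every row $j\ge p$ and column $i\le p$ the orderings give $t_j\ge t_p\ge y_{k_p+m}\ge y_{k_i+m}$, so $B_m^{k_i}(t_j)=0$; this produces a zero block of size $(s-p+1)\times p$. If instead $t_p\le y_{k_p}$, then for $j\le p$ and $i\ge p$ one gets $t_j\le t_p\le y_{k_p}\le y_{k_i}$, hence $B_m^{k_i}(t_j)=0$, a zero block of size $p\times(s-p+1)$. In both cases the two dimensions sum to $s+1>s$, which forces the determinant to vanish. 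Hence strict positivity implies the interlacing $t_j\in(y_{k_j},y_{k_j+m})$ for all $j$.

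For sufficiency, assume the interlacing holds. By non-negativity it is enough to show the matrix is nonsingular, i.e. that no nontrivial $f=\sum_{i=1}^s c_iB_m^{k_i}$ vanishes at all the $t_j$. Here I would invoke the zero-counting (variation-diminishing) property of TB-splines: a nonzero such $f$ has at most $s-1$ zeros in the open interval $(y_{k_1},y_{k_s+m})$, counted with the appropriate multiplicity. Since the interlacing places each $t_j$ in the interior of the support of $B_m^{k_j}$, and all $t_j$ lie in $(y_{k_1},y_{k_s+m})$, they contribute $s$ genuine zeros, contradicting the bound; thus $c=0$ and the matrix is nonsingular, so that $\det M>0$. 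The main obstacle is exactly this last step: making the zero count sharp under the Schoenberg--Whitney condition, ensuring that each vanishing point counts as a true (isolated, correctly-multiple) zero and that $f$ does not vanish on a whole subinterval, while keeping track of the sign changes and support indices through the order-reduction in \eqref{eq:Bsplineder}. This bookkeeping, the spline analogue of Theorem \ref{SchoenWhit}, is carried out in detail in \cite[Ch.~9]{Schum:1981}.
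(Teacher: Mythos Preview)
The paper does not actually prove this statement: Theorem~\ref{det>0} is quoted verbatim from \cite[Theorem~9.34]{Schum:1981} and used as a black box in Sections~4 and~5, so there is no ``paper's own proof'' to compare your attempt against.

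That said, your sketch is a faithful outline of the classical argument in \cite[Ch.~9]{Schum:1981}: total positivity by induction on the order via the derivative relation \eqref{eq:Bsplineder}, necessity of the interlacing condition from the zero-block observation, and sufficiency by a zero-count showing nonsingularity. One point of caution: you invoke the variation-diminishing property (Theorem~\ref{vardim} in the paper) to bound the number of zeros of $f=\sum c_iB_m^{k_i}$, but Theorem~\ref{vardim} bounds $S^-(f)$, the number of \emph{strong sign changes}, not the number of zeros; a spline with $s$ zeros that are all even-order would have fewer sign changes. In Schumaker's development the required sharp zero bound (Budan--Fourier type, \cite[Theorem~9.30]{Schum:1981}) is established independently and precedes Theorem~9.34, so your circularity worry is real and you should cite that zero-count result rather than the variation-diminishing inequality. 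You flag exactly this as the ``main obstacle,'' which is correct; the rest of the sketch is sound.
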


In other words,  an ordered sequence of TB-splines constitutes an ASTP sequence, and therefore their collocation 
matrices with ordered
points are ASTP matrices.

Another property of TB-splines is closely related with total positivity
 of the collocation matrices.
 It is called the 
{\em variation-diminishing property}  of TB-splines. It is 
important for applications of TB-splines in Computer Aided Design,
especially in the digital design of curves. 
Following \cite{deBoor:1976}, we say that a function $f:[a,b]\to\R$ 
has at least $p$ strong sign changes, if there exists a nondecreasing sequence
$(\tau_j)_{0\le j\le p}$ in $[a,b]$ with $f(\tau_0)\ne 0$ and, in case $p\ge 1$, 
$f(\tau_{j-1})f(\tau_j)<0$ for all $j=1,\ldots,p$. 
The supremum of the number of strong sign changes of $f$  is denoted by $S^-(f)$.
Similarly, we define the total number of sign changes $S^-(c)$ 
of a sequence of real numbers $c=(c_k)_{0\le k\le N}$.

\begin{theorem}[Variation-diminishing property, see {\cite[Theorem 9.35]{Schum:1981}}]
\label{vardim}
Let $\{B_m^k\mid k=1,\ldots,N\}$ be the TB-splines of order $m$ with respect to the knot sequence 
$y_1<\ldots< y_{m+N}$ and weight functions $w_1,\ldots,w_m$.  
Then $f=\sum_{k=0}^N c_k B_m^k$,
with domain $[a,b]=[y_1,y_{m+N}]$, satisfies 
\begin{equation}\label{eq:vardimTB}
   S^- (f) \le S^-(c).
\end{equation}
\end{theorem}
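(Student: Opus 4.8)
The plan is to reduce the inequality \eqref{eq:vardimTB} to the variation-diminishing property of totally positive matrices, feeding in the total positivity of collocation matrices supplied by Theorem~\ref{det>0}. Fix the coefficient vector $c=(c_k)$ and put $f=\sum_k c_k B_m^k$. The idea is that every finite configuration of abscissae at which $f$ changes sign strongly yields a totally positive collocation matrix whose product with $c$ reproduces the sign pattern of $f$; the bound on $S^-(f)$ is then inherited from the matrix bound.

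First I would note that strong sign changes of $f$ occur at \emph{distinct} points: if $\tau_{j-1}\le\tau_j$ and $f(\tau_{j-1})f(\tau_j)<0$, then the two values are nonzero and of opposite sign, so $\tau_{j-1}<\tau_j$. Hence any witness for $S^-(f)\ge q$ may be chosen as a strictly increasing sequence $\tau_0<\cdots<\tau_q$ in $[a,b]$ with $f(\tau_{j-1})f(\tau_j)<0$. For such points the value vector equals $Mc$, where
\[
 M=M\begin{pmatrix} B_m^{1},\dots,B_m^{N}\\ \tau_0,\dots,\tau_q\end{pmatrix}
\]
is the collocation matrix of the TB-splines at $\tau_0,\dots,\tau_q$. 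By Theorem~\ref{det>0} this matrix is totally positive (only nonnegativity of its minors is used here, not the sharp interlacing criterion), and since $Mc$ has $q$ strict sign changes among nonzero entries, $S^-(Mc)=q$. Granting the matrix inequality $S^-(Mc)\le S^-(c)$, every such configuration gives $q\le S^-(c)$, and taking the supremum over all of them yields $S^-(f)\le S^-(c)$, which is \eqref{eq:vardimTB}.

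It thus remains to prove $S^-(Mc)\le S^-(c)$ for a totally positive matrix $M$. Set $s=S^-(c)$ and suppose, for contradiction, that $S^-(Mc)\ge s+1$; pick rows $i_0<\cdots<i_{s+1}$ on which $Mc$ is strictly alternating and nonzero. Group the nonzero entries of $c$ into at most $s+1$ consecutive, constant-sign blocks, $c=\sum_{t=1}^{s+1}c^{(t)}$ with $c^{(t)}$ of sign $\eta_t$ and $\eta_t\eta_{t+1}=-1$, and set $\hat c^{(t)}=\eta_t c^{(t)}\ge 0$. Nonnegativity of the entries of $M$ then makes $\widetilde M=\bigl((M\hat c^{(t)})_{i_j}\bigr)_{j,t}$ an $(s+2)\times(s+1)$ matrix with nonnegative entries, and in fact totally positive: it is the product of the row-submatrix of $M$ on rows $i_0,\dots,i_{s+1}$ with the nonnegative matrix $[\,\hat c^{(1)}\mid\cdots\mid\hat c^{(s+1)}\,]$, the latter being totally positive by its disjoint, increasingly ordered column supports, and products of totally positive matrices being totally positive by the Cauchy--Binet formula. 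Moreover $(Mc)_{i_j}=(\widetilde M\eta)_j$ for the alternating vector $\eta=(\eta_t)$.

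The main obstacle is the final sign argument. Expanding the vanishing determinant $\det[\,\widetilde M\mid\widetilde M\eta\,]=0$ along its last column yields a relation $\sum_j(-1)^j(\widetilde M\eta)_j\,d_j=0$, where the $d_j\ge 0$ are the maximal minors of $\widetilde M$. Since $(\widetilde M\eta)_j=(Mc)_{i_j}$ strictly alternates in sign, every summand $(-1)^j(\widetilde M\eta)_j d_j$ carries the same sign, forcing $d_j=0$ for all $j$. When $M$ is \emph{strictly} totally positive all $d_j>0$, so this is a contradiction and the strict inequality follows. For a general totally positive $M$ the minors $d_j$ may vanish, so I would recover the result by approximating $M$ by strictly totally positive matrices $M_\varepsilon\to M$, using $S^-(M_\varepsilon c)\le S^-(c)$ and passing to the limit via the lower semicontinuity of $S^-$ (strictly alternating nonzero entries persist under small perturbations, so $S^-(Mc)\le\liminf_{\varepsilon\to0}S^-(M_\varepsilon c)$). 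This approximation step, resting on the density of strictly totally positive matrices within totally positive ones, is the delicate point; it is classical (see \cite{Karl:1968}) and is exactly the fact underlying \cite[Theorem~9.35]{Schum:1981}, on which Theorem~\ref{vardim} is based.
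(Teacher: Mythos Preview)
The paper does not give a proof of Theorem~\ref{vardim}; it is quoted as a background result from \cite[Theorem~9.35]{Schum:1981}, so there is no in-paper argument to compare yours against. Your approach---evaluate $f$ at the witnesses of a putative sign pattern, recognise the resulting vector as $Mc$ for a totally positive collocation matrix $M$ (supplied by Theorem~\ref{det>0}), and then invoke the variation-diminishing property $S^-(Mc)\le S^-(c)$ of TP matrices---is exactly the standard route that underlies Schumaker's proof, so in spirit you are reproducing the cited argument rather than offering an alternative.

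Your write-up is essentially correct. A couple of small comments: the determinant-expansion step is clean and is indeed the classical proof for \emph{strictly} TP matrices; for the passage to general TP matrices you appeal to density of STP within TP together with lower semicontinuity of $S^-$, which works, but one can also argue directly (without approximation) using the fact that an $(s+2)\times(s+1)$ TP matrix cannot map an alternating vector to a strictly alternating vector, via a rank/minor argument on $\widetilde M$. Either way the conclusion holds. One typographical point you may wish to flag: the statement sums over $k=0,\dots,N$ while the TB-splines are indexed $k=1,\dots,N$; this is a harmless indexing slip in the paper and does not affect your argument.
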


In order to approach our main concern of studying Zak transforms 
and Gabor frames whose window function is a TB-spline,
we impose the special structure 
of shift-invariance with respect to the integer lattice on the sequence of TB-splines.
This structure is obtained by
\begin{itemize}
\item[(S$_1$) ]letting the knots be integers $y_k=k$, $k\in\Z$, and 
\item[(S$_2$) ] requiring that the weight functions
$w_j$, $1\le j\le m$, have the form
\begin{align*}
   w_j(x) = e^{\alpha_j x}r_j(x),\qquad \alpha_j\in\R,\quad r_j(x+1)=r_j(x) ~\textrm{for all}~x\in\R.
\end{align*}
\end{itemize}
Due to this form of the weight functions,  $B_m^k$ is called a \textit{periodic exponential B-spline} (PEB-spline)
in \cite{Ron:1987},
and if $r_j\equiv 1$ for all $j$,  then $B_m^k$ is called \textit{exponential B-spline} (EB-spline). 
As a consequence of (S$_1$) and (S$_2$), the PEB-splines of order $m$ satisfy 
\begin{equation}\label{eq:shiftinv}
   B_m^k(x)= e^{\alpha_1 k} B_m^0(x-k),\qquad k\in\Z.
\end{equation}
Hence, the space of all spline functions $\sum_{k\in\Z}c_k B_m^k$ with complex coefficients $c_k$
is shift-invariant. We let $B_m:=B_m^0$, for short, and $B_{m-1}:=B_{m-1}^0$ the
PEB-spline of order $m-1$ with respect to the first reduced ECT-system and knots $0,1,\ldots,m-1$.
 Then 
the recursion formula \eqref{eq:Bsplineder} is specified as
\begin{equation}\label{eq:Bsplineder2}
   L_1 B_m = a_{m-1}^{-1} (B_{m-1}-B_{m-1}(\cdot -1)),\qquad a_{m-1}>0,
\end{equation}
cf. \cite[Proposition 3.2]{Ron:1987}. 
Moreover, the identity \eqref{eq:shiftinv} allows us to 
write the variation-diminishing property in \eqref{eq:vardimTB} as
\begin{equation}\label{eq:vardimPEB}
   S^- \biggl( \sum_{k=0}^N c_k B_m(\cdot-k)\biggr) \le S^-(c),
\end{equation}
where the coefficients $c_k$ are real and the sum has 
the domain $[a,b]=[0,m+N]$.

\begin{example}\label{exampleB}  EB-splines of order
 $m\in\N$ (see \cite[page 16f]{Ron:1987}).

Let $\Lambda=(\lambda_1,\ldots,\lambda_m)\in\R^m$, $\lambda_0=0$, and let exponential weight functions
$$w_j = e^{(\lambda_j-\lambda_{j-1})x},\ \ j=1,\ldots,m,$$
be given. Then, with proper normalization, the EB-spline $B_\Lambda:=B_m$, with knots $0,1,\ldots,m$,
  is given by the convolution of the functions $e^{\lambda_j(\cdot)}\chi_{[0,1)}$,
\begin{align}\label{expBspline}
B_\Lambda =  e^{\lambda_1(\cdot)}\chi_{[0,1)} \, \ast e^{\lambda_2(\cdot)}\chi_{[0,1)} \,
 \ast \ldots \ast \, e^{\lambda_m(\cdot)}\chi_{[0,1)}.
\end{align}
Its Fourier transform is
\begin{align}\label{eq:Blambdahat}
\hat{B_\Lambda}(\omega) = \prod_{j=1}^m \frac{e^{\lambda_{j}-2\pi i\omega}-1}{\lambda_{j}-2\pi i\omega}
\end{align}
and the corresponding differential operators in \eqref{eq:diffop} have the form
$$
   L_j =e^{-\lambda_{j}x}\prod_{k=1}^j\left(\frac{d}{dx}-\lambda_{k}\,\id\right).
$$
Note that the EB-spline $B_{m-1}$ of the first reduced ECT-system with knots $0,\ldots,m-1$ is,
up to normalization, given by
\begin{align*}
B_{\{\lambda_2-\lambda_1,\ldots,\lambda_m-\lambda_1\}} =  e^{(\lambda_2-\lambda_1)(\cdot)}\chi_{[0,1)} \,
 \ast \ldots \ast \, e^{(\lambda_m-\lambda_1)(\cdot)}\chi_{[0,1)}.
\end{align*}

For our numerical computations, we use the explicit representation of $B_\Lambda$ 
in each interval $[k-1,k)$, $1\le k\le m$. If 
the numbers $\lambda_1,\ldots,\lambda_m$ are pairwise distinct, then 
the ECT-space is 
$$\mathcal U_m = \mathrm{span}\{ e^{\lambda_j x}\mid 1\le j\le m \}.$$
 For  $\lambda_1<\cdots<\lambda_m$ and $m\geq2$, Christensen and Massopust \cite{ChrisMass:2010} give the closed 
form  
\begin{equation}\notag
  B_\Lambda(x+k-1)=\sum_{j=1}^m \alpha_j^{(k)} e^{\lambda_j x},\qquad x\in[0,1),~1\leq k\leq m,
	\end{equation}
with coefficients 
\begin{align*}
\alpha_j^{(k)} = \begin{cases}
\prod\limits_{r=1,\atop r\neq j}^{m}(\lambda_m-\lambda_r)^{-1},\ \ &k=1\\
(-1)^{k-1} \ \frac{\sum\limits_{1\leq j_1<\cdots<j_{k-1}\leq m,\atop j_1,\cdots,j_{k-1}\neq j}e^{\lambda_{j_1}+\ldots+\lambda_{j_{k-1}}}}{\prod\limits_{r=1,\atop r\neq j}^{m} (\lambda_m-\lambda_r)},\ \ &k=2,\ldots,m.
\end{cases}
\end{align*}
For the general case
$$
   \Lambda=(\underbrace{\xi_1,\ldots,\xi_1}_{s_1},\ldots,
	\underbrace{\xi_r,\ldots,\xi_r}_{s_r}),
$$
with  pairwise distinct $\xi_1,\ldots,\xi_r\in\R$, and each $\xi_j$ 
repeated with multiplicity $s_j\in\N$, we have
$$\mathcal U_m = 
\mathrm{span}\{ e^{\xi_1 x},xe^{\xi_1 x},\ldots,x^{s_1-1}e^{\xi_1 x},\ldots,e^{\xi_r x},\ldots,x^{s_r-1}e^{\xi_r x} \}$$
and
\begin{equation}\label{eq:Bpiecewise}
  B_\Lambda(x+k-1)=\sum_{j=1}^r p_j^{(k)}(x) e^{\xi_j x},\qquad x\in[0,1),~1\leq k\leq m,
\end{equation}
with real polynomials $p_j^{(k)}$ of degree $s_j-1$. 
\end{example}


\section{Zak transform of PEB-splines and TP functions of finite type}

The Zak transform is an important tool for spline interpolation and 
Gabor frame analysis. For a given parameter $\alpha>0$,
 the Zak transform $\mathrm Z_{\alpha}f$ of a function $f:\R\to\C$ is defined by
$$\mathrm Z_{\alpha}f(x,\omega) = \sum_{k\in\Z} f(x-k\alpha) e^{2\pi ik\alpha\omega},$$
whenever this series converges.
Properties of the Zak transform and 
many more facts about this transform can be found in \cite{Groech:2001}.
We present some of these properties which are needed for the results in the following sections.

\begin{lemma}\label{Zakprop}
Let $f$ be an element of the Wiener space $ W(\R)$.
\begin{itemize}
 \item[a) ] $\mathrm Z_\alpha f(x,\omega)$ is bounded in $\R^2$, 
and if $f$ is continuous, then $\mathrm Z_\alpha f$ is continuous.
  \item[b) ] For every $n\in\Z$, we have the identities for periodicity
$$\mathrm Z_{\alpha}f(x,\omega+\tfrac{n}{\alpha}) = \mathrm Z_{\alpha}f(x,\omega)$$
and quasi-periodicity
$$\mathrm Z_{\alpha}f(x+n\alpha,\omega) = 
e^{2\pi in\alpha \omega}\, \mathrm Z_{\alpha}f(x,\omega).$$
 \item[c) ] If $\hat{f}\in W(\R)$ as well, then
$$\alpha\cdot \mathrm Z_{\alpha}f(x,\omega) = e^{2\pi ix\omega}\,\mathrm Z_{1/\alpha}\hat{f}(\omega,-x).$$
 \item[d) ] Let $f_{\alpha}=f(\alpha\cdot)$ be the scaled function of $f$. Then
$$\mathrm Z_{\alpha}f(x,\omega) = \mathrm Z_1f_{\alpha}(\tfrac{x}{\alpha},\alpha\omega).$$
\end{itemize}
\end{lemma}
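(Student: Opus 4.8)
The plan is to prove Lemma \ref{Zakprop} by verifying each of the four properties directly from the series definition of the Zak transform, using the Wiener-space hypothesis only where convergence and boundedness are genuinely at stake.

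For part a), I would first establish that the series $\mathrm Z_\alpha f(x,\omega)=\sum_{k\in\Z}f(x-k\alpha)e^{2\pi ik\alpha\omega}$ converges and is bounded. The key estimate is $\abs{\mathrm Z_\alpha f(x,\omega)}\le\sum_{k\in\Z}\abs{f(x-k\alpha)}$, and I would bound the latter sum uniformly in $x$ by the Wiener norm of the scaled function $f_\alpha=f(\alpha\cdot)$. Concretely, after the substitution reducing to the case $\alpha=1$ (anticipating part d)), the sum $\sum_{k\in\Z}\abs{f(x-k)}$ is dominated by $\sum_{n\in\Z}\esssup_{t\in[0,1]}\abs{f(t+n)}=\norm{f}_W<\infty$. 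For continuity, I would argue that the partial sums are continuous and converge uniformly on compact sets, again by the Wiener-norm bound, so the limit is continuous; the Weierstrass $M$-test with majorants $\esssup_{x\in[0,1]}\abs{f(x+n)}$ does the job.

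Parts b) and d) are purely formal manipulations of the series and require no analytic input beyond the absolute convergence established in a). For the periodicity identity I substitute $\omega+\tfrac{n}{\alpha}$ and observe that $e^{2\pi ik\alpha(\omega+n/\alpha)}=e^{2\pi ik\alpha\omega}e^{2\pi ikn}=e^{2\pi ik\alpha\omega}$ since $kn\in\Z$. For quasi-periodicity I replace $x$ by $x+n\alpha$, reindex the sum by $k\mapsto k+n$, and factor out $e^{2\pi in\alpha\omega}$; the shift in the index is justified by absolute convergence. Part d) is immediate from the definition: writing out $\mathrm Z_1 f_\alpha(\tfrac{x}{\alpha},\alpha\omega)=\sum_k f_\alpha(\tfrac{x}{\alpha}-k)e^{2\pi ik\alpha\omega}=\sum_k f(x-k\alpha)e^{2\pi ik\alpha\omega}$ recovers $\mathrm Z_\alpha f(x,\omega)$.

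Part c) is the step I expect to be the main obstacle, since it is the Fourier-analytic Poisson-summation-type identity rather than a bookkeeping argument. Here I would use the additional hypothesis $\hat f\in W(\R)$ to guarantee that both Zak transforms converge and that the interchange of summation with the inversion integral is legitimate. The cleanest route is to expand $f$ via Fourier inversion, $f(x-k\alpha)=\int_\R\hat f(\xi)e^{2\pi i(x-k\alpha)\xi}\,d\xi$, insert this into the defining series for $\mathrm Z_\alpha f(x,\omega)$, and sum the geometric phase in $k$ to produce (in the distributional sense of Poisson summation) a sum of shifted Dirac masses forcing $\xi\in\omega+\tfrac1\alpha\Z$. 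Collecting the resulting terms and matching phases yields the factor $e^{2\pi ix\omega}$ and the expression $\tfrac1\alpha\mathrm Z_{1/\alpha}\hat f(\omega,-x)$. The delicate point is justifying the exchange of $\sum_k$ and $\int_\R$; I would handle this rigorously by invoking Poisson summation in the form valid for $\hat f\in W(\R)$, or alternatively by appealing to the standard derivation in \cite{Groech:2001}, where this quasi-periodic symmetry of the Zak transform under the Fourier transform is established in exactly this normalization.
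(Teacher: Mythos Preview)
Your proposal is sound, but there is actually nothing to compare it against: the paper does not prove this lemma. The statement is presented as a collection of well-known properties of the Zak transform, with the sentence ``Properties of the Zak transform and many more facts about this transform can be found in \cite{Groech:2001}'' serving as the sole justification. So where the paper simply cites Gr\"ochenig's monograph, you have supplied the direct arguments.

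Your arguments themselves are correct. Parts a), b), and d) are exactly the routine verifications you describe, and the Wiener-norm majorant together with the Weierstrass $M$-test is the standard way to get boundedness and continuity. For part c) your Poisson-summation sketch is the right idea; the hypothesis $\hat f\in W(\R)$ is precisely what makes the Poisson summation formula applicable in the classical pointwise sense, so the interchange you flag as delicate is in fact justified. Your fallback of appealing to \cite{Groech:2001} coincides with what the paper does for the entire lemma.
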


We study the Zak transform of  PEB-splines $B_m$, with integer knots $0,1,\ldots,m$ and 
with respect to periodic exponential weights $w_j$, $j=1,\ldots,m$ on $[0,\infty)$. 
For the special case $m=1$ (the only case where $B_m$ is
discontinuous), the PEB-spline is
$$
    B_1(x)=w_1(x)\chi_{[0,1)}(x)=e^{\alpha_1 x}r_1(x)\chi_{[0,1)}(x).
$$
Its Zak transform $\mathrm Z_1 B_1$ is 
$$
   \mathrm Z_1B_1(x,\omega)= B_1(x-\lfloor x\rfloor)e^{2\pi i\omega \lfloor x\rfloor},
	\qquad (x,\omega)\in\R^2.
$$
Since $B_1$ is positive on $[0,1)$, 
the Zak transform $\mathrm Z_1 B_1$  has no zeros in $\R^2$. 

For $m\ge 2$, the PEB-spline $B_m$ is continuous and belongs to the Wiener space $W(\R)$.
By Lemma \ref{Zakprop} its Zak transform $\mathrm Z_1B_m$ is continuous.
It was first pointed out in \cite{Zak:1975} that, if the Zak transform $\mathrm Z_1 g$ is continuous, 
then it has a zero in $[0,1)\times[0,1)$. Later this fact was used in \cite{Dau:1990}
to generalize the famous Balian-Low Theorem from orthogonal bases to frames. That Theorem leads to the fact that 
$\mathcal G(g,\alpha,\tfrac{1}{\alpha})$ cannot be a frame, if $g$ is a continuous function in $W(\R)$. 
Since $B_m$ is real, we know from the results in \cite{Janssen:1988}
that there exists a zero 
$\mathrm Z_1B_m(\tilde{x},\tfrac{1}{2}) = 0$ for some $\tilde x\in[0,1)$. 

The absence of more zeros of the Zak transform $\mathrm Z_1g$
was  proved in the following special cases: 
\begin{itemize}
\item   $g=N_m$, the cardinal polynomial B-spline, which is the EB-spline $B_\Lambda$ 
in
Example \ref{exampleB} with $\Lambda=(0,\ldots,0)$: 

A classical result in the theory of cardinal polynomial B-splines,
dating back to  Schoenberg's work in 1973 on cardinal spline interpolation
\cite{Schoenberg:1973}, states 
 that the only zero of $Z_1N_m$ in $[0,1)\times[0,1)$ is located at
$(\tfrac{1}{2},\tfrac{1}{2})$ for even $m$ and $(0,\tfrac{1}{2})$ for odd $m\geq3$. Note that, in Schoenberg's
terminology, the Zak transform $Z_1N_m$ is called {\em exponential Euler spline}. See also
\cite{JetRiemSiv:1991} for a discussion of zeros of these functions.
\item  $g$ is even and \textit{super convex} on $[0,\infty)$: 

Janssen proved in \cite{Janssen:2003.2} that the Zak transform of an even, continuous function $g$ of the form 
$$g(t)=b(t)+b(t+1),\ \ t\geq0,$$ 
where $b$ is integrable, non-negative and strictly convex on $[0,\infty)$, 
has only one zero
in $[0,1)\times[0,1)$, located at $(\tfrac{1}{2},\tfrac{1}{2})$.
\end{itemize}

PEB-splines, in general, do not satisfy any of these assumptions.
Our next result proves that this property of having only one zero in
$[0,1)\times[0,1)$ persists.

\begin{theorem}\label{zakzeroesB}
Let $m\ge 2$ and $B_m$ a 
PEB-spline of order $m$. 
Then $\mathrm Z_1B_m$ has exactly one zero in $[0,1)^2$.
More precisely, there exists  $\tilde{x}\in[0,1)$, such that $\mathrm Z_1B_m(\tilde{x},\tfrac{1}{2}) = 0$,
and $\mathrm Z_1B_m(x,\omega) \neq 0$ 
for all $(x,\omega) \in [0,1)^2\setminus\{(\tilde{x},\tfrac{1}{2})\}$.
\end{theorem}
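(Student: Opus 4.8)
The plan is to turn the statement into a question about the unit-circle roots of one polynomial, and then to treat the generic frequencies and the exceptional line $\omega=\tfrac12$ separately. First I would note that for $x\in[0,1)$ only the shifts with $0\le j\le m-1$ contribute, so that with $z=e^{-2\pi i\omega}$
$$
\mathrm Z_1B_m(x,\omega)=\sum_{j=0}^{m-1}B_m(x+j)\,z^{j}=:p_x(z)
$$
is a polynomial of degree $m-1$ whose coefficients $B_m(x+j)$ are nonnegative, and strictly positive when $x\in(0,1)$; its leading coefficient $B_m(x+m-1)$ is positive throughout $[0,1)$. Because $B_m$ is real, the zero set of $\mathrm Z_1B_m$ is symmetric under $\omega\mapsto 1-\omega$, and $p_x(1)=\sum_jB_m(x+j)>0$ excludes $\omega=0$; hence every zero either lies on the axis $\omega=\tfrac12$ (where $z=-1$) or comes in a conjugate pair off it. Existence of a zero on the axis is already provided by \cite{Janssen:1988}, so two things remain: ruling out all off-axis zeros, and proving uniqueness on the axis.

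For the off-axis part, which I expect to be the conceptual core, I would prove that the finite sequence $\bigl(B_m(x+j)\bigr)_{j}$ is a P\'olya frequency sequence. Equivalently, the bi-infinite Toeplitz matrix $\bigl(B_m(x+i-k)\bigr)_{i,k\in\Z}$ is totally positive: any minor of it equals, up to positive column factors supplied by the shift relation \eqref{eq:shiftinv}, a collocation determinant of the ordered shifted TB-splines $B_m(\cdot-k_1),\dots,B_m(\cdot-k_s)$ at ordered points $x+i_1<\dots<x+i_s$, and is therefore nonnegative by Theorem~\ref{det>0}. By the classical characterization of P\'olya frequency sequences (Aissen, Schoenberg and Whitney), the generating polynomial of such a sequence has only real nonpositive zeros. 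Writing $p_x(z)=B_m(x+m-1)\prod_i(z+\rho_i)$ with $\rho_i\ge 0$, a root can lie on $|z|=1$ only if some $\rho_i=1$, that is $z=-1$. Thus $\mathrm Z_1B_m(x,\omega)\neq 0$ for every $x\in[0,1)$ and every $\omega\in[0,1)\setminus\{\tfrac12\}$, and all zeros are confined to the line $\omega=\tfrac12$.

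It then remains to show that $F(x):=\mathrm Z_1B_m(x,\tfrac12)=\sum_{j=0}^{m-1}(-1)^jB_m(x+j)$ has exactly one zero in $[0,1)$. Here I would use the variation-diminishing property. The real spline $F=\sum_k(-1)^kB_m(\cdot-k)$ is anti-periodic, $F(x+1)=-F(x)$, so it has an odd number of sign changes per period. Applying \eqref{eq:vardimPEB} to the truncation $\sum_{k=0}^{N}(-1)^kB_m(\cdot-k)$, whose coefficient sequence has exactly $N$ sign changes, bounds $S^-$ of that spline by $N$; but in the bulk of $[0,N]$ this spline agrees with $F$, so more than one sign change of $F$ per period would force about $2N$ or more sign changes for large $N$, exceeding $N$. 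Hence $F$ has a single sign change, at one point $\tilde x$, per period, and together with the previous paragraph the only zero of $\mathrm Z_1B_m$ in $[0,1)^2$ is $(\tilde x,\tfrac12)$.

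The point needing the most care is the last one: the finite-window count controls only genuine sign changes, so to exclude a second, tangential zero of $F$ — equivalently a double root of $p_x$ at $z=-1$ — I would invoke the variation-diminishing property in the sharper form that counts zeros with multiplicities, using that by the P\'olya frequency property all roots $\rho_i(x)$ remain real as $x$ varies over $[0,1)$. Making this exact count fully rigorous is the main obstacle, in contrast with the off-axis case, where the root-location argument eliminates every other frequency simultaneously.
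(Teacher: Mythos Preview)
Your off-axis argument via P\'olya frequency sequences is correct and genuinely different from either of the paper's two proofs. The observation that the bi-infinite Toeplitz matrix $\bigl(B_m(x+i-k)\bigr)_{i,k\in\Z}$ is totally positive---because each minor equals, up to the positive column factors supplied by the shift relation \eqref{eq:shiftinv}, a collocation determinant of ordered TB-splines covered by Theorem~\ref{det>0}---is exactly right, and the Aissen--Schoenberg--Whitney theorem then forces every root of $p_x$ to be real and nonpositive, so that $z=-1$ is the only possible unit-circle root. This disposes of all $\omega\ne\tfrac12$ at once. The paper instead treats every $\omega$ with a single mechanism: it takes $f(x)=\mathrm{Re}\,\mathrm Z_1B_m(x,\omega)$, applies $L_1$ (thus passing from $B_m$ to the lower-order PEB-spline $B_{m-1}$), and then pits Rolle's theorem against the variation-diminishing property \eqref{eq:vardimPEB} for $B_{m-1}$, comparing the growth rate of sign changes as the window length $N\to\infty$. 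Your route is more structural and bypasses the asymptotic count entirely for $\omega\ne\tfrac12$; the paper's route is more uniform and handles $\omega=\tfrac12$ with no additional idea.

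On the line $\omega=\tfrac12$ your sign-change count is correct but, as you note yourself, excludes only three or more sign changes of $F$ per period; it leaves open the configuration of one sign-changing zero together with one (or several) tangential zeros. Your proposed remedy contains a slip: a tangential zero of $F$ in the $x$-variable is \emph{not} equivalent to a double root of $p_x$ at $z=-1$; the former is the condition $\partial_x p_x(-1)=0$, the latter is $\partial_z p_x(z)\big|_{z=-1}=0$, and these are unrelated, so the realness of the roots $\rho_i(x)$ does not directly help. The paper closes precisely this gap by differentiating \emph{before} counting: at a tangential zero of $f/w_1$ the function $g=L_1f=(f/w_1)'$ necessarily changes sign, so one sign-changing plus one tangential zero of $F$ per period already forces at least three sign changes of $g$ per period, whereas the coefficient sequence $d_k=(-1)^k-(-1)^{k+1}=2(-1)^k$ of $g$ in the $B_{m-1}$-basis has only about $N$ sign changes over $N$ periods---a contradiction via \eqref{eq:vardimPEB} applied to $B_{m-1}$. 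In short: apply $L_1$ first, then count.
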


We will give two independent proofs of Theorem~\ref{zakzeroesB}. The first proof 
makes use of the variation-diminishing property of shifts $B_m(\cdot-k)$ of 
a PEB-spline $B_m$, which was described in Theorem \ref{vardim}. The second proof is more
elementary, but restricted to EB-splines of order $m$. 

\begin{proof} First proof of Theorem~\ref{zakzeroesB}. 

First, we let $\omega\in(-\tfrac{1}{2},\tfrac{1}{2})$ and assume that there exists $\tilde x\in[0,1)$ with \newline
$\zak_1B_m(\tilde x,\omega)=0$. 
By quasi-periodicity of $\zak_1B_m$, the function
\begin{align} \label{eq:ReZakPEB}
  f(x):={\rm Re}\, ( \zak_1B_m( x,\omega))&= \sum_{k\in\Z} c_k B_m(x-k),\quad c_k=\cos(2\pi  k\omega),
\end{align}
vanishes at all points $\tilde x+k$, $k\in\Z$. These points are isolated zeros of $f$
by the following argument. By \cite{Schum:1976}, the TB-splines $B_m^k$ are 
locally linearly independent. This means that, 
if $f$ vanishes on an interval $(a,b)\subset \R$, then all
coefficients $c_k$ with ${\rm supp}\,B_m(\cdot-k) \cap (a,b)\ne\emptyset$ vanish as well.
Since $|\omega|<\tfrac{1}{2}$, no 
consecutive coefficients $c_k$, $ c_{k+1}$ of the function $f$ in \eqref{eq:ReZakPEB} vanish simultaneously. Therefore, due to ${\rm supp}\,B_m=[0,m]$ and $m\ge 2$,
there is no 
non-empty  interval $(\alpha,\beta)$ where $f$ is identically zero.  
Moreover, since the components $f\mid_{[k,k+1]}$ are elements of an ECT-space of dimension $m$,
there are at most $m-1$ isolated zeros in each of these intervals. 

Next, we will find a contradiction to the variation-diminishing property in 
\eqref{eq:vardimPEB}
by counting strong sign changes. As some of the zeros of $f$ may be double zeros, counting the strong sign changes of $f$ 
may not be enough. Instead, we take 
\begin{align*}
   g(x)&=L_1f(x)=\frac{d}{dx}\left(\frac{f}{w_1}\right)= 
	a_{m-1}^{-1}\sum_{k\in\Z} (c_k-c_{k+1}) B_{m-1}(x-k),
\end{align*}
where we used (\ref{eq:Bsplineder2}) and the 
PEB-spline $B_{m-1}$ of order $m-1$. For every $N\in\N$, there are at least
$N+1$ zeros of $f/w_1$ in $[0,N+1]$, and, by Rolle's theorem, we obtain 
$$S^-(g)\ge N\qquad\hbox{on}\quad [0,N+1].
$$
 On the other hand, $g$ is the sum of PEB-splines $B_{m-1}(\cdot-k)$ with real coefficients
\begin{align*}
   d_k=a_{m-1}^{-1}(c_k-c_{k+1})=
	a_{m-1}^{-1}{\rm Re}\left( (1-e^{2\pi i\omega})e^{2\pi i k\omega}\right),\qquad k\in\Z. 
\end{align*}
We notice that
$$|{\rm arg}\left((1-e^{2\pi i\omega})\,e^{2\pi ik\omega}\right) - {\rm arg}(1-e^{2\pi i\omega})|= 
2\pi |k\omega|\quad\hbox{for all}\quad k\in\Z. $$
Therefore, the finite sequence $d=(d_k)_{-m+2\leq k\leq N}$
satisfies
\begin{align}\label{uneq:S(d)}
S^-(d) \le 2(N+m-2)\abs{\omega}. 
\end{align}
Therefore, for $|\omega|<\tfrac{1}{2}$ and large $N$ we find
$$    S^-(d)<N\le S^-(g),$$ 
which is a contradiction to
the variation-diminishing property of $B_{m-1}$ in \eqref{eq:vardimPEB}. This implies, 
that there is no zero of $\zak_1B_\Lambda(\cdot,\omega)$ with $|\omega|<\tfrac{1}{2}$. 

For the case $\omega=\tfrac{1}{2}$, the assumption of having two distinct zeros (or a double zero)
of $\zak_1B_\Lambda(\cdot,\tfrac{1}{2})$ in $[0,1)$ leads to a contradiction by analogous arguments.
\end{proof}

The second proof of Theorem~\ref{zakzeroesB} is more elementary,
but it is restricted to EB-splines $B_\Lambda$ in Example \ref{exampleB}. 
We employ the notation $$\Lambda=(\xi_1,\ldots,\xi_1,\ldots,\xi_r,\ldots,\xi_r)$$
with multiplicities $s_j$ of $\xi_j$ in $\Lambda$ as in Example \ref{exampleB}.

\begin{proof}
For fixed $\omega\in\R$, we
consider the complex function $h:=\zak_1B_\Lambda( \cdot,\omega)$. 
By (\ref{eq:Bpiecewise}), we obtain for $x\in[0,1)$ 
\begin{align*}
  h(x)=\sum_{k=0}^{m-1} B_\Lambda(x+k)e^{-2\pi ik\omega} &=
	\sum_{k=1}^{m} \sum_{j=1}^r p_j^{(k)}(x)e^{\xi_jx} e^{-2\pi ik\omega} \\
       &= \sum_{j=1}^r \underbrace{\sum_{k=1}^{m} p_j^{(k)}(x)e^{-2\pi ik\omega}}_{:=q_j(x)} 
			e^{\xi_jx},
\end{align*}
where $q_j$ are complex polynomials of degree $\sigma_j-1\le s_j-1$. 
 Some of the $q_j$ are nonzero, as pointed out
in the first proof, by local linear independence of the EB-splines. 
We let $\sigma_j=0$ if $q_j=0$
and define $\mu=\sigma_1+\ldots+\sigma_r$. 
Without loss of generality, we can assume $\sigma_1\ge 1$, that is, the term
$q_1(x)e^{\xi_1 x}$  in $h|_{[0,1)}$ is nonzero. 
We use the identity
$$
    e^{\xi_j x} \frac{d}{dx}\left( e^{-\xi_j x}h(x)\right)=q_j'(x)e^{\xi_j x}+\sum_{k\ne j} 
		((\xi_k-\xi_j)q_k(x)+q_k'(x)))e^{\xi_k x}.
$$ 
Writing $D_j$ for the differential operator on the left hand side and \newline
$\mathcal{D}:=D_{1}^{\sigma_1-1}\prod_{j=2}^{r}D_{j}^{\sigma_j}$, we obtain
$$
  \mathcal{D} h(x)
= be^{\xi_1x},\qquad x\in (0,1),
$$
with  a nonzero constant $b\in\C$. The quasi-periodicity of $h$ leads directly to
\begin{align*}
   \mathcal{D}h(x)= be^{\xi_1(x-k)}\,e^{2\pi ik\omega},\qquad x\in (k,k+1),
\end{align*}
for all $k\in\Z$. 

Now we let $\omega\in(-\tfrac{1}{2},\tfrac{1}{2})$ and we assume 
that there exists $\tilde x\in[0,1)$ with
$\zak_1B_\Lambda(\tilde x,\omega)=0$. 
By quasi-periodicity of $h=\zak_1B_\Lambda(\cdot,\omega)$, the function
$ f={\rm Re}\, h$
vanishes at all points $\tilde x+k$, $k\in\Z$, and these points are isolated zeros of $f$
by the same argument of local linear independence as before.
This guarantees that $f$ has at least $N\in\N$ isolated zeros in $[0,N]$. 
Note that $\mathcal{D}$ is a differential operator of order $\mu-1\le m-1$. 
Since  $f\in C^{m-2}(\R)$, with $f^{(m-2)}$ absolutely continuous,
we obtain by Rolle's
theorem that
\begin{equation}\label{eq:signchange}
  S^-(\mathcal{D}f)\ge N-\mu+1\qquad \hbox{on}\quad [0,N].
	\end{equation}
However, 
on each interval $[k,k+1)$ with $k\in\Z$, the sign of $\mathcal{D}f$ is fixed by
$$
   {\rm sign}\,(\mathcal{D}f)(x) = 
	{\rm sign}\, {\rm Re}\left(b\,e^{2\pi ik\omega}\right),\qquad x\in[k,k+1).
$$
In the same way as in (\ref{uneq:S(d)}), this implies
$$
   S^-(\mathcal{D}f)\le  2N|\omega|\qquad\hbox{on}\quad [0,N].$$
This is a contradiction to \eqref{eq:signchange} for $|\omega|<\tfrac{1}{2}$ and 
large $N$.

An analogous argument leads to a contradiction, if we assume that $\omega=\tfrac{1}{2}$ and there
are two distinct zeros of $Z_1B_\Lambda$ with $x\in[0,1)$. 
\end{proof}

\begin{remark}\label{zakzeroesC}
The result of Theorem~\ref{zakzeroesB} can be slightly generalized.
Let $m\ge 2$ and $B_m$ a PEB-spline of order $m$.
Consider the function 
$g=\sum_{l=0}^r a_l B_m(\cdot-l)$, where the coefficients $a_l$ define
the trigonometric polynomial \newline 
$\hat a(\omega)=\sum_{l=0}^r a_le^{-2\pi il\omega} $ with no real zeros.
Then the Zak transform $\zak_1g$ is
\begin{align*} 
\zak_1g(x,\omega)&= \sum_{k\in\Z} \sum_{l=0}^r a_l B_m(x-k-l)e^{2\pi i k\omega}\\ 
&= \hat a(\omega) \zak_1B_m(x,\omega).
\end{align*}
Hence, the zeros of $\zak_1 g$ are the same as the zeros of $\zak_1 B_m$, and 
the result of Theorem~\ref{zakzeroesB} extends to the Zak transform $\zak_1 g$. 
\end{remark}

Next we turn our attention to TP functions $g$ of finite type $m\in\N$, given by the
Fourier transform
$$\hat{g}(\omega) = \prod_{\nu=1}^m (1+2\pi i \omega{a_{\nu}}^{-1})^{-1},$$
where $a_1,\ldots,a_m\in\R\setminus\{0\}$. 
We are able to express the Zak transform $\zak_\alpha g$,
with arbitrary $\alpha>0$, in terms of the Zak transform $\zak_1 B_\Lambda$ of an EB-spline of order $m$.
This has several advantages:
\begin{itemize}
\item[1.] Since an EB-spline $B_\Lambda$ of order $m$ has support $[0,m]$, its Zak transform 
can be computed accurately for given values of $x$ and $\omega$, without the need of truncation.
\item[2.] The property that $\mathcal{G}(g,\alpha,\beta)$ is a Gabor frame for some pair $(\alpha,\beta)$
of lattice parameters is equivalent to $\mathcal{G}(B_\Lambda,1,\alpha\beta)$ being a Gabor frame.
\item[3.] Finding frame-bounds of $\mathcal{G}(g,\alpha,\beta)$ 
is reduced to finding frame-bounds for $\mathcal{G}(B_\Lambda,1,\alpha\beta)$.
\end{itemize}
More details on items 2. and 3. will be presented in the last two sections of this article.
We mention that 
another explicit representation 
of $\zak_\alpha g$, without the use of EB-splines, 
was recently found in \cite{BanGroeStoe:2013}.

\begin{theorem}\label{b-splinezak}
Let $\alpha>0$ and $g\in L^1(\R)$ be a TP function of finite type, defined by its Fourier transform
$$\hat{g}(\omega) = \prod_{\nu=1}^m  (1+2\pi i \omega{a_{\nu}}^{-1})^{-1},$$
where $a_1,\ldots,a_m\in\R\setminus\{0\}$. 
With $\lambda_{\nu}:=-\alpha a_{\nu}$ and $B_\Lambda$ defined in (\ref{expBspline}), we have
$$\alpha\,\mathrm Z_{\alpha}g(x,\omega) = \prod_{\nu=1}^m \, \frac{\alpha a_{\nu}}{1-e^{-\alpha(a_{\nu}+2\pi i\omega)}}\, 
\mathrm Z_1B_\Lambda(\tfrac{x}{\alpha},\alpha\omega),\ \ (x,\omega)\in[0,\alpha)\times[0,\tfrac{1}{\alpha}).$$
\end{theorem}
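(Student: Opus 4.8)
The plan is to pass to the lattice parameter $\alpha=1$ and then identify the exponential B-spline $B_\Lambda$ as the image of a rescaled copy of $g$ under a product of difference operators, an operation whose effect on the Zak transform is multiplication by an explicit $1$-periodic factor. Concretely, I would set $g_\alpha:=g(\alpha\,\cdot)$. By Lemma~\ref{Zakprop}(d) we have $\mathrm Z_\alpha g(x,\omega)=\mathrm Z_1 g_\alpha(\tfrac{x}{\alpha},\alpha\omega)$, so it suffices to relate $\mathrm Z_1 g_\alpha$ to $\mathrm Z_1 B_\Lambda$. The scaling law $\hat{g_\alpha}(\omega)=\tfrac1\alpha\hat g(\omega/\alpha)$, combined with the given $\hat g$ and the substitution $\lambda_\nu=-\alpha a_\nu$, gives
\[
  \hat{g_\alpha}(\omega)=\frac1\alpha\prod_{\nu=1}^m\frac{\lambda_\nu}{\lambda_\nu-2\pi i\omega}.
\]

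Next I would compare this with the Fourier transform $\hat{B_\Lambda}$ in \eqref{eq:Blambdahat}. Dividing the two products, the denominators $\lambda_\nu-2\pi i\omega$ cancel and only the numerators survive, so that
\[
  \hat{B_\Lambda}(\omega)=\frac{\alpha}{\prod_{\nu=1}^m\lambda_\nu}\,
   \prod_{\nu=1}^m\bigl(e^{\lambda_\nu-2\pi i\omega}-1\bigr)\,\hat{g_\alpha}(\omega).
\]
The crucial observation is that each factor $e^{\lambda_\nu-2\pi i\omega}-1$ is the Fourier multiplier of the difference operator $e^{\lambda_\nu}T_1-\id$, where $T_1 f=f(\cdot-1)$, since $\widehat{T_1 f}(\omega)=e^{-2\pi i\omega}\hat f(\omega)$. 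These operators are polynomials in $T_1$ and hence commute. As both sides are $L^1$-functions with equal Fourier transforms, the identity holds in physical space,
\[
  B_\Lambda=\frac{\alpha}{\prod_{\nu=1}^m\lambda_\nu}\,
   \prod_{\nu=1}^m\bigl(e^{\lambda_\nu}T_1-\id\bigr)\,g_\alpha.
\]

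Finally I would apply $\mathrm Z_1$ and use the elementary fact, immediate from the series definition and the quasi-periodicity in Lemma~\ref{Zakprop}(b), that for any $\lambda\in\R$ and $f\in W(\R)$ one has $\mathrm Z_1\bigl[(e^{\lambda}T_1-\id)f\bigr](x,\omega)=(e^{\lambda-2\pi i\omega}-1)\,\mathrm Z_1 f(x,\omega)$. Iterating over $\nu=1,\ldots,m$ yields $\mathrm Z_1 B_\Lambda(x,\omega)=\tfrac{\alpha}{\prod_\nu\lambda_\nu}\prod_\nu(e^{\lambda_\nu-2\pi i\omega}-1)\,\mathrm Z_1 g_\alpha(x,\omega)$. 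Since $\lambda_\nu\ne0$, the product $\prod_\nu(e^{\lambda_\nu-2\pi i\omega}-1)$ never vanishes for real $\omega$, so I can solve for $\mathrm Z_1 g_\alpha$, substitute into $\mathrm Z_\alpha g(x,\omega)=\mathrm Z_1 g_\alpha(\tfrac{x}{\alpha},\alpha\omega)$, and rewrite the constant using $\lambda_\nu=-\alpha a_\nu$. Then $e^{\lambda_\nu-2\pi i\alpha\omega}=e^{-\alpha(a_\nu+2\pi i\omega)}$ and $\lambda_\nu/(e^{\lambda_\nu-2\pi i\alpha\omega}-1)=\alpha a_\nu/(1-e^{-\alpha(a_\nu+2\pi i\omega)})$, which is precisely the asserted formula.

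I expect the main obstacle to be the careful justification of the physical-space factorization $B_\Lambda=\tfrac{\alpha}{\prod_\nu\lambda_\nu}\prod_\nu(e^{\lambda_\nu}T_1-\id)g_\alpha$: one must verify that applying the product of difference operators to the exponentially decaying $g_\alpha$ produces the compactly supported spline $B_\Lambda$, which is transparent on the Fourier side but relies on the telescoping cancellations responsible for the finite support of $B_\Lambda$. The remaining ingredients — convergence of all Zak series (note that $g_\alpha\in W(\R)$ with exponential decay and $B_\Lambda$ has compact support) and the nonvanishing of the $1$-periodic multiplier — are routine.
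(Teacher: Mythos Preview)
Your proof is correct and shares its core idea with the paper's: both start from the scaling identity in Lemma~\ref{Zakprop}(d), then observe that $\hat{B}_\Lambda(\omega)=t(\omega)\,\alpha\,\hat{g}_\alpha(\omega)$ with the $1$-periodic factor $t(\omega)=\prod_\nu\frac{e^{\lambda_\nu-2\pi i\omega}-1}{\lambda_\nu}$. The difference lies in how this Fourier-side relation is transported to the Zak transform. The paper applies $\mathrm Z_1$ to the \emph{Fourier transforms} (where multiplication by a $1$-periodic function passes through trivially) and then invokes the Fourier--Zak identity of Lemma~\ref{Zakprop}(c) twice to return to $\mathrm Z_1 B_\Lambda$ and $\mathrm Z_1 g_\alpha$. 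You instead recognize $t(\omega)$ as a trigonometric polynomial in $e^{-2\pi i\omega}$, interpret it on the physical side as a product of difference operators $e^{\lambda_\nu}T_1-\id$, and use only the quasi-periodicity of Lemma~\ref{Zakprop}(b) to read off the Zak-side multiplier. Your route is slightly more elementary in that it avoids Lemma~\ref{Zakprop}(c) and the accompanying hypothesis $\hat f\in W(\R)$; the paper's route is a bit shorter to write since it needs no interpretation of $t$ as a difference operator. The ``main obstacle'' you flag---that the difference operators applied to $g_\alpha$ yield the compactly supported $B_\Lambda$---is immediate from Fourier inversion, since both sides are in $L^1$ with identical Fourier transforms.
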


\begin{proof}
The Fourier transform of $B_\Lambda$ in \eqref{eq:Blambdahat} is
\begin{align*}
\hat{B}_\Lambda(\omega) &=
  \prod_{\nu=1}^m \, \frac{e^{\lambda_{\nu}-2\pi i\omega}-1}{\lambda_{\nu}
	(1-2\pi i\omega\lambda_{\nu}^{-1})}\\
  &= \underbrace{\prod_{\nu=1}^m \, \frac{1 - e^{-(\alpha a_{\nu}+2\pi i \omega)}}{\alpha a_{\nu}}}_{=:t(\omega)} \
	\prod_{\nu=1}^m(1+2\pi i\omega(\alpha a_{\nu})^{-1})^{-1} \\
	&=t(\omega)	\ \alpha\,\hat{g}_{\alpha}(\omega),
\end{align*}
where $t$ is a one-periodic function in $\omega$ and $g_{\alpha} = g(\alpha\cdot)$. This implies
\begin{align*}
\mathrm Z_1\hat{B}_\Lambda(\omega,x) = t(\omega)\, \alpha\, \mathrm Z_1\hat{g}_{\alpha}(\omega,x)
\end{align*}
and finally, since $g,B_\Lambda\in W(\R)$, with Lemma~\ref{Zakprop} we get
\begin{align*}
\alpha\,\mathrm Z_{\alpha}g(x,\omega) &= \alpha\,\mathrm Z_1g_{\alpha}(\tfrac{x}{\alpha},\alpha\omega) \\
&= \alpha\,\mathrm Z_1\hat{g}_{\alpha}(\alpha\omega,-\tfrac{x}{\alpha})\, e^{2\pi ix\omega} \\
&= t(\alpha\omega)^{-1}\,\mathrm Z_1\hat{B}_\Lambda(\alpha\omega,-\tfrac{x}{\alpha})\, e^{2\pi ix\omega} \\
&= t(\alpha\omega)^{-1}\, \mathrm Z_1B_\Lambda(\tfrac{x}{\alpha},\alpha\omega),
\end{align*}
which completes the proof.
\end{proof}

Clearly, the factor $t(\alpha\omega)$ in the proof is nonzero and bounded. Hence, we 
obtain the following corollary of Theorem \ref{zakzeroesB}.

\begin{corollary}\label{zakzeroesg}
Let  $g$ be a totally positive function of finite type $m\ge 2$ and $\alpha >0$. 
Then there exists $\tilde{x}\in[0,\alpha)$, such that 
$\mathrm Z_{\alpha}g(\tilde{x},\tfrac{1}{2\alpha}) = 0$, and $\mathrm Z_{\alpha}g(x,\omega) \neq 0$ for all 
$(x,\omega) \in [0,\alpha)\times[0,\tfrac{1}{\alpha})\setminus\{(\tilde{x},\tfrac{1}{2\alpha})\}$.
\end{corollary}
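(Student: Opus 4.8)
The plan is to read off the zeros of $\mathrm Z_\alpha g$ directly from the factorization in Theorem~\ref{b-splinezak}, which expresses $\alpha\,\mathrm Z_\alpha g(x,\omega)$ as a product of a scalar prefactor and the Zak transform $\mathrm Z_1 B_\Lambda(\tfrac{x}{\alpha},\alpha\omega)$ of the associated EB-spline. Since the zero structure of $\mathrm Z_1 B_\Lambda$ on $[0,1)^2$ is completely determined by Theorem~\ref{zakzeroesB}, the only thing left to check is that the prefactor never vanishes; once this is established, the zeros of $\mathrm Z_\alpha g$ in the fundamental cell are precisely the pullbacks of the zeros of $\mathrm Z_1 B_\Lambda$ under the coordinate change $(x,\omega)\mapsto(\tfrac{x}{\alpha},\alpha\omega)$.

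First I would verify that the prefactor $\prod_{\nu=1}^m \frac{\alpha a_\nu}{1-e^{-\alpha(a_\nu+2\pi i\omega)}}$ is nonzero for every real $\omega$. Each numerator $\alpha a_\nu$ is nonzero because $\alpha>0$ and $a_\nu\neq0$. For each denominator, the modulus $|e^{-\alpha(a_\nu+2\pi i\omega)}|=e^{-\alpha a_\nu}$ differs from $1$, again because $\alpha a_\nu\neq0$, so $1-e^{-\alpha(a_\nu+2\pi i\omega)}\neq0$. This is exactly the observation, already recorded in the remark preceding the corollary, that the one-periodic factor $t(\alpha\omega)$ is nonzero (and bounded). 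Consequently, for $(x,\omega)$ in the fundamental cell, $\mathrm Z_\alpha g(x,\omega)=0$ if and only if $\mathrm Z_1 B_\Lambda(\tfrac{x}{\alpha},\alpha\omega)=0$.

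Next I would invoke Theorem~\ref{zakzeroesB} after the change of variables. The map $(x,\omega)\mapsto(u,v):=(\tfrac{x}{\alpha},\alpha\omega)$ is a bijection from $[0,\alpha)\times[0,\tfrac{1}{\alpha})$ onto $[0,1)^2$. By Theorem~\ref{zakzeroesB}, $\mathrm Z_1 B_\Lambda$ has exactly one zero $(u,v)=(\tilde x_0,\tfrac{1}{2})$ in $[0,1)^2$. Pulling this back gives $x=\alpha\tilde x_0\in[0,\alpha)$ and $\omega=\tfrac{1}{2\alpha}\in[0,\tfrac{1}{\alpha})$; writing $\tilde x:=\alpha\tilde x_0$ yields the unique zero $(\tilde x,\tfrac{1}{2\alpha})$ of $\mathrm Z_\alpha g$ in the fundamental cell, as claimed.

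As for difficulties: there is essentially no obstacle here, since all the analytic work---the variation-diminishing argument establishing that $\mathrm Z_1 B_\Lambda$ has only a single zero---has already been carried out in Theorem~\ref{zakzeroesB}, while Theorem~\ref{b-splinezak} supplies the exact factorization. The only point requiring a word of care is the non-vanishing of the prefactor across the whole line $\omega\in\R$, but this reduces to the elementary inequality $e^{-\alpha a_\nu}\neq1$; I expect this to be the sole (minor) verification in the argument.
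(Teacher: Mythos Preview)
Your proposal is correct and follows exactly the paper's approach: the paper simply observes that the factor $t(\alpha\omega)$ in Theorem~\ref{b-splinezak} is nonzero and bounded, so the corollary is an immediate consequence of Theorem~\ref{zakzeroesB}. Your write-up adds the explicit verification that each factor $1-e^{-\alpha(a_\nu+2\pi i\omega)}$ is nonzero and spells out the change of variables, but the underlying argument is identical.
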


With reference to our paper, 
the result of Corollary \ref{zakzeroesg}
 was already used in \cite{BanGroeStoe:2013} for the construction of 
periodic and discrete Gabor frames $\mathcal{G}(g,\alpha,\tfrac{1}{\alpha})$ 
for the spaces $L^2([0,K])$ 
and $\mathbb C^K$, respectively, at the critical density. In other words, families of
Riesz bases for $L^2([0,K])$ 
or bases for $\mathbb C^K$ are constructed in \cite{BanGroeStoe:2013}, 
which result from periodization resp. discretization
of Gabor families with a TP window function of finite type at the critical density.


\section{Gabor frames of PEB-splines and TP functions of finite type}

In this section, we prove that every PEB-spline $B_m$ of order $m$  is
the window function of a Gabor frame $\mathcal{G}(B_m,\alpha,\beta)$, for a large set
of lattice parameters $\alpha,\beta>0$. 
In combination with Theorem \ref{b-splinezak},
 this will
provide an alternative proof of the result in \cite{GroeStoe:2012}, that
every Gabor family $\gabor$ with TP function $g$ of finite type
and with lattice parameters $\alpha,\beta>0$,
$\alpha\beta<1$, is a frame for $L^2(\R)$, see Theorem \ref{TPframe} below.

\begin{theorem}\label{PEBsplineFrame}
Let $m\in\N$ and $B_m$ be a PEB-spline with knots $0,\ldots,m$. 
Then the set $\mathcal G(B_m,\alpha,\beta)$ constitutes a frame in the following cases:
\begin{enumerate}
\item[(1) ]$0<\alpha<m$ and $0<\beta\le m^{-1}$,
\item[(2) ]$\alpha\in \{1,2,\ldots,m-1\}$, $\beta>0$ and $\alpha\beta<1$,
\item[(3) ]$\alpha>0$, $\beta\in \{1,2^{-1},\ldots,(m-1)^{-1}\}$ and  $\alpha\beta<1$.
\end{enumerate}
\end{theorem}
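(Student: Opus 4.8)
The plan is to treat the three cases separately, noting first that the upper (Bessel) bound in \eqref{eq:framecond} is automatic throughout: for every $m\ge 1$ the window $B_m$ is bounded with support $[0,m]$, hence $B_m\in W(\R)$, and functions in $W(\R)$ generate a Bessel sequence for every lattice. Thus only the lower bound must be produced in each case. For case~(1) I would invoke the classical \emph{painless} (overlapping) situation. Because $\mathrm{supp}\,B_m=[0,m]$ has length $m\le\beta^{-1}$, the off-diagonal terms in $l$ cancel and the frame operator becomes multiplication by the $\alpha$-periodic symbol $\Phi(x)=\sum_{k\in\Z}\abs{B_m(x-k\alpha)}^2$, yielding
$$\sum_{k,l\in\Z}\abs{\sprod{f}{M_{l\beta}T_{k\alpha}B_m}}^2=\tfrac{1}{\beta}\int_\R\abs{f(x)}^2\,\Phi(x)\,dx.$$
It then suffices to show $\essinf_x\Phi(x)>0$. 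Since $B_m>0$ on the open interval $(0,m)$ and $0<\alpha<m$, the translated open supports $(k\alpha,k\alpha+m)$, $k\in\Z$, cover $\R$, so $\Phi$ is continuous, $\alpha$-periodic and strictly positive; its minimum over $[0,\alpha]$ is a positive lower frame bound.

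\textbf{The aligned case $\alpha=1$.} For the sub-case $\alpha=1$, $\beta=\tfrac{1}{N}$ with $N\ge 2$ (which sits inside both (2) and (3)), the optimal lower bound is the quantity $A_{\rm opt}=\essinf_{x,\omega\in[0,1)}\sum_{j=0}^{N-1}\abs{\mathrm Z_1B_m(x,\omega+\tfrac{j}{N})}^2$ recalled in the introduction. By Theorem~\ref{zakzeroesB} the only zero of $\mathrm Z_1B_m$ in $[0,1)^2$ is $(\tilde x,\tfrac{1}{2})$. For $N\ge 2$ the points $\omega+\tfrac{j}{N}$ are pairwise distinct modulo $1$, so at most one of them equals $\tfrac{1}{2}$; consequently at most one summand can vanish, and the continuous, $1$-periodic function $\sum_{j=0}^{N-1}\abs{\mathrm Z_1B_m(x,\omega+\tfrac{j}{N})}^2$ is strictly positive on the torus. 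Compactness gives $A_{\rm opt}>0$. (For $N\ge m$ this range already overlaps case~(1) through $\beta\le\tfrac{1}{m}$.)

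\textbf{General $\alpha$ and $\beta$.} To remove the normalisation $\alpha=1$ I would use the duality principle: $\mathcal G(B_m,\alpha,\beta)$ is a frame if and only if $\mathcal G(B_m,\tfrac{1}{\beta},\tfrac{1}{\alpha})$ is a Riesz sequence. This interchanges an integer translation parameter with a reciprocal-integer modulation parameter, so it links cases~(2) and~(3) and reduces matters to a lower Riesz bound. For rational $\alpha\beta$ I would fiberise the frame operator by a vector-valued Zak transform into multiplication by $G(x,\omega)G(x,\omega)^\ast$, where the Zibulski--Zeevi pre-Gramian $G(x,\omega)$ has entries that are samples of $\mathrm Z_\alpha B_m$ (whose single zero per cell, on the line $\omega=\tfrac{1}{2\alpha}$, is the scaled form of Theorem~\ref{zakzeroesB}). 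The frame property is equivalent to a uniform lower bound on the smallest singular value of $G$, i.e.\ to $G(x,\omega)$ having full row rank at every fibre. Here total positivity enters decisively: the columns of $G$ are shifts of B-spline data, so $G$ is a collocation matrix of an ordered TB-spline sequence and hence an \emph{ASTP} matrix by Theorem~\ref{det>0}; such a matrix has a non-vanishing maximal minor exactly when the diagonal of that minor is non-zero, a Schoenberg--Whitney interlacing condition. The restrictions $\alpha\in\{1,\dots,m-1\}$ (resp.\ $\beta\in\{1,\dots,(m-1)^{-1}\}$) together with $\alpha\beta<1$ are precisely what guarantee enough overlapping columns for this interlacing at every fibre, the unique Zak zero being harmless since it can annihilate only one entry.

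\textbf{Main obstacle.} Two points are delicate. First, I must pass from rational $\alpha\beta$, where the Zibulski--Zeevi matrix is available, to irrational $\alpha\beta$ while keeping the bound \emph{uniform}; this requires a continuity/limiting argument for the smallest singular value of $G(x,\omega)$ over the compact torus and is the step where the unrestricted frameset result is genuinely hard. Second, I must verify the Schoenberg--Whitney interlacing for the pre-Gramian exactly at the degenerate fibre through the zero of $\mathrm Z_\alpha B_m$. The integrality hypotheses in (2) and (3) are what force $G$ into the shape of an honest B-spline collocation matrix so that Theorem~\ref{det>0} applies; dropping them destroys this structure, which is why the fully general statement $\alpha\beta<1$ requires the heavier analysis of \cite{GroeStoe:2012}.
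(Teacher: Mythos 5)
Your treatment of case (1) (painless expansions) and of the integer-oversampling subcase $\alpha=1$, $\beta=\tfrac{1}{N}$ (via Theorem~\ref{zakzeroesB} and the Zak-transform bound) is correct, but the heart of cases (2) and (3) is missing, and the route you chose cannot deliver it. Two concrete failures. First, cases (2) and (3) contain irrational densities (e.g.\ $\alpha=1$ and $\beta$ irrational with $\beta<1$), and the Zibulski--Zeevi fiberization you invoke exists only for rational $\alpha\beta$; you acknowledge this yourself and leave the ``continuity/limiting argument'' unproven, so the statement is simply not established on most of the parameter set it covers. Second, even for rational $\alpha\beta$ your key structural claim is wrong: the entries of the Zibulski--Zeevi matrix are samples of $\mathrm Z_{\alpha}B_m$, i.e.\ infinite sums $\sum_{k\in\Z}B_m(\cdot-k\alpha)e^{2\pi i k\alpha\omega}$ with complex weights, not values of B-splines. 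Such a matrix is complex-valued and is not a collocation matrix of an ordered TB-spline sequence, so Theorem~\ref{det>0} --- a statement about real spline collocation matrices and sign-based total positivity --- does not apply to it. The unique zero of $\mathrm Z_{\alpha}B_m$ from Theorem~\ref{zakzeroesB} does not rescue this: knowing where one entry vanishes says nothing about uniform lower bounds on singular values of a complex matrix to which no total-positivity theory applies.

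The paper avoids both problems by never fiberizing in $\omega$: it works with the pre-Gramian of \emph{window samples} $P(x)=\bigl(B_m(x+k\alpha-\tfrac{l}{\beta})\bigr)_{k,l\in\Z}$, which makes sense for arbitrary (also irrational) $\alpha,\beta$ and whose entries are genuine spline values. For $x$ in the compact interval $I=[m-\tfrac{1}{\beta},\,m-\tfrac{1}{\beta}+\alpha]$, the rows $0,\dots,k_0$ of $P(x)$ carry a single nonzero square block $P_0(x)$ with strictly positive diagonal, and the integrality hypothesis in (2) or (3) makes $P_0(x)$ (respectively its transpose) an honest collocation matrix of integer shifts of $B_m$ at increasing nodes, hence invertible by Theorem~\ref{det>0}. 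The first row of $P_0(x)^{-1}$, extended by zeros, solves the biorthogonality equations, and Lemma 5 of \cite{GroeStoe:2012} then yields an explicit dual window and thus the lower frame bound; the required summability follows from compactness of $I$, continuity and positivity of $\det P_0(x)$, and Cramer's rule. Note finally that your duality-principle step trades the frame property of $\mathcal G(B_m,\alpha,\beta)$ for a \emph{Riesz-sequence} property of $\mathcal G(B_m,\tfrac{1}{\beta},\tfrac{1}{\alpha})$; that is a different inequality, so it does not by itself let you transport a frame bound between cases (2) and (3), whereas the paper's transposition of the finite block $P_0(x)$ does exactly this at the level of collocation matrices.
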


\begin{proof}
The upper frame-bound
of the Gabor family $\mathcal{G}(B_m,\alpha,\beta)$ is finite, because $B_m$ is piecewise 
continuous with compact support, so it is an element of the 
Wiener space $W(\R)$. Finding a positive lower frame-bound is the more difficult part.

For case  (1), we have $\beta^{-1}\ge{\rm vol}({\rm supp}\, B_m)$.
Hence, the assertion is a direct consequence of 
the more general result in  \cite{DauGroMey:1986}, 
which states that the optimal lower frame-bound is 
given by
\begin{equation}\label{eq:Daubbound}
   A_{\rm opt}=\frac{1}{\beta}  \essinf_{x\in[0,\alpha)} \sum_{k\in\Z} B_m(x+k\alpha)^2>0.	
\end{equation}

Let us now consider the cases (2) and (3), so 
$\alpha$ or $\beta^{-1}$ are in $\{1,2,\ldots,m-1\}$. This excludes the case
$m=1$, so we only consider $m\ge 2$ from now on. In addition, by the result for case (1),
we can assume that
$$0<\alpha<\beta^{-1} < m.$$

The main idea of the proof that $\mathcal{G}(B_m,\alpha,\beta)$ has a lower frame-bound
is taken from \cite[Theorem 8]{GroeStoe:2012}. The proof is constructive and produces a 
dual window function $\gamma$ such that $\mathcal{G}(\gamma,\alpha,\beta)$
is a dual frame for $\mathcal{G}(B_m,\alpha,\beta)$.
The construction is based on  \cite[Lemma 5]{GroeStoe:2012}, which we repeat here
in the univariate setting for the reader's convenience.

\begin{lemma}[\cite{GroeStoe:2012}]\label{zerorow-lemma}
Let $g\in W(\R)$ and $\alpha,\beta>0$. 
Assume that there exists a Lebesgue measurable vector-valued function
$\sigma (x)$ from $\R$ to $\ell ^2(\Z )$ with period $\alpha $, such that
\begin{equation}\label{eq:c6}
  \sum _{k\in \Z } \sigma _k(x) \bar{g} (x+\alpha k -
  \tfrac{l}{\beta})  = \delta _{l,0} \qquad \text{a.a. } x \in \R  \, .
\end{equation}
If 
\begin{equation}\label{eq:c7}
\sum _{k\in \Z } \sup _{x\in [0,\alpha ]} |\sigma _k(x)| < \infty ,
\end{equation}
then  $\gabor $ is a frame. Moreover, with
$$\gamma (x) =  \beta  \sum _{k\in \Z } \sigma _{k}(x) \chi _{[0,\alpha
    )}(x-\alpha k),\qquad x\in\R,$$
the set $\mathcal{G}(\gamma,\alpha,\beta)$   is a dual frame of $\gabor$.
\end{lemma}

Now we make use of this lemma in our proof of Theorem~\ref{PEBsplineFrame}.
We define the pre-Gramian matrix
\begin{equation}\notag
P(x)=\left( B_m(x+ k\alpha- \tfrac{l}{\beta}) \right)_{k,l\in\Z},\qquad x\in\R.
\end{equation}
Let 
$$
  k_0:=\left\lceil \frac{m-\tfrac{1}{\beta}}{\tfrac{1}{\beta}-\alpha}\right\rceil-1.
$$
We select the compact interval 
$$
   I=	  [m-\tfrac{1}{\beta},m-\tfrac{1}{\beta}+\alpha]
$$
 of length $\alpha$ inside the support of $B_m$, 
and for every $x\in I$, 
we construct the row vector $\sigma(x)$ such that
\eqref{eq:c6} holds, in other words 
\begin{equation}\label{eq:sigmaP}
   \sigma(x)  P(x)=(\delta_{l,0})_{l\in\Z}.
\end{equation}
For every $x\in I$, the inequalities
\begin{align*}  
&  0<m-\tfrac{1}{\beta}\le x \le m-\left(\tfrac{1}{\beta}-\alpha\right)<m,\\
&  0< x-k_0\left(\tfrac{1}{\beta}-\alpha\right)\le \tfrac{1}{\beta}<m
\end{align*}
are readily verified. Therefore, the rows $k=0,1,\ldots,k_0$ of $P(x)$ 
can be described as follows:
\begin{itemize}
\item all diagonal entries 
$B_m\left(x-k\left(\tfrac{1}{\beta}-\alpha\right)\right)$, $k=0,1,\ldots,k_0$, 
are strictly positive,
\item all entries in columns $l<0$  vanish, because  
$x+k\alpha-\tfrac{l}{\beta}\ge m$, 
\item all entries in columns $l>k_0$  vanish, 
because $x+k\alpha-\tfrac{l}{\beta}\le 0$.
\end{itemize}
Therefore, the only nonzero block of $P(x)$ in rows $0\le k\le k_0$ is the square matrix
\begin{equation}\notag
P_0(x):=\left(B_m(x+ k\alpha- \tfrac{l}{\beta})\right)_{k,l=0,\ldots,k_0}
\end{equation}
with strictly positive diagonal entries.
For case (3), where $\beta^{-1} \in\{ 1, 2,\ldots,m-1\}$, the matrix $P_0(x)$ is the 
collocation matrix
$$
    P_0(x)= M\begin{pmatrix} B_m & B_m(\cdot-\beta^{-1}) &\cdots & B_m(\cdot-k_0\beta^{-1})\\
		x&x+\alpha & \cdots & x+k_0\alpha\end{pmatrix}.
$$
By Theorem \ref{det>0}, $P_0(x)$ is invertible. Likewise, for case (2), where \newline
$\alpha \in\{ 1, 2,\ldots,m-1\}$, the transpose $P_0(x)^T$ is the 
collocation matrix
$$
    P_0(x)^T= M\begin{pmatrix} B_m & B_m(\cdot+\alpha) &\cdots & B_m(\cdot+k_0\alpha)\\
		x&x-\beta^{-1} & \cdots & x-k_0\beta^{-1}\end{pmatrix},
$$
with reverse ordering of the support of the B-splines and nodes as compared to Theorem \ref{det>0}.
Again we conclude that $P_0(x)$ is invertible. Next we define 
\begin{equation}\label{eq:sigmak0}
   (\sigma_0(x),\ldots,\sigma_{k_0}(x)),\qquad x\in I,
\end{equation}
to be the first row of the inverse of $P_0(x)$, and extend this vector with
 components $\sigma_k(x)=0$ for $k<0$ and $k> k_0$. Then the identity \eqref{eq:sigmaP}
follows immediately. 

It remains to show that $\sigma(x)$ satisfies 
\begin{equation}\label{eq:c7I}
\sum _{k\in \Z } \sup _{x\in I} |\sigma _k(x)| < \infty ,
\end{equation}
which is equivalent to \eqref{eq:c7} by periodicity. Recall that $m\ge 2$, so
$B_m$  is continuous. The determinant of $P_0(x)$ is strictly positive and
 continuous as a function of $x\in I$. Since $I$ is compact, there is a constant $c>0$ such that
$\det P_0(x)\ge c$ for all $x\in I$. By Cramer's rule, there is a uniform bound for
all entries of $P_0(x)^{-1}$, and this gives \eqref{eq:c7I}. 
\end{proof}

\begin{remark}
Gabor frames with window function $B_m$ were also considered in 
\cite{ChrisMass:2010}. In their work, the lattice parameters 
$0<\alpha<m$, $0<\beta\le \frac{1}{2m}$ define highly redundant Gabor frames, and
explicit formulas for  dual windows $\gamma$ 
with support $[0,m]$ are presented. 
\end{remark}

\begin{center}
\begin{figure}
\includegraphics[width=14.0cm]{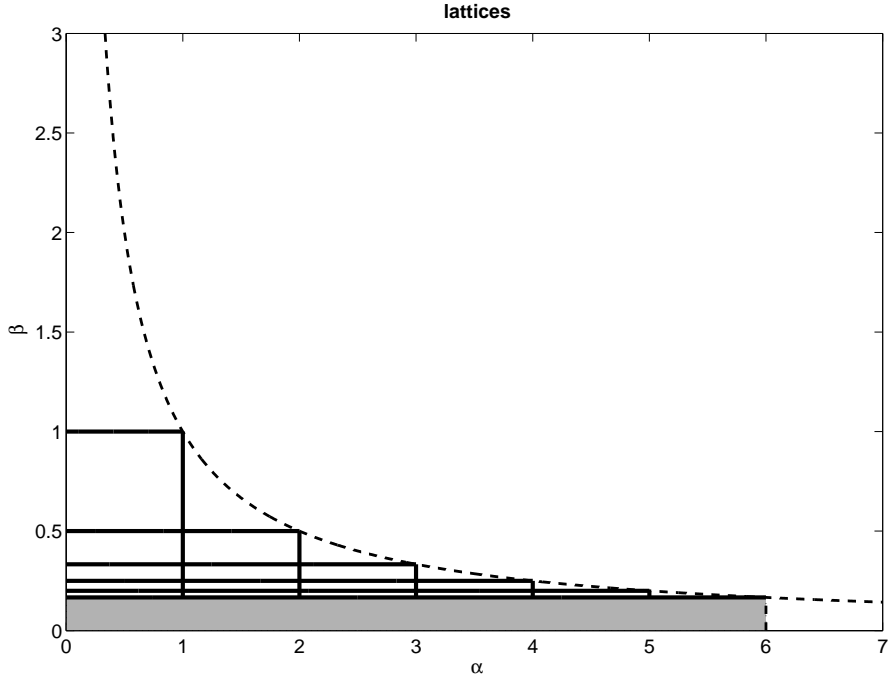}

\caption{Lattice parameters $(\alpha,\beta)$, where every PEB-spline of order $m=6$ induces a Gabor frame, include the 
rectangle $0<\alpha<6$, $0<\beta\le 1/6$ and the lines $\alpha\in\{1,\ldots,5\}$ and 
$\beta^{-1}\in\{1,\ldots,5\}$ with $\alpha\beta<1$.}
\label{PEBspline_lattices}
\end{figure}
\end{center}


\begin{remark}
Different dual windows for the Gabor frame $\mathcal{G}(B_m,\alpha,\beta)$ can be computed, if we choose larger 
rectangular submatrices $P_1(x)$ of $P(x)$ in the following way.
We include more columns of $P(x)$ 
to the left and right of $P_0(x)$ and select the maximal set of rows of $P(x)$, such that 
only zeros appear on both sides of $P_1(x)$. Similar arguments as in the proof
of Theorem \ref{PEBsplineFrame} show that the matrix $P_1(x)$ has 
full column rank. We take its Moore-Penrose pseudo-inverse $\Gamma_1$
and define the central part of the vector $\sigma(x)$ in \eqref{eq:sigmak0}
by the row vector with index $k=0$ of $\Gamma_1$.
The matrix $P_1(x)$ often has a
 better $\ell_2$-condition number than $P_0(x)$,
and the corresponding dual window $\gamma$ has smaller norm  $\|\gamma\|_W$.
For this case, the proof of (\ref{eq:c7}) needs some small 
adaptations as in \cite{Klo:2012}, where
 TP functions were considered. For comparison of this construction of the dual window $\gamma$,
we mention that the canonical dual window $\tilde \gamma$ 
is defined by the row vector $\sigma(x)$ in row $k=0$ of the Moore-Penrose
pseudo-inverse 
$$\Gamma(x)=(P(x)^TP(x))^{-1}P(x)^T$$
of the (full) pre-Gramian $P(x)$, in the same way that
 $\tilde \gamma(x+k\alpha)=\beta \sigma_k(x)$ for all $k\in\Z$.
\end{remark}

\begin{example}
We consider the EB-spline
 $B_{\Lambda_1}$, with
$\Lambda_1 = (-2,-1,1,2)$, drawn in the top left of Figure~\ref{PEBduals}.
Two different dual windows $\gamma$ of the Gabor frame
$\mathcal{G}(B_{\Lambda_1},1,0.86)$ are computed by different selections of the matrix $P_1(x)$ 
with
an increasing number of columns of the pre-Gramian $P(x)$. The dual window functions $\gamma$ 
are drawn in the left half of Figure~\ref{PEBduals}. They are
discontinuous at the boundary points of the interval $I$ of length $1$
chosen for the computation of the row vector $\sigma(x)$ in the proof of
Theorem \ref{PEBsplineFrame}, and the 
integer shifts of these points.
The larger the submatrix $P_1(x)$, 
the smaller is the jump at these discontinuities. We also observe that the dual window $\gamma$ 
approaches the canonical dual window $\tilde \gamma$, as we increase the number of columns of $P_1(x)$. 

Similar computations are performed with 
the EB-spline $B_{\Lambda_2}$, $\Lambda_2 = (1,2,3)$, shown on the top right of Figure~\ref{PEBduals}. 
Two different dual windows $\gamma$ of the Gabor frame
$\mathcal{G}(B_{\Lambda_2},2,0.45)$ 
are drawn in the right half of Figure~\ref{PEBduals}.
\end{example}

\begin{center}
\begin{figure}
\includegraphics[width=14.0cm]{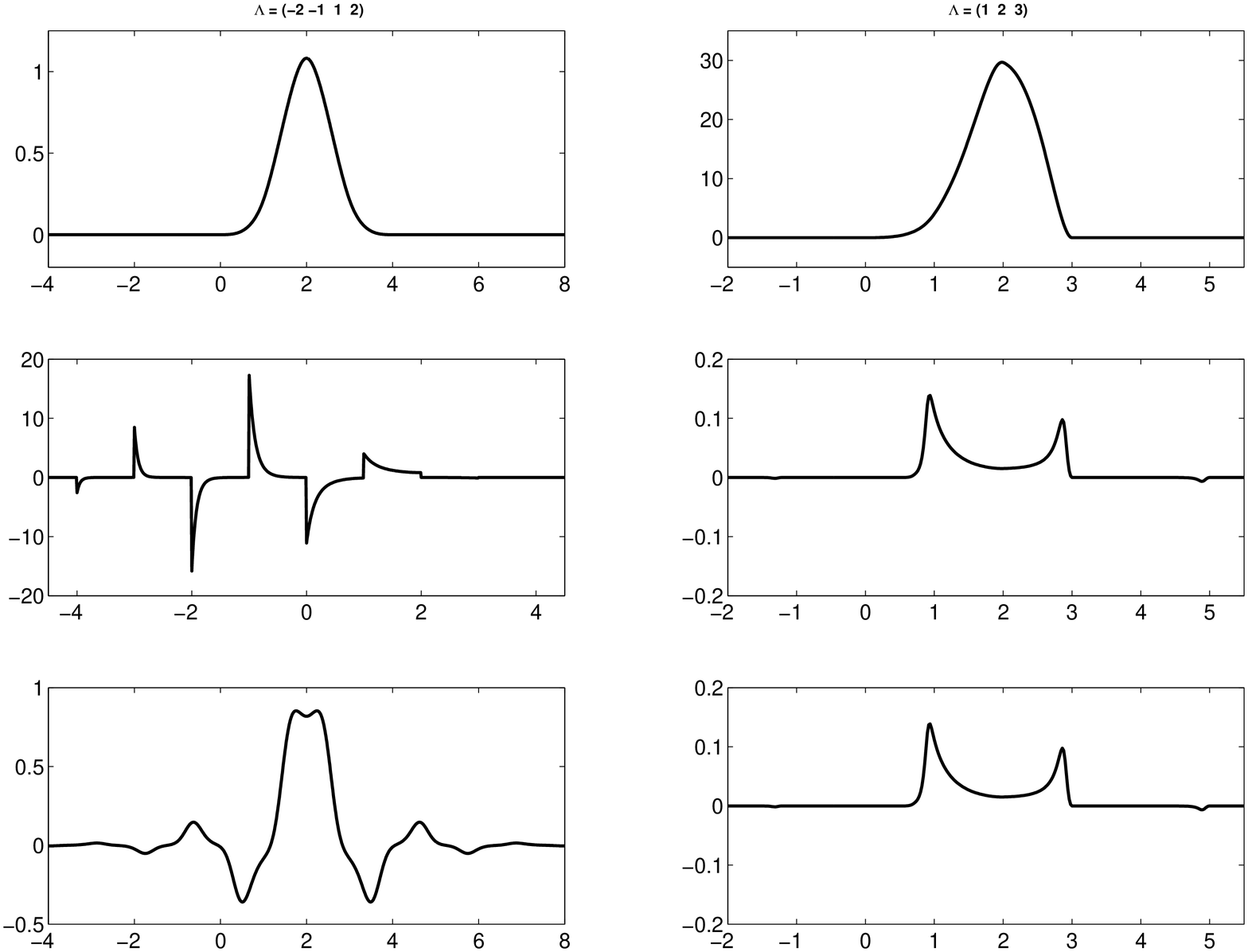}

\caption{EB-spline $B_\Lambda$ with $\Lambda = (-2,-1,1,2)$ 
and two different dual windows for $(\alpha,\beta)=(1,0.86)$ (left side);
EB-spline $B_\Lambda$ with  $\Lambda = (1,2,3)$ 
and two different dual windows for $(\alpha,\beta)=(2,0.45)$ (right side).}
\label{PEBduals}
\end{figure}
\end{center}

The relation between the Zak transform of TP functions and EB-splines
in Theorem \ref{b-splinezak} can be used 
in order to compare frame-bounds of the different Gabor systems.
The general method behind was first described by Janssen and Strohmer 
\cite{JansStrohm:2002} and used for
 comparing the Gaussian and the hyperbolic 
secant as window functions for Gabor frames. 
This method is based on
the following characterization of frame-bounds given by Ron and Shen  \cite{RonShen:1997}. 

\begin{theorem}\label{RonShenChar}
Let $g\in L^2(\R)$ be a 
window-function, $\alpha,\beta> 0$ some lattice-parameters, and
$P(x)=(\overline{g}(x+k\alpha-\tfrac{l}{\beta}))_{k,l\in\Z}$
the pre-Gramian. 
Then $0<A\leq B<\infty$ 
are frame-bounds for $\mathcal G(g,\alpha,\beta)$, if and only if
\begin{align*}
\inf_{x\in[0,\alpha]} &\zweinorm{P(x) \, c}^2 \geq \beta A \, \zweinorm{c}^2,\\
\sup_{x\in[0,\alpha]} &\zweinorm{P(x) \, c}^2 \leq \beta B \, \zweinorm{c}^2,\ \ 
\forall c\in\ell^2(\Z).
\end{align*}
\end{theorem}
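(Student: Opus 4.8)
The plan is to establish the \emph{fiberization} (direct-integral) identity that expresses the frame sum in terms of the operators $c\mapsto\|P(x)c\|_2$, and then to read off the two inequalities. First I would fix the time index $k$ and observe that the inner products along the frequency index are samples of a Fourier transform: writing $F_k(t)=f(t)\overline{g(t-k\alpha)}$, the convention $\widehat h(\xi)=\int h(t)e^{-2\pi it\xi}\,dt$ gives $\langle f,M_{l\beta}T_{k\alpha}g\rangle=\widehat{F_k}(l\beta)$. Periodizing $F_k$ with period $1/\beta$ and applying Parseval to its Fourier series yields
\begin{equation*}
  \sum_{l\in\Z}\bigl|\langle f,M_{l\beta}T_{k\alpha}g\rangle\bigr|^2
  =\frac1\beta\int_0^{1/\beta}\Bigl|\sum_{n\in\Z}f(x+\tfrac n\beta)\,\overline{g(x+\tfrac n\beta-k\alpha)}\Bigr|^2\,dx.
\end{equation*}
Introduce the isometry $U\colon L^2(\R)\to L^2\bigl([0,\tfrac1\beta),\ell^2(\Z)\bigr)$, $(Uf)(x)=(f(x+\tfrac n\beta))_{n\in\Z}$, for which $\|f\|^2=\int_0^{1/\beta}\|(Uf)(x)\|_2^2\,dx$. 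The inner sum above is precisely the $k$-th coordinate of the matrix $(\overline{g}(x+\tfrac n\beta-k\alpha))_{k,n}$ acting on $(Uf)(x)$, and this matrix is $P(x)$ up to the reflection $k\mapsto-k$, $n\mapsto-n$, a unitary on $\ell^2(\Z)$. Summing over $k$ therefore gives the master identity
\begin{equation*}
  \sum_{k,l\in\Z}\bigl|\langle f,M_{l\beta}T_{k\alpha}g\rangle\bigr|^2
  =\frac1\beta\int_0^{1/\beta}\bigl\|P(x)\,v(x)\bigr\|_2^2\,dx,
\end{equation*}
where $v=R\,Uf$ (with $R$ the index reflection) is again an isometry onto $L^2\bigl([0,\tfrac1\beta),\ell^2(\Z)\bigr)$.

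With this identity the frame inequalities $A\|f\|^2\le\sum_{k,l}|\cdots|^2\le B\|f\|^2$ become, after multiplying by $\beta$ and using that $v$ ranges over all of $L^2\bigl([0,\tfrac1\beta),\ell^2(\Z)\bigr)$,
\begin{equation*}
  \beta A\int_0^{1/\beta}\|v(x)\|_2^2\,dx
  \le\int_0^{1/\beta}\|P(x)v(x)\|_2^2\,dx
  \le\beta B\int_0^{1/\beta}\|v(x)\|_2^2\,dx
\end{equation*}
for all such $v$. I would then invoke the standard direct-integral principle: an estimate of this form holds for every vector field $v$ if and only if the pointwise operator bounds $\beta A\|c\|_2^2\le\|P(x)c\|_2^2\le\beta B\|c\|_2^2$ hold for almost every $x$ and all $c\in\ell^2(\Z)$; here ``pointwise $\Rightarrow$ integrated'' is immediate, and the converse follows by concentrating $v$ near a point where a bound is nearly violated. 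Finally, since $P(x+\alpha)$ is obtained from $P(x)$ by a row shift and $P(x+1/\beta)$ by a column shift --- both unitary --- these pointwise bounds are invariant under $x\mapsto x+\alpha$ and $x\mapsto x+1/\beta$, so they hold for a.e.\ $x\in\R$ and may be tested on $x\in[0,\alpha]$, which is exactly the formulation in the statement.

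The main obstacle is analytic rather than algebraic, and lies in the two limiting passages. For the master identity, the term-by-term Parseval computation and the interchange of $\sum_k$ with the integral should first be justified on a dense, well-behaved subclass --- e.g.\ $f$ continuous with compact support, where (for the compactly supported splines of interest) only finitely many $k$ contribute --- and then extended to all $f\in L^2(\R)$ by density, using boundedness of the analysis operator; for the windows studied here $g\in W(\R)$, which makes each $P(x)$ a bounded operator on $\ell^2(\Z)$ and renders these convergence questions routine. For the equivalence of integrated and pointwise bounds, the delicate direction is ``frame bounds $\Rightarrow$ fiber bounds'': given a set of positive measure on which the lower bound fails, one must build a test field $v$ supported there together with an approximately extremal $c$, which is a measurable-selection argument relying on the measurability of $x\mapsto P(x)$.
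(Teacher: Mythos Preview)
The paper does not give its own proof of this theorem: it is stated as a quotation of the Ron--Shen characterization \cite{RonShen:1997} and used as a black box, so there is no ``paper's proof'' to compare against. Your proposal is a correct sketch of the standard fiberization argument behind that result --- periodize each $F_k$, apply Parseval to obtain the direct-integral identity, and then pass between the integrated and fiberwise operator bounds --- and the caveats you flag (density arguments for the interchange of sums, measurable selection for the converse direction, and the need for essential rather than literal $\inf/\sup$ for general $g\in L^2$) are exactly the right places to exercise care.
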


If $g\in W(\R)$ and $c\in\ell_1(\Z)$, 
an application of Parseval's identity reveals that
\begin{align*}
\zweinorm{P(x)c}^2 &= \sum_{k\in\Z} \left| \sum_{l\in\Z} c_l \overline{g}(x+\alpha k- \tfrac{l}{\beta}) \right|^2 \\
&= \alpha \int_0^{1/\alpha}  
   \left| \sum_{k,l\in\Z} c_l \overline{g}(x+\alpha k- \tfrac{l}{\beta}) e^{-2\pi i k\alpha \omega}\right|^2 \,d\omega \\
&= \alpha \ \norm{\sum_{l\in\Z} \, c_l \, \mathrm{Z}_{\alpha}\overline{g}(x-\tfrac{l}{\beta},\cdot)}_{L^2(0,\frac{1}{\alpha})}^2.
\end{align*}
Based on this identity, the observation by Janssen and Strohmer in \cite{JansStrohm:2002} can be summarized as follows. 

\begin{proposition}\label{trick}
Let $\alpha,\beta,\kappa>0$. 
If $\gabor$ is a frame with frame-bounds $0<A\leq B<\infty$, and $h\in L^2(\R)$ satisfies
$$  \mathrm Z_{\kappa\alpha}h(\kappa x,\kappa^{-1}\omega) = 
D(\omega)\cdot \mathrm Z_{\alpha}g(x,\omega),\ \ 
(x,\omega)\in[0,\alpha)\times[0,\tfrac{1}{\alpha}),$$
where
$$0< \essinf\abs{D(\omega)}^2 \leq \esssup\abs{D(\omega)}^2 < \infty,$$
then $\mathcal G(h,\kappa\alpha,\kappa^{-1}\beta)$ is also a frame with frame-bounds 
$\kappa A\cdot\essinf\abs{D(\omega)}^2$ 
and $\kappa B\cdot\esssup\abs{D(\omega)}^2$.
\end{proposition}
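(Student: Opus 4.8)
The plan is to exploit the characterization of frame-bounds in Theorem~\ref{RonShenChar}, rewriting the squared pre-Gramian norm through the Zak-transform identity that precedes the proposition. First I would fix $c\in\ell^1(\Z)$ (the dense subspace on which the computation is clean, the general $\ell^2$ case following by continuity) and use the displayed identity
$$
  \zweinorm{P_g(x)c}^2 = \alpha\,
  \norm{\sum_{l\in\Z} c_l\,\zak_\alpha\overline{g}(x-\tfrac{l}{\beta},\cdot)}_{L^2(0,\frac1\alpha)}^2
$$
for the window $g$, and the analogous identity for $h$ with lattice parameters $(\kappa\alpha,\kappa^{-1}\beta)$.

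Next I would substitute the hypothesis $\zak_{\kappa\alpha}h(\kappa x,\kappa^{-1}\omega)=D(\omega)\,\zak_\alpha g(x,\omega)$ into the expression for $\zweinorm{P_h(\kappa x)c}^2$. The key bookkeeping step is to track how the three lattice constants rescale: the outer factor $\alpha$ becomes $\kappa\alpha$, the shift spacing $\tfrac1\beta$ becomes $\tfrac{\kappa}{\beta}$, and the $\omega$-integration runs over $(0,\tfrac{1}{\kappa\alpha})$. After the change of variables $\omega\mapsto\kappa^{-1}\omega$ inside the integral (so that the integration domain returns to $(0,\tfrac1\alpha)$ and contributes a Jacobian factor $\kappa$), the conjugate of $D(\omega)$ factors out of the $L^2$-norm pointwise in $\omega$ but \emph{not} out of the sum over $l$, since $D$ depends on $\omega$ rather than on $l$. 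This is the one genuinely delicate point, and I expect it to be the main obstacle: one must verify that $\overline{D}$ can be pulled through the $\ell^1$-sum and treated as a multiplier on the integrand before taking the $L^2(0,\tfrac1\alpha)$-norm. It works precisely because $D$ is independent of the shift index $l$, so the sum $\sum_l c_l\,\zak_\alpha\overline{g}(x-\tfrac l\beta,\omega)$ is multiplied by the single scalar $D(\omega)$ for each $\omega$; the resulting norm is then sandwiched by the essential bounds on $\abs{D}^2$.

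Carrying this out yields
$$
  \kappa\,\essinf\abs{D(\omega)}^2\;\zweinorm{P_g(x)c}^2
  \;\le\;\zweinorm{P_h(\kappa x)c}^2
  \;\le\;\kappa\,\esssup\abs{D(\omega)}^2\;\zweinorm{P_g(x)c}^2,
$$
uniformly in $x$. Taking the infimum and supremum over $x\in[0,\alpha]$ and invoking the frame-bound hypothesis $\beta A\zweinorm{c}^2\le\inf_x\zweinorm{P_g(x)c}^2$ and $\sup_x\zweinorm{P_g(x)c}^2\le\beta B\zweinorm{c}^2$ then gives lower and upper bounds of the form $(\kappa^{-1}\beta)(\kappa A\,\essinf\abs{D}^2)\zweinorm{c}^2$ and $(\kappa^{-1}\beta)(\kappa B\,\esssup\abs{D}^2)\zweinorm{c}^2$ for $\zweinorm{P_h(\kappa x)c}^2$.

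Finally I would match these against Theorem~\ref{RonShenChar} applied to $h$ with lattice parameters $(\kappa\alpha,\kappa^{-1}\beta)$: the required normalizing factor there is $\kappa^{-1}\beta$, which is exactly what appears, so the bracketed quantities $\kappa A\,\essinf\abs{D}^2$ and $\kappa B\,\esssup\abs{D}^2$ are legitimate frame-bounds for $\mathcal G(h,\kappa\alpha,\kappa^{-1}\beta)$. Since these bounds are positive and finite by the hypothesis $0<\essinf\abs{D}^2\le\esssup\abs{D}^2<\infty$, the family $\mathcal G(h,\kappa\alpha,\kappa^{-1}\beta)$ is indeed a frame, completing the argument. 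The density check $\kappa\alpha\cdot\kappa^{-1}\beta=\alpha\beta$ is automatic and confirms consistency with the necessary condition $\alpha\beta\le1$.
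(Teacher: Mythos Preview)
Your approach is exactly the one the paper has in mind: the proposition is stated immediately after the Zak-transform identity for $\zweinorm{P(x)c}^2$ and attributed to Janssen--Strohmer without further proof, so your expansion of that sketch is precisely the intended argument. One bookkeeping slip: the substitution that returns the $\omega$-integration range to $(0,\tfrac{1}{\alpha})$ contributes a Jacobian factor $\kappa^{-1}$, not $\kappa$, so the correct intermediate inequality reads
$$\essinf\abs{D}^2\,\zweinorm{P_g(x)c}^2\;\le\;\zweinorm{P_h(\kappa x)c}^2\;\le\;\esssup\abs{D}^2\,\zweinorm{P_g(x)c}^2,$$
with no factor $\kappa$. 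The factor $\kappa$ in the stated frame bounds enters only at the final step, when you compare the normalization $\kappa^{-1}\beta$ in Theorem~\ref{RonShenChar} for $(h,\kappa\alpha,\kappa^{-1}\beta)$ against $\beta$ for $(g,\alpha,\beta)$; your conclusion there is correct.
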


If $g$ is a TP function of finite type,
we found the factorization of $\zak_\alpha g$ in 
 Theorem~\ref{b-splinezak}. 
 Moreover, together with Theorem~\ref{PEBsplineFrame} this leads to an alternative proof
of \cite[Theorem 1]{GroeStoe:2012}.

\begin{theorem}\label{TPframe}
Let $g$ be a totally positive function of finite type $m\geq2$. Then $\gabor$ is a frame for every $\alpha,\beta>0$ with $\alpha\beta<1$.
\end{theorem}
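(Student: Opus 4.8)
The plan is to derive Theorem~\ref{TPframe} as a direct consequence of Theorem~\ref{b-splinezak}, Theorem~\ref{PEBsplineFrame}, and Proposition~\ref{trick}, thereby reducing the frame property of a totally positive function of finite type to the already-established frame property of the associated exponential B-spline $B_\Lambda$. First I would fix a TP function $g$ of finite type $m\ge 2$, given by $\hat g(\omega)=\prod_{\nu=1}^m(1+2\pi i\omega a_\nu^{-1})^{-1}$, and fix lattice parameters $\alpha,\beta>0$ with $\alpha\beta<1$. The key observation is that Theorem~\ref{b-splinezak} expresses $\mathrm Z_\alpha g$ as a product of $\mathrm Z_1 B_\Lambda$ (with $\lambda_\nu=-\alpha a_\nu$) and the nonvanishing, bounded, periodic factor arising from $t(\alpha\omega)^{-1}$, which matches exactly the hypothesis of Proposition~\ref{trick}.

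The main step is to set up the scaling in Proposition~\ref{trick} correctly. Starting from the known frame $\mathcal G(B_\Lambda,1,\alpha\beta)$, I would apply the proposition with base parameters $(1,\alpha\beta)$ and dilation $\kappa=\alpha$, so that the transformed window lands at lattice $(\kappa\cdot 1,\kappa^{-1}\cdot\alpha\beta)=(\alpha,\beta)$, which is precisely the target lattice for $g$. The rewriting of Theorem~\ref{b-splinezak} gives
$$
\mathrm Z_1 B_\Lambda(\tfrac{x}{\alpha},\alpha\omega)
= \alpha\,\prod_{\nu=1}^m\frac{1-e^{-\alpha(a_\nu+2\pi i\omega)}}{\alpha a_\nu}\;
\mathrm Z_\alpha g(x,\omega),
$$
which has exactly the form $\mathrm Z_{\kappa\alpha}h(\kappa x,\kappa^{-1}\omega)=D(\omega)\,\mathrm Z_\alpha g(x,\omega)$ required by the proposition, with $h=B_\Lambda$ playing the role of $g$ in the proposition and $g$ (the TP function) playing the role of $h$. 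The modulating factor $D(\omega)=t(\alpha\omega)$ is one-periodic in the correct variable and, as noted just after Theorem~\ref{b-splinezak}, is nonzero and bounded, so $0<\essinf|D|^2\le\esssup|D|^2<\infty$ is automatic.

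To verify the base frame $\mathcal G(B_\Lambda,1,\alpha\beta)$, I would invoke Theorem~\ref{PEBsplineFrame}: with lattice parameters $\alpha'=1$ and $\beta'=\alpha\beta<1$, case~(2) applies directly, since $\alpha'=1\in\{1,\ldots,m-1\}$ (using $m\ge 2$) and $\alpha'\beta'=\alpha\beta<1$. Thus $\mathcal G(B_\Lambda,1,\alpha\beta)$ is a frame with some bounds $0<A\le B<\infty$. Feeding this into Proposition~\ref{trick} (applied in the direction that transports the frame property from $B_\Lambda$ back to $g$ via the inverse scaling) yields that $\mathcal G(g,\alpha,\beta)$ is a frame with explicit bounds proportional to $A$ and $B$ scaled by $\kappa$ and by the extremal values of $|D|^2$.

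The part that requires care, rather than being an obstacle, is bookkeeping the roles in Proposition~\ref{trick}: the proposition as stated transports a frame for $\mathcal G(g,\alpha,\beta)$ to one for $\mathcal G(h,\kappa\alpha,\kappa^{-1}\beta)$, so I must apply it with the roles reversed, treating $B_\Lambda$ as the ``known'' window on lattice $(1,\alpha\beta)$ and recovering $g$ on lattice $(\alpha,\beta)$; this amounts to using the proposition with the reciprocal dilation and with $D$ replaced by $1/D$, both of which satisfy the same boundedness and nonvanishing conditions. Once the identification of parameters is pinned down, the conclusion is immediate, and no further estimates are needed. This completes the alternative proof of \cite[Theorem 1]{GroeStoe:2012}.
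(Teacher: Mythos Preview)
Your proposal is correct and follows exactly the route the paper indicates: combine Theorem~\ref{b-splinezak} with Theorem~\ref{PEBsplineFrame} (case~(2), $\alpha'=1$) via Proposition~\ref{trick}. The paper gives only a one-line sketch of this argument, and your write-up supplies the missing bookkeeping; in particular, your final paragraph correctly identifies that one must apply Proposition~\ref{trick} with $B_\Lambda$ in the role of the known window at parameters $(1,\alpha\beta)$, $\kappa=\alpha$, and $D(\omega)=(\alpha\,t(\omega))^{-1}$, which is bounded and bounded away from zero since $t$ is.
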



\section{Lower frame-bounds for Gabor frames of even two-sided exponentials}

The pre-Gramian matrices 
$P(x)=(B_m(x+k\alpha-\tfrac{l}{\beta}))_{k,l\in\Z}$ were 
already used in the proof of Theorem \ref{PEBsplineFrame}. We observed 
that, for $\alpha=1$, the finite submatrices are collocation 
matrices of PEB-splines, hence
they are ASTP matrices.  
In this section, 
we apply methods for the decomposition of ASTP matrices in order to
compute lower frame-bounds for Gabor frames in a special case, namely for the 
symmetric EB-spline of order $2$, given by 
\begin{equation}\label{eq:Blambda}
B(x) = (e^{\lambda(\cdot)}\chi_{[0,1]}\ast e^{-\lambda(\cdot)}\chi_{[0,1]})(x)
=\begin{cases}\frac{\sinh (\lambda x)}{\lambda}\,,&\qquad 0\le x\le 1,\\
\frac{\sinh (\lambda(2- x))}{\lambda}\,,&\qquad 1< x\le 2,
\end{cases}
\end{equation}
with $\lambda\in\R_+$ and ${\rm supp}\,B=[0,2]$. Then 
we use Theorem \ref{b-splinezak} once more for deriving
the corresponding lower frame-bound of the Gabor frame $\gabor$, where 
$g$ is  the even two-sided exponential function
$$
  g(x) = \tfrac{\lambda}{2}e^{-\lambda\abs{x}}.
$$

Note that $g$
is a TP function of finite type $2$.
It was shown by Janssen  \cite{Janssen:2003}
that $\gabor$ is a frame for all $\alpha,\beta >0$ with $\alpha\beta<1$, 
see also Theorem \ref{TPframe}.
It is also known that the lower frame-bound $A$ 
of these frames approaches $0$ when
$\alpha\beta$ tends to $1$. Our goal of this section 
is to find a quantitative 
result which describes these properties.

\begin{remark}\label{lowerframebound-remark}
There are only two window functions (and scaled versions thereof) in the literature, 
where the asymptotic behaviour of the 
lower frame-bound was specified near the critical density $\alpha\beta\approx 1$.
These are the Gabor frames $\mathcal{G}(g,\alpha,\alpha)$ 
with $g$ the Gaussian window $g(x)=e^{-\pi x^2}$ and the hyperbolic secant 
$g(x)=(\cosh \pi x)^{-1}$. It was proved in
\cite{BoGroLy:2010} that constants $c_1,c_2>0$ exist  
such that the optimal lower frame-bound $A_{opt}$ 
of $\mathcal{G}(g,\alpha,\alpha)$
satisfies 
$$
   c_1 (1-\alpha) \le A_{opt} \le c_2(1-\alpha)\quad \hbox{for}\quad
	 \tfrac{1}{2}<\alpha < 1.
$$
In this section, we give explicit lower bounds of the same linear
asymptotic decay as $\alpha\beta$ tends to $1$, 
if the window function is the symmetric EB-spline \eqref{eq:Blambda} of order 2
and $\alpha=1$
(Theorem~\ref{bound}),
or the TP function $g(x)=\tfrac{\lambda}{2}e^{-\lambda|x|}$ of finite type $2$
and general $\alpha,\beta>0$ (Theorem \ref{TPBound}).
\end{remark}

\begin{example} The left-hand side of Figure~\ref{fig:framebound}  depicts
lower frame-bounds 
for the Gabor frame $\mathcal{G}(B,1,\beta)$  with the window function 
$B$ in  \eqref{eq:Blambda} and $\lambda=1$.
The points mark the optimal lower frame-bound $A_{opt}$ for rational
values
$\beta> 1/2$, taken at $\beta=k/61$, $31\le k\le 60$.
The computation of the optimal lower frame-bound
follows the method of Zibulski and Zeevi,
based on a matrix-valued Zak transform of  $B$, see \cite{ZibZee:1997} or \cite[Theorem 8.3.3]{Groech:2001}.
The lower frame-bound $A$ in Theorem~\ref{bound}
is drawn as a solid line. Note that, by the distinction of 
three different cases for $\tfrac{1}{2}<\beta<1$, the lower bound has a jump
at $\beta=\tfrac{5}{6}$. The asymptotically linear decay of $A$ near $\beta=1$ is expreessed
by 
$$
    A^{-1} = \left( 2\sinh^2(\tfrac{1}{2}) (1-\beta)  \right)^{-1} + \mathcal O\left((1-\beta)^{-1/2}\right).
$$
The right half of Figure~\ref{fig:framebound} gives the same type of information 
for the Gabor frame $\mathcal{G}(g,1,\beta)$, where $g(x)=\tfrac{1}{2}e^{-|x|}$.
The solid line
shows the lower frame-bound $A$ in Theorem \ref{TPBound}.
\end{example}

\begin{center}
\begin{figure}
\includegraphics[width=14.0cm]{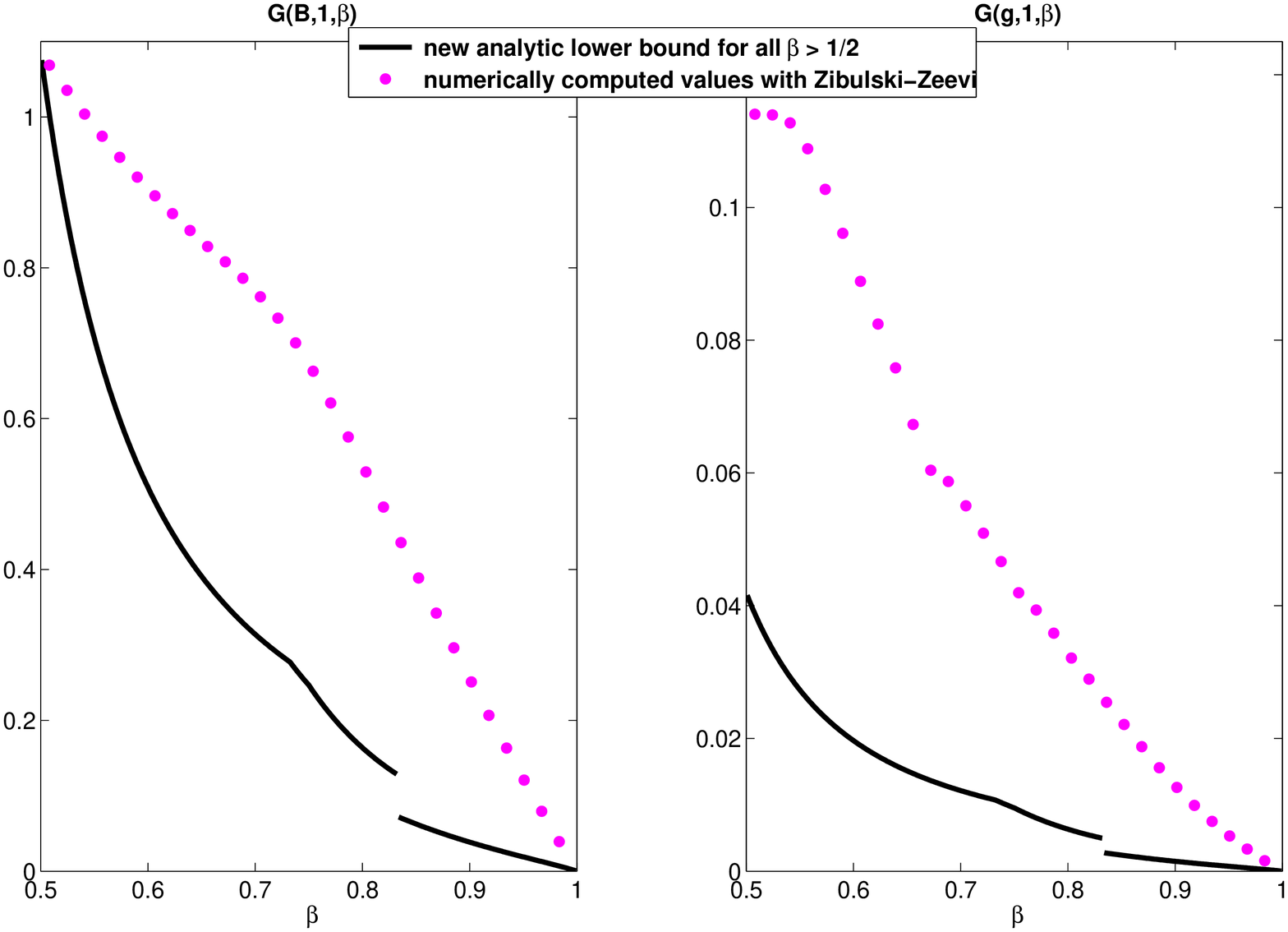}

\caption{Points in the left figure show the optimal lower frame-bound
 of $\mathcal G(B,1,\beta)$ 
for $\beta=k/61$, $31\le k\le 60$, where $B$ is the EB-spline \eqref{eq:Blambda} with 
$\lambda=1$. The solid line in the left figure depicts the lower frame-bound $A$ in 
Theorem \ref{bound}. Points in the right figure show the optimal lower frame-bound
 of $\mathcal G(g,1,\beta)$, where $g(x)=\tfrac{1}{2}e^{-|x|}$.
The solid line in the right figure depicts the lower frame-bound in 
Theorem \ref{TPBound}.}
\label{fig:framebound}
\end{figure}
\end{center}

We start with the discussion of lower frame-bounds for $\mathcal{G}(B,1,\beta)$,
where $B$ is the EB-spline in (\ref{eq:Blambda}).
For this purpose, we inspect the pre-Gramian matrices
$$P(x)=\left( {B}(x+ k- \tfrac{l}{\beta}) \right)_{k,l\in\Z},\qquad x\in\R,$$
in more detail.

Before we present our result in Theorem~\ref{bound},
let us illustrate our method for finding lower frame-bounds for $\mathcal{G}(B,1,\beta)$
by first considering lattices with high redundancy. Let
$1/\beta \geq 2=\mathrm{vol}(\mathrm{supp}(B))$. 
As stated before in \eqref{eq:Daubbound},
the optimal lower frame-bound 
 is given by 
\begin{equation}\notag
A_{opt} = \inf_{x\in[0,1)} \frac{1}{\beta} \sum_{k\in\Z} \abs{B(x+ k)}^2.
\end{equation}
This was, for example, published in \cite{DauGroMey:1986} and is also part of the Casazza-Christensen bound \cite{Chris:2003}. 
We next
 show how we can find this optimal lower frame-bound
$A_{\beta}(B)$ from the pre-Gramian $P(x)$. 
For every $x\in[0,1)$, there is at most one non-zero entry in each row of 
$P(x)$. Namely, if $\beta=1/2$, we have
\begin{equation}\label{eq:PBbeta}
P(x) = 
\begin{pmatrix}
 \ddots &        &                &                & \\
        & B(x)   &                &                & \\
        & B(x+1) &                &                & \\
        &        & B(x) &                & \\
        &        & B(x+1) &                & \\
        &        &                & B(x) & \\
        &        &                & B(x+1) & \\
        &        &                &                & \ddots
\end{pmatrix}, \ \ x\in[0,1].
\end{equation}
For $\beta<1/2$, the entry of $P(x)$ in row $k$ and column $l$ is given by
$$p_{k,l}(x)=B(x+k-\tfrac{l}{\beta}).$$ 
This entry is nonzero, if and only if
$$
   	k\in I_l:= \Z\cap (\tfrac{l}{\beta}-x,\tfrac{l}{\beta}-x+2).
$$
The sets $I_l$ are non-empty and, due to $\beta<1/2$,  they are pairwise disjoint. This implies that the matrix $P(x)$ has a similar form as (\ref{eq:PBbeta}), 
with the possibility of 
some zero rows. The Moore-Penrose pseudo-inverse of $P(x)$ is
$$
   \Gamma(x)
     =(P(x)^T P(x))^{-1} P(x)^T,
$$ 
and $P(x)^T P(x)$ is a diagonal matrix with diagonal entries
$$
   \sum_{j\in\Z} p_{j,l}(x)^2=\sum_{j\in\Z} B(x+j\alpha-\tfrac{l}{\beta})^2,
	\qquad l \in\Z.$$
Hence, 
the entry of $\Gamma(x)$ in row $l$ and column $k$ is
$$
   \gamma_{l,k}(x)= \frac{B(x+k-\tfrac{l}{\beta})}
	{\sum_{j\in\Z} B(x+j\alpha-\tfrac{l}{\beta})^2}.
$$
Since nonzero entries in different rows of $\Gamma(x)$ 
appear in disjoint sets of col\-umns, 
the operator norm is 
$$
   \|\Gamma(x)\|_2 = 
	\sup_{l\in\Z} \left(\sum_{k\in\Z} 
	\gamma_{l,k}(x)^2\right)^{1/2}=
	\sup_{l\in\Z} \left(\sum_{j\in\Z} 
	B(x+j\alpha-\tfrac{l}{\beta})^2\right)^{-1/2}.
$$
Moreover,  $\Gamma(x)$ is a left-inverse of $P(x)$ with 
the smallest $\ell_2$-operator norm. Therefore, with Theorem~\ref{RonShenChar} 
 the optimal lower frame-bound  of the Gabor frame 
$\mathcal{G}(B,1,\beta)$ with $0<\beta\le \frac{1}{2}$ is given by
$$
 A_{opt}=\beta^{-1}\left(\sup_{x\in [0,1)}\|\Gamma(x)\|_2\right)^{-2} =
\beta^{-1}\inf_{x\in [0,1)} ( B(x)^2+B(x+1)^2).
$$
This infimum is obtained for $x=\tfrac{1}{2}$ and has the  value
\begin{equation}\label{eq:AoptB}
 A_{opt}= \frac{2\sinh^2(\tfrac{\lambda}{2})}{\beta \lambda^2}.
\end{equation}

Our next theorem extends this construction to frames $\mathcal{G}(B,1,\beta)$
with  $\beta>1/2$. It is no longer true that consecutive columns of 
$P(x)$ are supported in disjoint sets of rows. However, 
we can still find a block decomposition
of the pre-Gramian $P(x)$ and a suitable left-inverse in order
to prove the following result.

\begin{theorem}\label{bound}
Let $\lambda>0$, $0<\beta<1$, and  $B$ be the EB-spline in (\ref{eq:Blambda}).
Then the corresponding Gabor frame 
$\mathcal G(B,1,\beta)$ has the  lower frame-bound
\begin{align}\label{eq:boundex}
A=  ( \beta\lambda^2c_\beta)^{-1} 
\min\left\{2\sinh^2(\tfrac{\lambda}{2}),
~\sinh^2(\tfrac{\lambda}{2\beta})\right\},
\end{align}
where 
\begin{align*}
  c_\beta = \begin{cases}
	\ds  1\ ,  &\mathrm{if\ }0<\beta\leq\tfrac{1}{2},\\
  \ds (3-\beta^{-1})^{2}\ ,  &\mathrm{if\ }\tfrac{1}{2}<\beta\leq\tfrac{3}{4},\\
  \ds (3-\beta^{-1})(11-11\beta^{-1}+3\beta^{-2})\ ,  
	&\mathrm{if\ }\tfrac{3}{4}<\beta\leq\tfrac{5}{6},\\
  \ds     \left(1+\sqrt{\frac{\beta}{\pi(1-\beta)}}\right)
	\left(1+\sqrt{\frac{\pi\beta}{4(1-\beta)}}\right),	&\mathrm{if\ }\tfrac{5}{6}<\beta<1.
\end{cases}
\end{align*}
\end{theorem}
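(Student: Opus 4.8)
The plan is to apply the Ron--Shen characterization in Theorem~\ref{RonShenChar}. Since $B\in W(\R)$ supplies a finite upper bound automatically, it suffices to produce $A>0$ with $\|P(x)c\|_2^2\ge\beta A\,\|c\|_2^2$ for all $c\in\ell^2(\Z)$ and all $x\in[0,1]$. I would obtain such an $A$ by constructing, for each $x$, a left-inverse $\Gamma(x)$ of the pre-Gramian (so that $\Gamma(x)P(x)=\id$) and setting $A=\beta^{-1}\bigl(\sup_{x\in[0,1]}\|\Gamma(x)\|_2\bigr)^{-2}$; indeed $\|c\|_2=\|\Gamma(x)P(x)c\|_2\le\|\Gamma(x)\|_2\,\|P(x)c\|_2$ then yields the lower bound, and equivalently $\sigma_{\min}(P(x))\ge\|\Gamma(x)\|_2^{-1}$. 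The whole problem thus reduces to bounding the smallest singular value of $P(x)$ from below, uniformly in $x$.

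The structural observation that makes this tractable is a block decomposition of $P(x)$. Because $\mathrm{supp}\,B=[0,2]$, each column of $P(x)$ has (for generic $x$) exactly two nonzero entries, in two consecutive rows, and because $1<\beta^{-1}<2$ for $\tfrac12<\beta<1$, two columns meet a common row only if they are consecutive. Tracking the row-shift $\lceil(l+1)\beta^{-1}-x\rceil-\lceil l\beta^{-1}-x\rceil\in\{1,2\}$ between neighbouring columns, the columns group into maximal chains of consecutively overlapping columns, separated by gaps where this shift equals $2$. Since every row meets at least one column (as $2\beta>1$), no zero rows occur, and after reordering $P(x)$ becomes block diagonal, each block $M$ being the $(n+1)\times n$ collocation matrix of a chain of $n$ consecutive shifted B-splines. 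By Theorem~\ref{det>0} each block is ASTP of full column rank, and the norm of a block-diagonal left-inverse is the supremum of the block contributions, so $A=\beta^{-1}(\inf_M\sigma_{\min}(M))^2$ and it remains to bound $\sigma_{\min}(M)$ from below over all blocks. A column of $M$ contributes the pair $\lambda^{-1}(\sinh(\lambda\tau),\sinh(\lambda(1-\tau)))$ with $\tau\in(0,1)$, and along a chain the parameters form the arithmetic progression $\tau_j=\tau_1-(j-1)(1-\beta)/\beta$; hence the maximal chain length is bounded by $(1-\beta)^{-1}$ and grows as $\beta\to1$.

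For the ranges $\tfrac12<\beta\le\tfrac34$ and $\tfrac34<\beta\le\tfrac56$ the maximal chain length equals $3$ and $5$ respectively, so only finitely many, uniformly bounded block shapes occur. For each I would compute $\sigma_{\min}(M)^2=\lambda_{\min}(M^TM)$ as the least eigenvalue of a small tridiagonal form and minimize it over the admissible progressions $\tau_j$. This minimization selects two competing extremal configurations --- an isolated column with $\tau=\tfrac12$, giving $2\sinh^2(\tfrac{\lambda}{2})$, and a symmetric two-chain whose exposed endpoint sits at $\tau=\tfrac{1}{2\beta}$, giving $\sinh^2(\tfrac{\lambda}{2\beta})$ --- which is the origin of the factor $\min\{2\sinh^2(\tfrac{\lambda}{2}),\sinh^2(\tfrac{\lambda}{2\beta})\}$, while the coupling within the longer chains produces the rational prefactors $c_\beta=(3-\beta^{-1})^2$ and $(3-\beta^{-1})(11-11\beta^{-1}+3\beta^{-2})$. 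The case $\beta\le\tfrac12$ (chains of length one) recovers \eqref{eq:AoptB} and serves as a consistency check; one can also verify that these three pieces of $c_\beta$ match continuously at $\beta=\tfrac12$ and $\beta=\tfrac34$.

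The genuine difficulty is the regime $\tfrac56<\beta<1$, where chains of unbounded length occur and $\sigma_{\min}(M)\to0$, so no finite computation suffices and one needs a lower bound on $\sigma_{\min}(M)$ uniform in the chain length $n$. Here I would not diagonalize $M^TM$ but instead write down an explicit left-inverse of the bidiagonal chain matrix and estimate its norm by Schur's test, $\|\Gamma\|_2^2\le\|\Gamma\|_1\,\|\Gamma\|_\infty$. The entries of this inverse are alternating products of ratios $\sinh(\lambda(1-\tau_i))/\sinh(\lambda\tau_{i+1})$ with $\tau_i$ varying linearly, so their logarithms accumulate quadratically and the entries decay in a Gaussian fashion away from the diagonal; summing such Gaussian tails over the $\sim(1-\beta)^{-1}$ active indices yields row- and column-sum bounds of the form $1+\sqrt{\tfrac{\beta}{\pi(1-\beta)}}$ and $1+\sqrt{\tfrac{\pi\beta}{4(1-\beta)}}$, whose product is exactly $c_\beta$. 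The main obstacle is to carry out these Gaussian estimates rigorously and uniformly in $n$ and in the chain position, with constants sharp enough to reproduce the growth $c_\beta\sim\tfrac{\beta}{2(1-\beta)}$; this under-estimation of $\sigma_{\min}$ (compared with the exact finite-block value) is what forces the discontinuity of the bound at $\beta=\tfrac56$ and, at the same time, delivers the asymptotically linear decay $A\sim(1-\beta)$ near the critical density.
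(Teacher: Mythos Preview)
Your global strategy---Ron--Shen characterization, block decomposition of $P(x)$ into bidiagonal $(s{+}1)\times s$ chains, construction of a left-inverse block by block, and Schur's test with Gaussian-type decay for the long chains---matches the paper and is correct. The Gaussian estimate you describe for $\tfrac56<\beta<1$ is essentially what the paper does.

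The discrepancy is in how the intermediate constants $c_\beta$ and the factor $\min\{2\sinh^2(\tfrac{\lambda}{2}),\sinh^2(\tfrac{\lambda}{2\beta})\}$ arise. The paper does \emph{not} compute $\lambda_{\min}(M^TM)$ for the small blocks and then minimize over the arithmetic progressions; instead it uses one uniform device for all $\beta>\tfrac12$. Each block $P_0$ is factored as $P_0=CS$ (a Neville-type elimination), where $S$ is ``near-diagonal'' with a single doubled column at a pivot index $r$ chosen so that $|x_r-\tfrac12|\le\delta=\tfrac12(\beta^{-1}-1)$, and $C$ is unit bidiagonal with $C^{-1}=\mathrm{blockdiag}(L,U)$. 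The $\min\{2\sinh^2(\tfrac{\lambda}{2}),\sinh^2(\tfrac{\lambda}{2\beta})\}$ factor is exactly $\|S^\dagger\|_2^{-2}$, not a competition between two ``extremal configurations'' of $\sigma_{\min}$. The constant $c_\beta$ is then the Schur bound $\|L\|_1\|L\|_\infty$ (and likewise for $U$). For $\tfrac12<\beta\le\tfrac34$ one has $r\le 2$, so $L$ is at most $2\times 2$ with off-diagonal entry bounded by $2-\beta^{-1}$, giving $\|L\|_1\|L\|_\infty\le(3-\beta^{-1})^2$; for $\tfrac34<\beta\le\tfrac56$ one has $r\le 3$ and the $3\times 3$ Schur bound gives $(3-\beta^{-1})(11-11\beta^{-1}+3\beta^{-2})$. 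These specific rational expressions are artifacts of the factorization-plus-Schur bound, not exact eigenvalue minimizations.

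So your proposed route for the ranges $\tfrac12<\beta\le\tfrac56$---direct minimization of $\lambda_{\min}(M^TM)$---would, if carried out, yield \emph{different} (sharper) constants and would not reproduce the stated $c_\beta$; the theorem as written is proved by the $CS$ factorization throughout. If you want the statement exactly as given, you must replace the eigenvalue computation in those ranges by the same Neville factorization you already intend for large $\beta$, specialized to $r=2$ and $r=3$.
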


\begin{remark}
The parameter $\lambda=0$ can be included. Then $B$ in \eqref{eq:Blambda} 
is the  
linear polynomial B-spline 
$$B(t) = (\chi_{[0,1]}\ast \chi_{[0,1]})(t)
=\begin{cases}t\,,&\qquad 0\le t\le 1,\\
(2- t)\,,&\qquad 1< t\le 2,
\end{cases}
$$
and the lower frame-bound of $\mathcal{G}(B,1,\beta)$ is
$$
   A=(\beta c_\beta)^{-1}\cdot \min\left\{1/2,
~1/(2\beta)^2)\right\} .
$$
\end{remark}

\begin{proof}

For $\beta \le \tfrac{1}{2}$ the result was shown in \eqref{eq:AoptB}.

Now let $1/2<\beta<1$ and $x\in[0,1)$. 
As in our motivation,  
we  construct a left-inverse $\Gamma(x)$ of the (full) pre-Gramian 
$$P(x)=\left(p_{k,l}(x)\right)_{k,l\in\Z}=
\left(B(x+k-\tfrac{l}{\beta})\right)_{k,l\in\Z}$$ 
with operator norm
$$
   \|\Gamma(x)\|_2	\le  c_\beta^{1/2}\, \lambda	
	 \left(\min\left\{\sqrt{2}\sinh(\tfrac{\lambda}{2}),
~\sinh(\tfrac{\lambda}{2\beta})\right\}\right)^{-1}.
$$
Then the result follows from Theorem~\ref{RonShenChar}.

First, we
take a close look at the form of $P(x)$ and detect a block structure 
which is  more complicated than for $\beta=1/2$ in (\ref{eq:PBbeta}). 
Since the support of $B$ is $[0,2]$, there are at most two non-zero entries in each row and in 
each column of $P(x)$.  Moreover, every $2\times 2$ submatrix of $P(x)$ contains at least one zero.
The overall form of $P(x)$ is a staircase, similar to (\ref{eq:PBbeta}), with 
steps of height one if 
$$
    x+k-\tfrac{l}{\beta}, ~x+k+1-\tfrac{l+1}{\beta}  \in (0,1],
$$
and steps of height two if 
$$
    x+k-\tfrac{l}{\beta}  \in (0,1],\quad  ~x+k+1-\tfrac{l+1}{\beta}<0.
$$
This form is depicted in the following scheme, where a step of height
two appears between the first and second column, followed by 
two steps of height one and so on,
\begin{align*}
P(x) = 
\begin{pmatrix}
 \ddots & * & 0 & 0 & 0 & 0 & 0 & \\
        & * & 0 & 0 & 0 & 0 & 0 & \\
        & 0 & * & 0 & 0 & 0 & 0 & \\
        & 0 & * & * & 0 & 0 & 0 & \\
        & 0 & 0 & * & * & 0 & 0 & \\
        & 0 & 0 & 0 & * & 0 & 0 & \\
        & 0 & 0 & 0 & 0 & * & 0 & \\
        & 0 & 0 & 0 & 0 & * & * & \\
        & 0 & 0 & 0 & 0 & 0 & * & \ddots
\end{pmatrix}.
\end{align*}
We see that the pre-Gramian  has the following block-structure.
\begin{itemize}
\item[1.] If, for some $k,l\in\Z$, we have $x+k-\tfrac{l}{\beta}=1$, then 
the $(k,l)$-entry $p_{k,l}(x)=B(1)=\tfrac{\sinh\lambda}{\lambda}$ defines a  $1$-by-$1$ block, 
with no other non-zero entries in row $k$ and column $l$ of $P(x)$. 
\item[2.] All other blocks have the form 
\begin{equation}\label{eq:P0proof}
P_0:=
\begin{pmatrix}
   a_1 &  &  &    \\
         b_1 & a_2 &  &    \\
          & \ddots & \ddots &    \\
          &  & b_{s-1} & a_{s}  \\
          &  &  & b_s   
\end{pmatrix}  \in \R^{(s+1)\times s}
\end{equation}
with one more row than column and only zeros to the right and left, top and bottom of this block.
The first entry $a_1=B(x_1)$ is the entry in row $k$ and column $l$ of $P(x)$ 
with 
\begin{equation}\notag
    x_1:=x+k-\tfrac{l}{\beta}\in (2-\tfrac{1}{\beta},1),
\end{equation}
such that only zeros appear to the left and right of $a_1$ in row $k$ of $P(x)$ and also 
above in column $l$. 
Subsequent entries are
\begin{equation}\label{eq:aj}
   a_j=B(x_j),\quad b_j=B(x_j+1)=B(1-x_j),\qquad 1\le j\le s,
\end{equation}
where 
\begin{equation}\label{eq:tj}
x_j:=x_1-(j-1)(\tfrac{1}{\beta}-1),\qquad 1\le j\le s,
\end{equation}
and $s\in\N$ is defined by the condition that
\begin{equation}\notag
   x_s=x_1-(s-1)(\tfrac{1}{\beta}-1)\in (0,\tfrac{1}{\beta}-1].
\end{equation}
Note that only zeros appear to the left and right of $b_s$ in the corresponding row of $P(x)$
and also below in the corresponding column.  
Simple computations reveal that $s=\lceil \beta x_1/(1-\beta)\rceil$. 
So blocks have   $s\approx \lceil \tfrac{\beta}{1-\beta}\rceil$ columns.
\end{itemize}   

A left-inverse $\Gamma(x)$ of $P(x)$ is constructed by considering each block separately.
The operator norm of $\Gamma(x)$, as an operator on $\ell_2(\Z)$,
 is the supremum of the operator norms of all blocks.

Now, let us consider a typical block $P_0$ of $P(x)$. 
\begin{itemize}
\item[1.] If $P_0$ is $1$-by-$1$, then its inverse is $\Gamma_0=\tfrac{\lambda}{\sinh\lambda}$. 
\item[2.] If $P_0$ is $s+1$-by-$s$, with $s$ as described above, then we use a decomposition of $P_0$ 
which is similar to the Neville-elimination method in \cite{GasPen:1995}. 
The following lemma provides an upper bound of the norm of a left-inverse $\Gamma_0$
and thus completes the proof of Theorem~\ref{bound}.
\end{itemize}
\end{proof}

\begin{lemma}\label{PBneville-lemma}
Let $\tfrac{1}{2}<\beta<1$ and 
 $P_0$ be the matrix in (\ref{eq:P0proof}), with $a_j,b_j$  in (\ref{eq:aj}),
 $1\le j\le s$. 
Then there exists 
a left-inverse $\Gamma_0$ of $P_0$ such that 
$$
   \|\Gamma_0\|_2	\le  c_\beta^{1/2}\, \lambda	
	 \left(\min\left\{\sqrt{2}\sinh(\tfrac{\lambda}{2}),
~\sinh(\tfrac{\lambda}{2\beta})\right\}\right)^{-1}
$$
with the constant $c_\beta$ in Theorem~\ref{bound}. 
\end{lemma}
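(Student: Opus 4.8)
The plan is to construct an explicit left-inverse $\Gamma_0$ of the bidiagonal matrix $P_0$ by a Neville-type elimination and then to bound $\|\Gamma_0\|_2$ through a Schur test, that is, by $(R\,C)^{1/2}$ with $R$ and $C$ the largest absolute row sum and column sum of $\Gamma_0$. The structural facts that drive the estimate are read off from \eqref{eq:aj} and \eqref{eq:tj}: the nodes $x_j$ decrease in equal steps $h:=\tfrac1\beta-1\in(0,1)$, so $a_j=\lambda^{-1}\sinh(\lambda x_j)$ is decreasing while $b_j=\lambda^{-1}\sinh(\lambda(1-x_j))$ is increasing, and there is a single crossover index $j^\ast$ with $a_j\ge b_j$ for $j\le j^\ast$ and $a_j<b_j$ for $j>j^\ast$. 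Since the top $s\times s$ block of $P_0$ is lower triangular with positive diagonal product $\prod_j a_j$, the matrix has full column rank (as already guaranteed by Theorem~\ref{det>0}), so a left-inverse exists and it remains only to control its norm.

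Concretely, I would recover $c$ from $d=P_0c$ by the forward recursion $c_1=d_1/a_1$, $c_j=(d_j-b_{j-1}c_{j-1})/a_j$ for the indices $j\le j^\ast$, where the pivots $a_j$ are the larger entries, and by the backward recursion $c_s=d_{s+1}/b_s$, $c_j=(d_{j+1}-a_{j+1}c_{j+1})/b_j$ for $j>j^\ast$, where the pivots $b_j$ dominate. Stacking these linear formulas defines $\Gamma_0$ with $\Gamma_0P_0=I_s$. By the choice of the split, every entry of $\Gamma_0$ is a single reciprocal pivot multiplied by a product of ratios $b_k/a_k\le1$ (for $k\le j^\ast$) or $a_k/b_k\le1$ (for $k>j^\ast$); hence each row and column of $\Gamma_0$ is dominated by a geometric series in these ratios, and $R$ and $C$ factor as (reciprocal pivot)$\times$(sum of such a series). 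This is exactly why the final bound splits as a pivot factor times $\sqrt{c_\beta}$, and why $c_\beta$ appears as a product of two factors.

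Two elementary $\sinh$-estimates produce the factor $\min\{\sqrt2\sinh(\tfrac\lambda2),\sinh(\tfrac{\lambda}{2\beta})\}$. First, convexity and symmetry of $x\mapsto\sinh^2(\lambda x)+\sinh^2(\lambda(1-x))$ show that this map is minimal at $x=\tfrac12$, so $a_j^2+b_j^2\ge 2\lambda^{-2}\sinh^2(\tfrac\lambda2)$ for every $j$; treating the two entries sharing a column as a pair, as in the motivating computation \eqref{eq:AoptB}, this yields a pivot of size at least $\sqrt2\,\lambda^{-1}\sinh(\tfrac\lambda2)$. Second, at the crossover the two competing pivot arguments satisfy $x_{j^\ast}+(1-x_{j^\ast+1})=1+h=\tfrac1\beta$, so their average is $\tfrac{1}{2\beta}$, and since $\sinh$ is increasing, $\max\{a_{j^\ast},b_{j^\ast+1}\}\ge\lambda^{-1}\sinh(\tfrac{\lambda}{2\beta})$. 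Retaining the weaker of these two lower bounds gives the stated $\min$.

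It then remains to evaluate $R$ and $C$, equivalently the geometric sums, in the four ranges of $\beta$. For $\beta\le\tfrac12$ the blocks are single columns, so $c_\beta=1$ and the bound reduces to \eqref{eq:AoptB}. For $\tfrac12<\beta\le\tfrac34$ and $\tfrac34<\beta\le\tfrac56$ the block length $s=\lceil\beta x_1/(1-\beta)\rceil$ is bounded (by $2$, respectively $3$), so the series are finite and sum to the stated rational functions of $\beta^{-1}$; these are direct, if tedious, computations. The genuinely delicate case is $\tfrac56<\beta<1$, where $s\sim\beta/(1-\beta)\to\infty$ as $\beta\to1$. Here $\log(b_k/a_k)$ is essentially linear in $x_k$ near the crossover and the nodes are equally spaced, so the products $\prod_k b_k/a_k$ decay like Gaussians in the distance from $j^\ast$; the row and column sums are then bounded by comparing these Gaussian sums to integrals $\int e^{-ct^2}\,dt=\sqrt{\pi/c}$ with $c\sim\lambda h$. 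Carrying out this comparison rigorously, uniformly in $\lambda$ and $\beta$ and with the exact constants $1+\sqrt{\beta/(\pi(1-\beta))}$ and $1+\sqrt{\pi\beta/(4(1-\beta))}$ rather than mere asymptotics, is the \emph{main obstacle} of the proof.
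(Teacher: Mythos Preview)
Your overall strategy (split the bidiagonal block at the crossover, factor, bound the two pieces by a Schur test, compare to a Gaussian integral for large $s$) is exactly the paper's. Two concrete gaps prevent you from reaching the stated bound, however.

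\textbf{The pivot at the crossover.} Your forward/backward recursion discards one equation and uses \emph{one} pivot per column; the smallest pivot is then $\min\{a_{j^\ast},b_{j^\ast+1}\}$, for which you only get $\ge\lambda^{-1}\sinh(\tfrac{\lambda}{2})$. Your two ``pivot'' estimates do not attach to this construction: the inequality $a_j^2+b_j^2\ge 2\lambda^{-2}\sinh^2(\tfrac{\lambda}{2})$ is irrelevant because your $\Gamma_0$ never sees the pair $(a_r,b_r)$ in a single column, and $\max\{a_{j^\ast},b_{j^\ast+1}\}\ge\lambda^{-1}\sinh(\tfrac{\lambda}{2\beta})$ bounds the wrong extremum. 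The paper does not build $\Gamma_0$ by a pure recursion; it factors $P_0=CS$ where $S\in\R^{(s+1)\times s}$ keeps \emph{both} entries $a_r,b_r$ in the crossover column and is diagonal elsewhere, and then sets $\Gamma_0=S^\dagger C^{-1}$. The pseudo-inverse $S^\dagger$ produces $(a_r^2+b_r^2)^{-1/2}\le\lambda/(\sqrt{2}\sinh(\tfrac{\lambda}{2}))$ in that column, while for $j\ne r$ the single pivots satisfy $a_j\ge B(\tfrac{1}{2}+\delta)=\lambda^{-1}\sinh(\tfrac{\lambda}{2\beta})$ (resp.\ $b_j$), since the choice of $r$ guarantees $|x_j-\tfrac12|\ge\delta$ there. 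That is precisely where the $\min\{\sqrt{2}\sinh(\tfrac{\lambda}{2}),\sinh(\tfrac{\lambda}{2\beta})\}$ comes from.

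\textbf{The Gaussian rate must be $\lambda$-free.} You write the decay rate as $c\sim\lambda h$. That is the linearization of $\log(b_k/a_k)$ for large $\lambda$, but for small $\lambda$ the slope degenerates to a constant, and $c_\beta$ in the statement is independent of $\lambda$. The paper's device is the uniform elementary inequality
\[
\frac{\sinh(\lambda(1-x))}{\sinh(\lambda x)}\le 2(1-x),\qquad x\in[\tfrac12,1],
\]
which gives $b_{r-j}/a_{r-j}\le 1-(4j-2)\delta$ with no $\lambda$ in sight. Combined with $\ln(1-t)\le -t$, the products become $\exp\bigl(-(j-k)(2r-j-k)/(2\omega-1)\bigr)$ with $2\omega-1=\beta/(1-\beta)$, and then a Riemann-sum comparison with $\int e^{-t^2}\,dt$ yields the exact constants $1+\sqrt{\pi\beta/(4(1-\beta))}$ for $\|L\|_\infty$ and $1+\sqrt{\beta/(\pi(1-\beta))}$ for $\|L\|_1$.

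A minor correction: in the ranges $\tfrac12<\beta\le\tfrac34$ and $\tfrac34<\beta\le\tfrac56$ it is not the block length $s$ that is bounded by $2$ or $3$ (it is not), but the size $r$ of the triangular factor $L=C^{-1}|_{\text{upper block}}$; the explicit rational expressions for $c_\beta$ are just $\|L\|_1\|L\|_\infty$ computed for $r=2,3$.
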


\begin{proof}
Note that $P_0$ is an ASTP matrix and has full rank.
Indeed, the $s$-by-$s$ minor of the first $s$ rows is positive
as its diagonal entries are positive, see
Theorem~\ref{det>0}. 
We  construct a decomposition of $P_0$ which is similar to the 
Neville elimination,
as explained in \cite{GasPen:1995},
with slight modifications due to the rectangular form of $P_0$. 

For this purpose, let 
$$
   \delta =\tfrac{1}{2}(\tfrac{1}{\beta}-1),\qquad \omega=\frac{1}{4\delta\beta}=\frac{1}{2-2\beta}.
$$	
Note that $0<\delta<\tfrac{1}{2}$ and $\omega>1$. We start from the observation that 
$$
   1>x_1\ge 2-\frac{1}{\beta} = 1-2\delta>\frac{1}{2}-\delta>0,\qquad 
	0<x_s\le 2\delta <\frac{1}{2}+\delta.
$$
Therefore, we can choose $1\le r\le s$ such that
\begin{align}\notag
   |x_r-\tfrac{1}{2}|\le \delta. 
\end{align}
This is always possible, as the sequence of points $x_j$ in 
(\ref{eq:tj})
 decreases with uniform stepsize $2\delta$, and we have
\begin{equation}\label{eq:tjsequence}
   1>x_1>\cdots>x_{r-1}\ge \tfrac{1}{2}+\delta\ge x_r\ge
	\tfrac{1}{2}-\delta\ge x_{r+1}>\cdots>x_s>0,
\end{equation}
with the possibility of empty subsequences in these inequalities,  
if $r=1$ or $r=s$. Due to $1>x_1\ge\tfrac{1}{2}+(2r-3)\delta$,  we 
find by simple computations that $r-1< \omega$.

Now we construct a left-inverse of $P_0$. 
In contrast to Gauss elimination, the 
Neville elimination uses neighboring rows.
In this manner, we obtain the factorization
$$
  P_0= CS
$$
where
\begin{align*}
C=&\begin{pmatrix}
   1 &  &  &    \\
         \tfrac{b_1}{a_1} & 1 &  &    \\
          & \ddots & \ddots &    \\
          &  & \tfrac{b_{r-1}}{a_{r-1}}& 1  \\
          &  & & & 1 & \tfrac{a_{r+1}}{b_{r+1}}&   \\
          &  & & &   & \ddots & \ddots &    \\
          &  & & &   &        & 1 &\tfrac{a_{s}}{b_{s}}\\
					&  & & &   &        &   &  1	\\
\end{pmatrix} ,\\
S=&\begin{pmatrix}
   a_1 &  &  &    \\
          & a_2 &  &    \\
          & & \ddots &    \\
          &  &  & a_{r}  \\
          &  &  & b_r  \\
          &  &  & & & \ddots \\
					&  &  & & & & b_{s-1}\\
          &  &  & & & & & b_s   
\end{pmatrix}  .
\end{align*}
The Moore-Penrose pseudo-inverse $S^\dagger$ of $S$ is readily computed and gives
$$
   \|S^\dagger\|_2 = \max\left\{ a_j^{-1}\textrm{ for } 1\le j\le r-1,~~b_j^{-1}\textrm{ for } r+1\le j\le s,~~
	(a_r^2+b_r^2)^{-1/2}\right\}.
$$
By (\ref{eq:aj}) and (\ref{eq:tjsequence}) we obtain 
$$
    \|S^\dagger\|_2\le \max\left\{ \frac{1}{B(\tfrac{1}{2}+\delta)}, 
		\frac{1}{\sqrt{2} B(\tfrac{1}{2})}\right\}=
		 \max\left\{ \frac{\lambda}{\sinh(\lambda(\tfrac{1}{2}+\delta))}, 
		\frac{\lambda}{\sqrt{2} \sinh(\tfrac{\lambda}{2})}\right\}.
$$
Next we find an upper bound for the norm of $F=C^{-1}$. Elementary linear algebra
shows that $F$ has the block form
$$
  F=\begin{pmatrix}
	   L &\\  & U
	\end{pmatrix}
$$
with lower triangular matrix  $L$ and upper triangular matrix $U$, and entries
\begin{align*}
   f_{j,k}&= (-1)^{j-k} \prod_{l=k}^{j-1} \frac{b_l}{a_l},&\qquad 1\le k\le j\le r,\\
   f_{s+2-j,s+2-k}&= (-1)^{j-k} \prod_{l=k}^{j-1} \frac{a_{s+1-l}}{b_{s+1-l}},&\qquad 1\le k\le j\le s-r+1.\\
\end{align*}
Clearly, $\|F\|_2=\max\{\|L\|_2,\|U\|_2\}$, by the block structure of $F$. 
We use Schur's test for finding an upper bound of $\|L\|_2$, and only mention 
here that the estimates for $\|U\|_2$ 
follow analogously.  

First, we obtain
from (\ref{eq:tj}) and $x_{r-1}\ge \tfrac{1}{2}+\delta$,  that 
$$
    1>x_{r-j}\ge  \frac{1}{2}+(2j-1)\delta, \qquad 1\le j\le r-1.
$$
Simple calculations give
$$
  \frac{\sinh\lambda x}{\sinh\lambda(1-x)}\le 2x\quad\textrm{for all}\quad x\in[0,\tfrac{1}{2}],
$$
and thus
\begin{align*}
   \frac{b_{r-j}}{a_{r-j}}&= \frac{\sinh(\lambda(1-x_{r-j}))}{\sinh (\lambda x_{r-j})}\le 
	2-2x_{r-j} 
	\le 1-(4j-2)\delta 
	=1-\frac{j-\tfrac{1}{2}}{\omega-\tfrac{1}{2}}.
\end{align*}
By inserting these upper bounds,
and by use of the basic inequality $\ln(1-x)\le -x$ for $x\in [0,1]$,
we obtain 
\begin{align*}
  |f_{j,k}|&\le \prod_{l=k}^{j-1} 
	\left(1-\frac{r-l-\tfrac{1}{2}}{\omega-\tfrac{1}{2}}\right)\\
	&\le \exp\left(-\sum_{l=k}^{j-1}\frac{r-l-\tfrac{1}{2}}{\omega-\tfrac{1}{2}}\right)\\
	&	=\exp\left(-\frac{(j-k)(2r-j-k)}{2\omega-1}\right)
\end{align*}
for all $	1\le k\le j\le r$.
Note that for $j=k$ we have the empty product $f_{j,j}=1$. 

Let us turn to the estimate of $\|L\|_\infty$. 
The sum of absolute values of all entries in row $1\le j\le r$ of $L$ is bounded by
\begin{align*}
    \rho_j = \sum_{k=1}^{j} |f_{j,k}| &\le 
		\sum_{k=1}^j \exp\left(-\frac{(j-k)(2r-j-k)}{2\omega-1}\right)\\
		&=\sum_{k=0}^{j-1} \exp\left(-\frac{k(2r-2j+k)}{2\omega-1}\right).
\end{align*}
The last sum is maximal for $j=r$,
and the interpretation as a Riemann sum gives
\begin{align*}
  \|L\|_\infty &\le \sum_{k=0}^{r-1} 
	\exp\left(-\frac{k^2}{2\omega-1}\right) \\
	&\le
	1+\int_0^\infty \exp\left(-\frac{x^2}{2\omega-1}\right)\,dx
	=1+\sqrt{\frac{(2\omega-1)\pi}{4}} .
\end{align*}
By the identity $2\omega-1=\tfrac{\beta}{1-\beta}$, we obtain the upper bound
$$
   \|L\|_\infty \le 1+\sqrt{\frac{\pi\beta}{4(1-\beta)}}.
$$

The same method provides a slightly smaller bound for
the sums of absolute values in column $k$ of $L$.
More precisely, for $1\le k\le r$, we have 
\begin{align*}
    \kappa_k = \sum_{j=k}^{r} |f_{j,k}| &\le 
		\sum_{j=k}^r 
	\exp\left(-\frac{(j-k)(2r-j-k)}{2\omega-1}\right) \\
	&=\sum_{j=0}^{r-k} 
	\exp\left(-\frac{j(2r-2k-j)}{2\omega-1}\right) \\
	&\le
	1+\int_0^{r-k} \exp\left(-\frac{x(2r-2k-x)}{2\omega-1}\right)\,dx\\
	&=1+\sqrt{2\omega-1}\int_0^c e^{-x(2c-x)}\,dx ,
\end{align*}
where we let $c:= (r-k)(2\omega-1)^{-1/2}$ in the last step.
The function $h(c)=\int_0^c e^{-x(2c-x)}\,dx$ achieves its maximum over all positive values of $c$
for $c_0\approx 0.9241$, and $h(c_0)=1/(2c_0)\approx 0.5410< \pi^{-1/2}$. Therefore, we obtain
$$
   \|L\|_1 \le 1+\sqrt{\frac{\beta}{\pi(1-\beta)}}.
$$
Schur's test gives 
$$
   \|L\|_2^2 \le \|L\|_1\|L\|_\infty \le \left(1+\sqrt{\frac{\beta}{\pi(1-\beta)}}\right)
	\left(1+\sqrt{\frac{\pi\beta}{4(1-\beta)}}\right).
$$
More accurate bounds of $\|L\|_1$ and $\|L\|_\infty$ can be given for small values of 
$r$, e.g. for $r=2$ (this is the case $\tfrac{1}{2}<\beta\le \tfrac{3}{4}$), we have  
\begin{align*}
   &L=\begin{pmatrix} 1&0\\ f_{2,1}&1\end{pmatrix},
	\qquad |f_{2,1}|\le 1-\frac{\tfrac{1}{2}}{\omega-\tfrac{1}{2}}
	=2-\frac{1}{\beta},\\
	&\|L\|_1=\|L\|_\infty \le 3-\frac{1}{\beta}.
\end{align*}
Likewise, for $r=3$ (this is the case $\tfrac{3}{4}<\beta\le \tfrac{5}{6}$), we have
\begin{align*}
   &L=\begin{pmatrix} 1&0&0\\ f_{2,1}&1&0\\ f_{3,1}& f_{3,2}&1\end{pmatrix},\\
	&|f_{2,1}|\le 1-\frac{\tfrac{3}{2}}{\omega-\tfrac{1}{2}}=4-\frac{3}{\beta},
	\quad |f_{3,2}|\le 1-\frac{\tfrac{1}{2}}{\omega-\tfrac{1}{2}}
	=2-\frac{1}{\beta},\quad
	|f_{3,1}|=|f_{2,1} f_{3,2}|,\\
	&\|L\|_1\le 3-\frac{1}{\beta},\qquad 
	\|L\|_\infty \le 11-\frac{11}{\beta}+\frac{3}{\beta^2}.
\end{align*}
These bounds lead to the particular upper bounds in Lemma~\ref{PBneville-lemma}.
Therefore, the proof of the lemma is complete.
\end{proof}

%

We can now add a similar result for Gabor frames $\gabor$, where $g$ is 
the two-sided exponential 
$$
    g(x)=\tfrac{\lambda}{2}e^{-\lambda |x|}
$$ 
with $\lambda>0$.
The relation in Theorem \ref{b-splinezak} gives 
$$
   \zak_\alpha g(x,\omega) =
	\frac{1}{\alpha} C(\omega) \zak_1 B(\tfrac{x}{\alpha},\alpha\omega),
$$
where $B$ is the symmetric EB-spline of order $2$ 
in \eqref{eq:Blambda} with parameter $\alpha\lambda$
instead of $\lambda$, and
$$
   C(\omega)=\frac{\alpha\lambda}{1-e^{-(\alpha\lambda+2\pi i\omega)}}~
	\frac{-\alpha\lambda}{1-e^{-(-\alpha\lambda+2\pi i\omega)}}.
$$
The minimum of $|C(\omega)|$ occurs at $\omega=\tfrac{1}{2}$,
$$
   |C(\omega)|\ge |C(\tfrac{1}{2})|=
	\frac{\alpha^2\lambda^2}{4\cosh^2(\tfrac{\alpha\lambda}{2})}.
$$
With Proposition \ref{trick} and Theorem \ref{bound}, 
where we replace $\lambda$ by $\alpha\lambda$ and $\beta$ by $\alpha\beta$ 
in \eqref{eq:boundex}, we obtain the following lower frame-bound.

\begin{theorem}\label{TPBound}
Let $g$ be the two-sided exponential with $g(x)=\tfrac{\lambda}{2}e^{-\lambda|x|}$, $\lambda>0$, and $\alpha,\beta>0$ some lattice parameters with $\alpha\beta<1$.
Then the  Gabor frame $\gabor$ has the  lower frame-bound
\begin{align*}
A=  
\frac{ \lambda^2\ \min\left\{2\sinh^2(\tfrac{\alpha\lambda}{2}),
~\sinh^2(\lambda/(2\beta))\right\}}
{16\beta c_{\alpha\beta}\ \cosh^4(\tfrac{\alpha\lambda}{2})} ,
\end{align*}
where $c_{\alpha\beta}$ is defined as in Theorem~\ref{bound}.
\end{theorem}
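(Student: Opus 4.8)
The plan is to transport the explicit lower bound of Theorem~\ref{bound} to the two-sided exponential by means of the Zak-transform factorization of Theorem~\ref{b-splinezak}, together with the transference principle of Proposition~\ref{trick}. First I would record that
\begin{equation*}
  \hat g(\omega)=\frac{\lambda^2}{\lambda^2+4\pi^2\omega^2}=(1+2\pi i\omega\lambda^{-1})^{-1}(1-2\pi i\omega\lambda^{-1})^{-1},
\end{equation*}
so that $g$ is a TP function of finite type $2$ with $a_1=\lambda$, $a_2=-\lambda$. Applying Theorem~\ref{b-splinezak} with $\lambda_\nu=-\alpha a_\nu$ identifies the associated EB-spline $B_\Lambda$, $\Lambda=(-\alpha\lambda,\alpha\lambda)$, as the symmetric EB-spline of order $2$ in \eqref{eq:Blambda} with parameter $\alpha\lambda$, and produces exactly the relation displayed before the theorem, namely $\zak_\alpha g(x,\omega)=\alpha^{-1}C(\omega)\,\zak_1 B(\tfrac x\alpha,\alpha\omega)$, where $C$ is the product of the two factors $\pm\alpha\lambda/(1-e^{-(\pm\alpha\lambda+2\pi i\omega)})$.

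Next I would feed this into Proposition~\ref{trick} in the reverse direction. The \emph{known} frame is taken to be $\mathcal G(B,1,\alpha\beta)$, which is a frame since $\alpha\beta<1$ and whose lower bound $A_B$ is supplied by Theorem~\ref{bound}. Choosing $\kappa=\alpha$ and performing the substitution $x\mapsto\alpha x$, $\omega\mapsto\alpha^{-1}\omega$ brings the factorization into the precise form required by the proposition, namely $\zak_\alpha g(\alpha x,\alpha^{-1}\omega)=D(\omega)\,\zak_1 B(x,\omega)$ with multiplier $D(\omega)=\alpha^{-1}C(\omega)$, which is one-periodic in $\omega$. The proposition then certifies that $\gabor$ is a frame whose lower bound equals $\alpha\,A_B\,\essinf_\omega|D(\omega)|^2=\alpha^{-1}A_B\,\essinf_\omega|C(\omega)|^2$, the infimum being over one full period $\omega\in[0,1)$.

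Two explicit evaluations then finish the proof. On the one hand, reading Theorem~\ref{bound} with $\lambda$ replaced by $\alpha\lambda$ and $\beta$ by $\alpha\beta$, and using $\tfrac{\alpha\lambda}{2\alpha\beta}=\tfrac{\lambda}{2\beta}$, gives
\begin{equation*}
  A_B=\frac{\min\bigl\{2\sinh^2(\tfrac{\alpha\lambda}2),\,\sinh^2(\tfrac{\lambda}{2\beta})\bigr\}}{\alpha^3\lambda^2\beta\,c_{\alpha\beta}}.
\end{equation*}
On the other hand, since $|1-e^{-(\pm\alpha\lambda+2\pi i\omega)}|^2=1-2e^{\mp\alpha\lambda}\cos 2\pi\omega+e^{\mp2\alpha\lambda}$ is maximal precisely at $\omega=\tfrac12$, both denominator factors of $C$ are largest there, so $\essinf_\omega|C(\omega)|^2=|C(\tfrac12)|^2$; evaluating with $(1+e^{-\alpha\lambda})(1+e^{\alpha\lambda})=4\cosh^2(\tfrac{\alpha\lambda}2)$ yields $|C(\tfrac12)|^2=\alpha^4\lambda^4/(16\cosh^4(\tfrac{\alpha\lambda}2))$. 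Substituting both quantities into $\alpha^{-1}A_B\,|C(\tfrac12)|^2$ collapses the powers of $\alpha$ and $\lambda$ and delivers exactly the claimed value of $A$.

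The routine parts are the Fourier-transform identification and the arithmetic of assembling the final fraction. The step that genuinely needs care is the bookkeeping inside Proposition~\ref{trick}: one must track how the substitution $\omega\mapsto\alpha^{-1}\omega$ converts the $1/\alpha$-periodic multiplier of Theorem~\ref{b-splinezak} into the one-periodic $C$, so that the infimum of $|C|$ is genuinely taken over a full period and is therefore attained at $\omega=\tfrac12$. Getting this localization correct, and applying the transference in the single correct direction, is the only real obstacle; everything downstream is a direct combination of Theorems~\ref{b-splinezak} and~\ref{bound}.
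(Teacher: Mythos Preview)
Your proposal is correct and follows essentially the same approach as the paper: the paper derives the result in the paragraphs immediately preceding Theorem~\ref{TPBound} by combining the Zak-transform factorization of Theorem~\ref{b-splinezak} with Proposition~\ref{trick} and the bound from Theorem~\ref{bound}, computing $\min_\omega|C(\omega)|$ at $\omega=\tfrac12$ just as you do. Your write-up is in fact slightly more careful than the paper's about the periodicity bookkeeping in the substitution $\omega\mapsto\alpha^{-1}\omega$.
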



\section*{Acknowledgement}

\thispagestyle{empty}
We are grateful to J. M. Pe\~na and T. Springer for  helpful discussions concerning  the 
estimates of the lower frame-bounds in the last section.
\clearpage




\end{document}